\newcommand{\DeclareIOrWe}[1]{%
  \def\Iwe/{%
    \ifthenelse{\equal{#1}{I}}{
      I}{
      we}}
  \def\IWe/{%
    \ifthenelse{\equal{#1}{I}}{
      I}{
      We}}
    \def\myour/{%
    \ifthenelse{\equal{#1}{I}}{
      my}{
      our}}
  \def\MyOur/{%
    \ifthenelse{\equal{#1}{I}}{
      My}{
      Our}}
  }
\newcommand{\stderr}{\operatorname{s.e.}}
\renewcommand{\sqrt}[1]{\surd #1}
\newcommand{\EE}{\operatorname{E}}
\newcommand{\PP}{\operatorname{P}}
\newcommand{\E}[1]{\ifthenelse{\isin{(}{#1}}%
{
\ifthenelse{\isin{[}{#1}}
{\EE \left\{#1\right\}}
{\EE \left[#1\right]}
}
{\EE \left(#1\right)}}
\renewcommand{\Pr}[1]{\ifthenelse{\isin{(}{#1}}%
{\PP \left[#1\right]}%
{\PP \left(#1\right)}}
\newcommand{\var}{\operatorname{Var}}
\newcommand{\cov}{\operatorname{Cov}}
\newcommand{\Var}[1]{\ifthenelse{\isin{(}{#1}}%
{\var \left[#1\right]}%
{\var (#1)}}
\newcommand{\Cov}[2][]%
{\ifthenelse{\isempty{#1}}%
{\ifthenelse{\isin{(}{#2}}%
{\cov \left[#2\right]}%
{\cov (#2) }}
{\ifthenelse{\isin{(}{#1} \OR \isin{(}{#2}}%
{\cov \left[#1, #2\right]}%
{\cov ( #1, #2) }}
}
\newcommand{\bkt}[2][1]{%
\ifthenelse{#1=1}{\brk{#2}}{%
\ifthenelse{#1=2}{\brk[s]{#2}}{%
\ifthenelse{#1=3}{\brk[c]{#2}}{\brk!{#2}%
} } }
}
\newcommand{\bbkt}[2][1]{%
\ifthenelse{#1=1}{\brk*{#2}}{%
\ifthenelse{#1=2}{\brk[s]*{#2}}{%
\ifthenelse{#1=3}{\brk[c]*{#2}}{\brk*{#2}%
} } }
}
\newcommand{\piP}[1]{\ensuremath{\pi_{\mathcal{P}#1}}}
\newcommand{\EEp}{\ensuremath{\EE_{\mathcal{P}}}}
\newcommand{\EEq}{\ensuremath{\EE_{\mathcal{Q}}}}
\newcommand{\PPp}{\ensuremath{\PP_{\mathcal{P}}}}
\newcommand{\PPq}{\ensuremath{\PP_{\mathcal{Q}}}}
\newcommand{\atob}[2]{\ensuremath{#1\mathbin{:}#2}}
\newcommand{\piQ}[1]{\ensuremath{\pi_{\mathcal{Q}#1}}}
\newcommand{\pibs}{\ensuremath{\pi_{\mathbf{s}}}}
\newcommand{\picse}{\operatorname{pic\_se}}
\newcommand{\card}[2][1]{\ensuremath{%
    \ifthenelse{#1=0}{%
      n_{#2}%
    }{%
      \mathrm{card}\bkt[#1]{#2}%
    }%
  }%
}
\newcommand{\pen}[1]{\ensuremath{\alpha(#1)}}
\newcommand{\fip}[2]
{\langle #1, #2\rangle_{F}}
\newcommand{\betatrue}[1][]{\ensuremath{\beta_{n #1}}}
\newcommand{\psiC}[1][]{\ensuremath{c_{0#1}}}
\newcommand{\psiCi}[1][]{\ensuremath{c_{1#1}}}
\newcommand{\Vns}{\ensuremath{V^{(n, \mathbf{s})}}}
\newcommand{\Vnstrunc}[1]{\ensuremath{\bar{V}^{(\mathbf{s})}_{n#1}}}
\DeclarePairedDelimiter{\indicator}{\llbracket}{\rrbracket}
\newcommand{\defeq}{\ensuremath{\colonequals}}
\newcommand{\eqdef}{\ensuremath{\equalscolon}}
\newcounter{A-enumi}
\newcounter{save-enumi} 
\newtheorem{prop}{Proposition}
\newtheorem*{corollary}{Corollary}
\newtheorem{lemma}{Lemma}
\theoremstyle{remark}
\providecommand{\keywords}[1]
{
  \small	
  \textbf{\textit{Key words and phrases:}} #1
}
\renewcommand{\marginpar}[1]{}
\author{Ben B. Hansen\thanks{This work has
    benefitted from comments of Jake Bowers, Joshua
    Errickson, Mark Fredrickson, Xuming He, Peter Schochet, Stilian Stoev and Lan
    Wang. Responsibility rests with the author for any shortcomings that remain.}}
\title{Matching calipers and the precision of index estimation}
\begin{document}
\maketitle
\begin{abstract}
  This paper characterizes the precision of index estimation as it carries over into precision of matching. In a model assuming Gaussian covariates and making best-case assumptions about matching quality, it sharply characterizes average and worst-case discrepancies between paired differences of true versus estimated index values. 
In this optimistic setting, worst-case true and estimated index differences decline to zero if $p=o[n/(\log n)]$, the same restriction on model size that is needed for consistency of common index models. This remains so as the Gaussian assumption is  relaxed to sub-gaussian, if in that case the characterization of paired index errors is less sharp.  The formula derived under Gaussian assumptions is used as the basis for a matching caliper.  Matching such that paired differences on the estimated index fall below this caliper brings the benefit that after matching, worst-case differences onan underlying index tend to 0 if $p = o\{[n/(\log n)]^{2/3}\}$. (With a linear index model, $p=o[n/(\log n)]$ suffices.) A proposed refinement of the caliper condition brings the same benefits without the sub-gaussian condition on covariates.  When strong ignorability holds and the index is a well-specified propensity or prognostic score, ensuring in this way that worst-case matched discrepancies on it tend to 0 with increasing $n$ also ensures the consistency of matched estimators of the treatment effect.
\end{abstract}
\keywords{Matching, caliper, overlap, positivity, propensity score, prognostic score}

\section{Introduction}
In preparing a matched observational study, estimation of a treatment
propensity briefly takes center stage, as covariates are chosen
and a model specification is selected.  These models quickly recede
from view
once propensity score estimates have been extracted from them,
despite their carrying essential information about those estimates'
likely precision.  The situation is little different in matching on
prognostic or principal stratum scores: sampling variability of the
model standing behind a matching index is rarely
so much as even appraised.
We seem to take it for granted that errors of
estimation of a matching index can't possibly be so large as to
threaten the integrity of matching.

Propensity matching is understood to be a
large sample technique, as are logistic and other regression methods
typically used for index model estimation, and classical
asymptotics may seem to encourage
inattention to index estimation error. As treatment/covariate samples
$(z_i, \vec{x}_i)$ accumulate from any reasonable distribution of
fixed dimension $p+1$, one expects errors of estimation of the index,
$\{|\vec{x}_i\hat\beta - \vec{x}_i\betatrue|: i\}$, to be increasingly
negligible, decreasing with or near $n^{-1/2}$, just as
$|\hat\beta -\betatrue|_2 = O(n^{-1/2})$.
The problem is that the matching canon discourages
parsimony in propensity modeling  \citep{rubin:thom:1996}, and 
fixed-$p$ large sample theory may describe non-parsimonious models poorly.

Increasing-$p$ asymptotics for logistic regression and similar
techniques are available
\citep{portnoy1988exponential,he2000parameters}, if less widely known.
Given that $p = o\bkt[2]{(\log n)/n}$, they deliver
$|\hat\beta -\betatrue|_2 = O_P\bkt[2]{(p/n)^{1/2}}$, not
$O_P(n^{-1/2})$, as reviewed in \S~\ref{sec:context} below.
This suggests a still larger order, $p/n^{1/2}$, for
errors of form $\vec{X}(\hat\beta -\betatrue)$,
$\sum_{j=1}^p X_j^2=O(p)$ corresponding to
$|\vec{X}|_2 = \bbkt[1]{\sum_{j=1}^p X_j^2}^{1/2}= O(p^{1/2})$.  While somewhat of a simplification, the
suggestion is correct in its implication that if $p$ increases in
proportion with $n^{1/2}$, for example, then index estimation errors
need not diminish even as coefficient estimation errors do.  Outside
of fixed-$p$ asymptotics, consistency of the index model does not in
itself make index errors asymptotically negligible: that calls for
stronger assumptions, specialized matching techniques or a combination
of the two.

For control of index estimation error by way of stronger assumptions,
note that if $p$ is assumed to increase slowly enough, increasing
dimension regression asymptotics resemble those with fixed $p$.
It happens that $p \propto n^{1/2}$ is slightly too large for such
correspondence to obtain, so it is unsurprising that fixed-$p$
intuitions should fail in that regime.  (Asymptotic
normality of $\hat\beta$, for example, calls for
$p^2\log(p) = o(n)$, not $p = o(n^{1/2})$ [\citealp{he2000parameters}].)
But say the index model has \textit{sub}-$\sqrt{n}$ dimension, in the
specific sense that $p = o\bkt[2]{\bkt[1]{n/\log
    n}^{1/2}}$. 
Section~\ref{sec:dimin-index-errors} of this paper shows that for
consistently estimated index models with sub-gaussian covariates and
sub-$\sqrt{n}$ dimension, estimation errors of realized values of the
index tend to zero.  This convergence is in the strong, $l_\infty$,
sense of $\max_{i\leq n} |\vec{X}_i(\hat\beta - \betatrue)| = o_P(1)$,
so it justifies the common practices of matching, subclassifying or
simply trimming on the estimated propensity score, as analytic
interventions to secure overlap assumptions of the stronger type,
$c_l \leq \PP(Z=1|\vec{X}) \leq c_u$ with $[c_l, c_u] \subset (0,1)$,
as applied to the subset of available observations that remain after
pruning.

Section~\ref{sec:covh-moder-high} goes on to study ordinary
$\Cov{\hat\beta}$ estimates' adaptability to characterizing likely
sizes of index estimation errors, $\vec{x}_i(\hat\beta - \betatrue)$.
A fitted index model's Fisher information gives an estimate $\hat{C}$
of ${C}_{\hat\beta}=\Cov{\hat\beta}$, either directly or as part of an
Eicker-Huber-White sandwich. In sub-${n}^{1/2}$ dimensional regimes,
\marginpar{Is centering nec. here? To be
  revisited\ldots}
corresponding standard errors $\stderr\bkt[1]{\vec{x}\hat\beta}=\bkt[2]{\vec{x}\hat{C}\vec{x}'}^{1/2}$
will be seen to estimate closely\footnote{%
For $|\vec{v}|_2 \neq 0$, $|\vec{v}|_2^{-1}\bkt[3]{\stderr\bkt[2]{\vec{v}(\hat\beta
  - \betatrue)} - \var^{1/2}\bkt[2]{\vec{v}(\hat\beta
  - \betatrue)}}$ is
$o_P\bkt[2]{(p/n)^{1/2}}$, whereas
$|\vec{v}|_2^{-1}\stderr\bkt[2]{\vec{v}(\hat\beta  -
  \betatrue)}$ and $|\vec{v}|_2^{-1}\var^{1/2}\bkt[2]{\vec{v}(\hat\beta
  - \betatrue)}$ are both $O_P\bkt[2]{(p/n)^{1/2}}$.
} %
the sampling variabilities
$\var^{1/2}\bkt[2]{(\vec{x} -\bar{\vec{x}})(\hat\beta - \betatrue)}$.
As the sub-$\sqrt{n}$ condition is relaxed to sub-$n$
($p=o\bkt[2]{n/\log(n)}$), information- and sandwich-based estimators
underestimate $\Cov{\hat\beta}$. This limits their utility for
inference about $\betatrue$, and according their behavior outside of
sub-$\sqrt{n}$ regimes has received less study.  It does not follow,
however, that they are are ill-suited to inform the selection of
index-based matches.  We explore conditions under which analytic
$\Cov{\hat\beta}$-estimators continue to characterize sampling
variabilities of a linearization of $\hat\beta$, offering a basis for
estimators capturing the
better part of $\var\bkt[2]{\vec{x}(\hat\beta - \betatrue)}$.  In both
regimes, the largest values of
$\{\var^{1/2}\bkt[2]{\vec{x}_{i}(\hat\beta - \betatrue)}:i\}$ may be
well separated from the rest, in themselves appreciably increasing $\EE \bkt[2]{\max_{i\leq n}
  |\vec{X}_i(\hat\beta - \betatrue)|}$.
\marginpar{Revise/update me\ldots}
Results presented in
Section~\ref{sec:errors-paired-index} are helpful for identifying the worst offenders, subjects $i$ with large
$\stderr\bkt[2]{\vec{x}_{i}\hat\beta}$.  Either or both of the sub-gaussian and
sub-$\sqrt{n}$ conditions can be relaxed, but then control of index
estimation errors necessitates that such subjects be pruned from the sample. 

In such circumstances, matching offers alternate practical remedies
that can retain more of the sample. 
It helps first by shifting attention from particular
$\{\vec{x}_i\hat\beta: i\}$ or $\{\vec{x}_i\betatrue: i\}$ to paired
differences of indices,  $(\vec{x}_i - \vec{x}_j)\hat\beta$ or
$(\vec{x}_i - \vec{x}_j)\betatrue$, as does
\S~\ref{sec:errors-paired-index} below.  While underestimates
of $\var^{1/2}\bkt[2]{(\vec{x}_i - \vec{x}_j)\hat\beta}$ may be unhelpful for inference about
$(\vec{x}_i - \vec{x}_j)\betatrue$, but they can certainly inform
matching procedures.
Section~\ref{sec:mean-worst-case-Gaussian} uses $\hat{C}_{\hat{\beta}}$ to associate
deliberately reduced standard errors with 
{p}aired {c}ontrasts such as $(\vec{x}_i - \vec{x}_j)\hat\beta$, so
constructed that their average, the
\underline{p}aired \underline{i}ndex \underline{c}ontrast
{s}ummary \underline{s}tandard \underline{e}rror (PIC SE),
inexpensively approximates the root mean square of matched
discrepancies on estimated scores, $(\vec{x}_i-\vec{x}_j)\hat\beta$,
across pairs $\{i, j\}$ with little or no discrepancy on the true
score, $\vec{x}_i\beta \approx \vec{x}_j\beta$.


The PIC SE tends to zero at the same
rate as $|\hat\beta -\betatrue|_2$, making it useful
as a yardstick for matching.
With sub-gaussian data and
sub-$\sqrt{n}$ model dimension, using it to set the width
of a caliper on the index  ---
permitting $i$'s pairing with $j$ only if $|(\vec{x}_i - \vec{x}_j)\hat\beta| \leq c_{n}
\picse(\hat\beta)$, with $c_{n} = 2$, say  ---
forces matched differences on the true index to vanish in the
asymptotic limit:
\begin{equation}
  \label{eq:2}
\max_{1\leq i \sim j\leq n}|(\vec{x}_i -
\vec{x}_j)\betatrue| \stackrel{P}{\rightarrow} 0 \quad (\text{where
}``i \sim j" \text{ means } ``i \text{ is matched to } j"),  
\end{equation}
in virtue of  $\picse(\hat\beta) = O_P\bkt[2]{(p/n)^{1/2}}$. Indeed,
under the $p = o(n^{-1}\log n)$ growth condition needed for
consistency of common index models, \eqref{eq:2} holds with
non-constant $c_{n}$, provided that $c_{n} = O_{P}\bkt[2]{(\log n)^{1/2}}$.
Such matching requirements are generally less likely to exclude
subjects then comparable trimming rules, as they permit inclusion even
of subjects far from the center of the distribution whenever
the contrasting study arm has similarly situated subjects.
\marginpar{Cut next sentences? Revisit after drafting data
  sections\ldots}
If $i$ is excluded from the matched sample for lack of counterparts
$j$ within caliper distance, it must have been separated from its
comparison group by a distance exceeding the resolution of the
estimate of the index.  As will be seen in the data example, this
resolution can be strikingly large, much larger than extant caliper
width recommendations \marginpar{+ \citet{lunt2013PScalipers},
  \ldots?}
\citep{rosenbaum:rubi:1985a,rubin:thom:2000,austin2011optimal,wang2013PScalipers3groups};
in many cases it will be much more sparing in its exclusions from the
matched sample.

This is fortunate, because the paper will recommend a nonstandard
strengthening of the requirement that
$|(\vec{x}_i - \vec{x}_j)\hat\beta|$ be less than the designated
caliper width, excluding potential pairs $\{i,j\}$ either because
$|(\vec{x}_i - \vec{x}_j)\hat\beta|$ is too large or because $\stderr \bkt[2]{(\vec{x}_i - \vec{x}_j)\hat\beta}$ is. (Alternately put,
because $\vec{x}_{i}$ and $\vec{x}_{j}$ are too separated on either
the index itself or on a certain
index estimator-dependent Mahalanobis distance.)
The pairs $\mathcal{E} \subseteq \{\{i, j\}: 1\leq i \neq j \leq n\}$ that
remain eligible by this criterion satisfy
$\max_{\{i, j\} \in \mathcal{E}}|(\vec{x}_i - \vec{x}_j)(\hat\beta
-\betatrue)| = o_P(1)$, even as
$\max_{i\leq n} |\vec{X}_i(\hat\beta - \betatrue)|$ may no longer tend
to 0.  Thus~\eqref{eq:2} is maintained, even as the sub-gaussian
assumption on $\vec{X}$ is relaxed to a fourth moment condition, and
$p = o\bkt[3]{\bkt[2]{n/\log(n)}^{1/2}}$ is relaxed to
$p=o\bkt[3]{\bkt[2]{n/\log(n)}^{2/3}}$. In matching on propensity scores,
selecting pairs from within such an $\mathcal{E}$ ensures that the
overlap criterion can be assessed in terms of estimated scores,
because
$\max_{i \sim j}|(\vec{x}_i - \vec{x}_j)\hat\beta -(\vec{x}_i -
\vec{x}_j)\betatrue| = o_P(1)$.
As the recommended requirement can be
viewed as a varying (by value of $\stderr
\bkt[2]{(\vec{x}_i - \vec{x}_j)\hat\beta}$) limit on $|(\vec{x}_i -
\vec{x}_j)\hat\beta|$, \Iwe/ continue to call it a caliper.   In many cases, these
PICSE calipers
continue to be more inclusive than would Rosenbaum and Rubin's
\citeyearpar{rosenbaum:rubi:1985a} canonical
$|(\vec{x}_i - \vec{x}_j)\hat\beta| \leq
0.2\operatorname{s.d.}(\mathbf{x}\hat\beta)$ requirement, and in all
cases PICSE calipers cause exclusion of a unit $i$ only when our
best estimate of the index distances from it to each potential
counterpart $j$ exceeds the resolution of the index's estimation.  


\section{Context}\label{sec:context}

\subsection{Estimable index scores} \label{sec:estim-index-scor}

In an observational study with a treatment and a control condition, the
propensity score is a function
$\vec{x} \mapsto g^{-1}\bkt[2]{\PP(Z=1\mid \vec{X}=\vec{x})}$, where
$g: \Re \rightarrow [0,1]$ is continuous and increasing
\citep{rosenbaum:rubi:1983}. The $z$-on-$\vec{x}$ regression is often
assumed to follow a generalized linear model such as the logistic,
$\PP(Z=1\mid \vec{X}=\vec{x}) = \bkt[2]{1 + e^{-\vec{x}\beta}}^{-1}$.
Taking $g$ as that model's link function identifies the propensity
score with the index $\vec{x}\beta$.  \citet{rosenbaum:rubi:1985a}
recommended matching on $\vec{x}\hat\beta$, not
$g^{-1}(\vec{x}\hat\beta)$. Similarly \marginpar{Cites}prognostic scoring, confounder and risk
scoring, and principal stratum scoring fit (parametric) regression models in order to extract indices
$\vec{x}\hat\beta$ for use in matching, subclassification
or sample trimming, if not also in weighting- or covariance
adjustments to be applied once an analysis sample has been selected.

Let $R$ denote the dependent variable of the index model, e.g. $Z$ for
a propensity score or a response $Y$ for a prognostic
score.  Let $\hat\beta$ be the solution of
\begin{equation}\label{eq:22}
  \bbkt[2]{\sum_{i=1}^{n}\psi(r_{i}, \vec{x}_{i};
  {\beta})} + \alpha(\beta) = 0
\end{equation}
 in $\beta$, where $\psi (r, \vec{x}; \beta)$ is the $\Re^{p}$-valued gradient of a scalar-valued
function $\rho(r, \vec{x}; \beta)$ that is convex in $\beta$.  (In maximum likelihood estimation $\alpha(\cdot)=0$,
but
Bayesian estimation [\citealp[e.g.,][]{gelman2008weakly}] and
certain frequentist bias-reduction schemes
[\citealp{firth:1993,kosmidisFirth2009bias}] minimize a penalized
objective, in which cases $\alpha(\cdot)$ is the penalty term's gradient.) 
The index score (IS) is \textit{estimable} if 
\marginpar{Update to allow for $y$?}
\begin{equation}\label{eq:23}
  \E{ \bkt[2]{\sum_{i=1}^{n}\psi(R_{i}, \vec{x}_{i}; \beta)} +
  \alpha(\beta) } =0
\end{equation}
has a unique root $\betatrue$, with
$\sup_{\gamma: |\gamma - \betatrue|_{2}\leq 1}|\alpha(\gamma)|_{2} =
o_{P}(n^{1/2})$.  Following \citet{he2000parameters},
near-roots of \eqref{eq:22} and/or
\eqref{eq:23} are acceptable when the equations do not have exact solutions, provided
that there is a unique nearest root, but these exact or nearest
roots are assumed to satisfy 
$\sum_{j=1}^{p}\bbkt[2]{\sum_{i=1}^{n}\psi_{j}(r_{i}, \vec{x}_{i};
  \hat{\beta})}^{2} = o_{P}(n^{1/2})$  and $\sum_{j=1}^{p}\bbkt[2]{\E{\sum_{i=1}^{n}\psi_{j}(R_{i}, \vec{x}_{i};
  \betatrue)}}^{2} = o(n^{1/2})$.  Assume there are functions
$\psiC(\cdot, \cdot, \cdot)$ and $w(\cdot)$ such that
\marginpar{Why not just $\psiC(r,
  \beta_{0}+\vec{x}\beta)$, w/o $\vec{x}$ arg?\ldots}%
\begin{equation}
\psi(r, \vec{x},
\beta) = \psiC(r, \vec{x}, \beta_{0}+\vec{x}\beta) w(\vec{x})
          (1,\vec{x})'.
  \label{eq:20}
\end{equation}
This structure accommodates robust
\citep{cantoniRonchetti2001robustglms} and generalized
\citep{liangZeger86} estimating equations as well as score
functions such as 
logistic regression's,
$\psi(r, \vec{x}, \beta) = \bkt[3]{r -
  \bkt[2]{1+\exp(-\beta_{0}-\vec{x}\beta)}^{-1}}(1,\vec{x})'$. If
there are pre-existing strata 1, \ldots, $L$, with matches to be made
within strata and any subclasses to further divide them, then
\eqref{eq:20} may be modified by replacement of $\beta_{0}$ with
stratum-specific intercepts $\beta_{1}, \ldots, \beta_{L}$.
Sufficient conditions for consistency of an IS will be reviewed in
\S~\ref{sec:cons-estim-iss}.

\subsection{Sub-gaussian random variables}
\label{sec:sub-gaussian-random}

A real-valued random variable $V$ is sub-gaussian if its tails are no
heavier than that of a centered Normal distribution: 
there is a finite constant $s_{V}$ such that for $t>0$,
\begin{equation*}
    \PP(V < t) \leq \exp\bkt[2]{-t^2/(2s_{V}^2)}, \quad  \PP(V > t) \leq \exp\bkt[2]{-t^2/(2s_{V}^2)}.
\end{equation*}
When this holds, $s_{V}$ can be taken to be a constant multiple of
$\|V\|_{\psi_{2}}$, the sub-gaussian norm of $V$.  This norm is
defined as the infimum of
$\{t>0: \EE\bkt[2]{\exp\bkt[1]{V^{2}/t^{2}}} \leq 2\}$, a nonempty set for
sub-gaussian $V$, or as $\infty$ if $V$ is not sub-gaussian.  For
vector $\vec{V}$,
$\|\vec{V}\|_{\psi_{2}} = \sup\{\|\vec{V}\gamma\|_{\psi_{2}} :
|\gamma|_{2}=1\}$, and $\vec{V}$ is said to be sub-gaussian if
$\|\vec{V}\|_{\psi_{2}} < \infty$
\citep[\S~2.5, 3.4]{vershynin18HDPbook}.  It follows directly from these definitions that for fixed
vectors $\gamma$, $\|V\gamma\|_{\psi_{2}}=|\gamma|_{2}\|V\|_{\psi_{2}}$,
$\|\vec{V}\gamma\|_{\psi_{2}} \leq
|\gamma|_{2}\|\vec{V}\|_{\psi_{2}}$ and $\|\gamma\|_{\psi_{2}}=(\log 2)^{-1/2}|\gamma|_{2}$; and that
for fixed matrices $M$,
$\|\vec{V}M\|_{\psi_{2}} \leq
|M|_{2}\|\vec{V}\|_{\psi_{2}}$. 

Sums of sub-gaussian variables are sub-gaussian.  Hoeffding's
inequality bounds tails of sums of independent sub-gaussians in terms
of the sum of the summands' squared sub-gaussian norms.  Maxima of
sequences of sub-gaussian random variables grow slowly: for an
absolute constant $k_{0}$,
\begin{equation}\label{eq:3}
  \EE \max_{1\leq i \leq n}V_{i} \leq
  k_{0}\bbkt[1]{\max_{1 \leq i \leq n}\|V_{i}\|_{\psi_{2}}}\bkt[1]{\log n}^{1/2},
\end{equation}
with $\{V_{i}: i\}$ independent or dependent. For
$\{G_{i}: i\}$ Normal with variance 1 or less, \eqref{eq:3} holds with
$\sqrt{2}$ in place of $k_{0}\max_{i\leq n}\|V_{i}\|_{\psi_{2}}$; if $\{G_{i}:i\}$ are
independent $\mathrm{N}(0,1)$ then this bound is sharp, in the sense of
$\lim_{n\uparrow\infty}(\log n)^{-1/2}\EE \max_{1\leq i \leq n} G_{i}
= \sqrt{2}$ 
\citep[\S~2.5]{boucheronLugosiMassart13}. Maxima of
sub-gaussian vector sequences grow slowly as well: there are absolute
constants $k_{1}$ and $k_{2}$ such that for any sub-gaussian $\{\vec{V}_{i}\}$
\marginpar{Abbreviate for version to submit \ldots}
%
with mean 0 and covariance $I$, and any deterministic matrices
$\{M_{i}\}$ with column dimension matching the extent of 
$\vec{V}$, 
\begin{equation}
  \EE \bbkt[1]{\max_{1\leq i \leq n} |M_{i}\vec{V}_{i}'|_{2}} 
  \leq \max_{ i \leq n}\|\vec{V}_{i}\|_{\psi_{2}}
    \bbkt[3]{
    k_{1}\max_{ i \leq n} |M_{i}|_{F}
  + k_{2}\bbkt[1]{\max_{ i \leq n}
    |M_{i}|_{2}}\bkt[1]{\log n}^{1/2}}\label{eq:30}
\end{equation}
where $|M|_{F}=\operatorname{tr}(M'M)^{1/2}$
\citep[Ex. 6.3.5]{vershynin18HDPbook}. 
A consequence is that
if for each $n$ $\{\vec{X}_{ni}:1\leq i \leq n\}$ are
independent random vectors of length $p_{n}$ such that
$|\Cov{\vec{X}_{ni}}|_{2}$ and $\|\vec{X}_{ni}\|_{\psi_{2}}$ are
uniformly bounded, then $\max_{1\leq i \leq
  n}|\vec{X}_{i}|_{2}=O_{P}\bkt[2]{\max(p_{n}, \log n)^{1/2}}$.  This property of sub-gaussian covariates, \ref{A-boundedXes} in Section~\ref{sec:cons-estim-iss}
below, will be assumed in Section~\ref{sec:asympt-exact-post} but
then relaxed in Section~\ref{sec:errors-paired-index}. The scalar $\psiC(R, \vec{x}, \vec{x}\beta)$ in
\eqref{eq:20} above, on the other
hand, will consistently be required to be sub-gaussian,
via Section~\ref{sec:cons-estim-iss}'s \ref{A-c0moments}.

\subsection{Consistently estimable index scores} \label{sec:cons-estim-iss}
Let the data and model parameter be arranged in triangular arrays,
with sample and model $n$ having $n$ observations of $p$ 
independent variables ${x}_{i}$, $p$ (strictly, $p_{n}$) increasing with $n$.  Consistency of
$\hat{\beta}$ for $(\betatrue: n)$ will mean that $p = o(n)$ and
$|\hat\beta - \betatrue|_{2}^{2} = O_{P}(p/n)$.  Conditions for such
consistency are available in the literature.  We order\marginpar{?`We
  order?}
to present them along
with accompanying conditions characterizing the $\hat{\beta}$'s
relationship to its closest linear approximation.

For estimable $\betatrue$, define $A_{n}=A_{n}(\betatrue)$ and
$\hat{A}_{n}=A_{n}(\hat\beta)$, where
\marginpar{Check me, near convergence\ldots}
\begin{equation}
    \label{eq:7}
A_{n}(\gamma ) = \frac{1}{n}\Bigg\{  \sum_{i=1}^{n}\nabla_{\beta} \left. \E{
                  \psi(R_{i}, \vec{x}_{i};
                  \beta) } \right|_{\beta =  \gamma} +
                                                    \pen{\gamma}\Bigg\},
\end{equation}
with ``$\nabla_\beta$'' interpreted in
terms of weak differentiation if ordinary partial derivatives do not
exist for some $\beta$ values. For invertible
$A_{n}$, the linearization of estimator $\hat{\beta}$ is given by 
\begin{equation}
  \label{eq:4}
  \tilde{\beta}_{n} = A_{n}^{-1} \frac{1}{n}\bkt[3]{\bkt[2]{\sum_{i=1}^{n}\psi(r_{i},
  \vec{x}_{i}; \betatrue)} + \pen{\betatrue}}. 
\end{equation}
The random vector $n^{-1} \sum_{i=1}^{n} \psi(R_{i}, \vec{x}_{i};
\betatrue )$ 
has covariance  $n^{-1}B_{n}$ where
\begin{equation}
  \label{eq:13}
 B_{n}  =  B_{n}(\betatrue) \text{ and } B_{n}(\gamma) = n^{-1}\EE                                                   
                                               \sum_{i=1}^{n}
                                               \psi(R_{i},
                                               \vec{x}_{i}; \gamma )
                                               \psi(R_{i},
                                               \vec{x}_{i};  \gamma )'.
\end{equation}

\begin{prop}[\citealp{he2000parameters}] \label{lem:betahat-consist}
  Under~\ref{A-estimable}, \ref{A-psismooth}, \ref{A-l2Sfinite} and
  \ref{A-consistencyrates} as stated below:
  \begin{enumerate}
  \item $|\hat\beta -
    \betatrue|_{2} \stackrel{P}{\rightarrow} 0$, with rate
    $O_{P}\bkt[2]{\bkt[1]{p/n}^{1/2}}$;
  \item \label{lem:betahat-consist-btilde}
  $|\hat\beta - \tilde{\beta}_{n}|_{2} \stackrel{P}{\rightarrow} 0$, with
    rate $O_{P}\bkt[2]{(p/n)\bkt[1]{\log n }^{1/2}}$. 
  \end{enumerate}
\end{prop}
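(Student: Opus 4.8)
The statement is He and Shao's increasing-dimension M-estimation theory specialized to the penalized estimating equation \eqref{eq:22}, so my plan is less to reinvent their argument than to check that \ref{A-estimable}, \ref{A-psismooth}, \ref{A-l2Sfinite} and \ref{A-consistencyrates} furnish the three ingredients their proof needs --- a score $n^{-1}\bkt[2]{\sum_i \psi(r_i, \vec{x}_i; \betatrue) + \pen{\betatrue}}$ of the right order, a population Jacobian $A_n$ with eigenvalues bounded away from $0$ and $\infty$, and enough $\beta$-smoothness of $\psi$ to bound the second-order Taylor remainder over a shrinking ball --- and then to run the localization and Bahadur-expansion arguments. Throughout I would use that $\psi$ is the gradient of a $\beta$-convex $\rho$, so that $g_n(\beta) \defeq n^{-1}\bkt[2]{\sum_i \psi(r_i, \vec{x}_i; \beta) + \pen{\beta}}$ is the gradient of a convex objective, hence a monotone map whose (near-)root is a global minimizer.

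For part (1) I would first show $|g_n(\betatrue)|_2 = O_P\bkt[2]{(p/n)^{1/2}}$: its mean is negligible by estimability \eqref{eq:23}, while its fluctuation has second moment $n^{-1}\operatorname{tr}(B_n) = O(p/n)$ since $|B_n|_2$ is bounded under \ref{A-l2Sfinite}. On the sphere $|\beta - \betatrue|_2 = C(p/n)^{1/2}$ I would then lower-bound $\langle g_n(\beta), \beta - \betatrue\rangle$: splitting off $\langle g_n(\betatrue), \beta-\betatrue\rangle \geq -O_P\bkt[2]{(p/n)^{1/2}}\,C(p/n)^{1/2}$ by Cauchy--Schwarz, the increment $\langle g_n(\beta) - g_n(\betatrue), \beta-\betatrue\rangle$ contributes at least $\lambda_{\min}(A_n)\,C^2 (p/n)$ (bounded below via \ref{A-consistencyrates}) minus a remainder controlled over the ball by the smoothness in \ref{A-psismooth}. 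Once $C$ is large the quadratic term dominates, so $\langle g_n(\beta), \beta-\betatrue\rangle > 0$ uniformly on the sphere with probability tending to $1$; since $g_n$ is the gradient of a convex objective, $\betatrue$ then beats the entire sphere and the minimizer is trapped inside, giving the rate.

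For part (2) I would expand $g_n(\hat\beta) = 0$ about $\betatrue$, writing $\hat\beta - \betatrue = -A_n^{-1} g_n(\betatrue) + A_n^{-1} r_n$ with $r_n = \bbkt[2]{\int_0^1 \nabla g_n(\betatrue + t(\hat\beta - \betatrue))\,dt - A_n}(\hat\beta - \betatrue)$; the leading term $-A_n^{-1} g_n(\betatrue)$ is the linearization \eqref{eq:4} up to its centering convention, so part (2) is exactly the claim that the Bahadur remainder $A_n^{-1} r_n$ is $O_P\bkt[2]{(p/n)(\log n)^{1/2}}$. Bounding $|r_n|_2$ by the supremal operator-norm deviation of $\nabla g_n$ from $A_n$ over the localizing ball, times $|\hat\beta - \betatrue|_2 = O_P\bkt[2]{(p/n)^{1/2}}$ from part (1), this supremal deviation is $O_P\bkt[2]{(p/n)^{1/2}(\log n)^{1/2}}$ --- the $(p/n)^{1/2}$ from the ball radius and the Lipschitz behaviour of the Jacobian, the $(\log n)^{1/2}$ from a maximal inequality of the sub-gaussian type in \eqref{eq:3} --- and with $|A_n^{-1}|_2$ bounded the two factors multiply to the stated rate.

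The crux in both parts is the uniform control of $\nabla g_n(\beta) - A_n$ over the ball of radius $(p/n)^{1/2}$: this is where increasing dimension bites, where \ref{A-psismooth} and the sub-gaussianity of $\psiC$ (\ref{A-c0moments}) must combine through a maximal inequality to produce the $(\log n)^{1/2}$, and it is the step for which quoting the corresponding He--Shao lemma, after reconciling their notation with the structure \eqref{eq:20}, is the cleanest route.
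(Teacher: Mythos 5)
Your proposal is correct in outline and takes essentially the same route as the paper, which in fact offers no independent proof of this proposition: it simply invokes He and Shao's Theorems~2.1 and 2.2, remarking that their Theorem~2.2 as stated covers only sub-$\sqrt{n}$ dimension and that ``a straightforward adaptation of its proof'' yields part~(2) under the weaker \ref{A-consistencyrates}. Your convexity-based localization for part~(1) and Bahadur-remainder decomposition for part~(2) are a faithful reconstruction of that machinery, and, like the paper, you defer the one genuinely delicate step --- uniform control of the empirical-process increment over the localizing ball, which supplies the $(\log n)^{1/2}$ factor and with it the extension of He--Shao's Theorem~2.2 from sub-$\sqrt{n}$ to sub-$n$ dimension --- to the corresponding He--Shao lemma. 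So the proposal adds a serviceable roadmap but, at the crux, rests on the same citation the paper does.

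Two caveats worth making explicit. First, your argument uses $\lambda_{\min}(A_n)$ bounded away from zero and $|B_n|_2=O(1)$; the former is \ref{A-invertibleA} and the latter additionally requires the sub-gaussian norm bound \ref{A-c0moments} (this is how Lemma~\ref{lem:C-rate} obtains $\operatorname{tr}(B_n)=O(p)$), and neither appears in the proposition's hypothesis list. Since the linearization $\tilde{\beta}_n$ is not even defined without \ref{A-invertibleA}, this is arguably an omission in the statement rather than in your proof, but a complete write-up should name these dependencies. Second, \ref{A-psismooth} guarantees only Lipschitz continuity of $\eta \mapsto \psiC(r,\vec{x},\eta)$ and of $(\partial/\partial\eta)\EE\psiC(R,\vec{x},\eta)$, not pointwise differentiability of the empirical score in $\beta$, so the remainder should be phrased in terms of increments $g_n(\beta)-g_n(\betatrue)-A_n(\beta-\betatrue)$ rather than an integrated Jacobian $\int_0^1 \nabla g_n\,dt$; this is how He and Shao state their conditions and is a cosmetic rather than substantive repair.
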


Proposition~\ref{lem:betahat-consist} restates He and Shao's
Theorems~2.1 and 2.2 as applied to models in which only the
coefficient parameter $\beta_{n}$ grows in dimension with $n$, with a
slight strengthening of rate condition. (They assume
$p\log(p) = o(n)$, while \ref{A-consistencyrates} says $p\log(n) = o(n)$.)  Their
Theorem~2.2 characterizes decline of the linearization error only
for sub-$\sqrt{n}$ dimensional models, but a straightforward adaptation
of its proof gives part~\ref{lem:betahat-consist-btilde} of
the proposition.  See also He and Shao's Example 3.

Our regularity assumptions are as follows.  
\renewcommand{\theenumi}{A\arabic{enumi}}
\setcounter{save-enumi}{\value{enumi}}
\setcounter{enumi}{0}

\begin{enumerate}
\item \label{A-centering} The columns of $\mathbf{x}$ are
  centered. There may be pre-existing
  strata 1, \ldots, $L$, with $L/n \rightarrow 0$.  In this case the columns of
  $\mathbf{x}$ are also stratum-centered: for stratifying variable $v$, $\sum_{i:
    v_{i}=\ell}\vec{x}_{i}=0$, $\ell=1, \ldots, L$. 
\item\label{A-estimable} The IS is estimable
  (as defined in \S~\ref{sec:estim-index-scor}) and linear in $\mathbf{x}$.
\item\label{A-invertibleA}  $A_{n}$ is invertible. Furthermore there is $\delta>0$
  such that $A_{n}(\gamma)$ is invertible whenever $|\gamma
  -\betatrue|_{2} < \delta$, and $\sup_{\gamma: |\gamma
  -\betatrue|_{2} <\delta} |A_{n}(\gamma)^{-1}|_{2}$ is bounded. 
\item \label{A-psismooth} $\psi(r, \mathbf{x}, \beta)$ is of form
  \eqref{eq:20}. The functions $\psiC(r,\vec{x}, \cdot)$
  (i.e. $\eta \mapsto \psiC(r,\vec{x}, \eta)$) are Lipschitz
  continuous with a common Lipschitz constant, as are $(\partial/\partial \eta) \EE \psiC(R,\vec{x}, \eta)$.
\item \label{A-c0moments} The random variables
  $\psiC(R_{i},\vec{x}_{i}, \vec{x}_{i}\betatrue)$
  have bounded sub-gaussian norm. 
\item \label{A-l2Sfinite} For $\ell=0$,
  2 and 4,
  \begin{equation*}
    \max_{\substack{\gamma, \delta: |\gamma|_{2}=\\ |\delta|_{2} =1}}
\sum_{i=1}^{n}w(\vec{x}_{i})^{\ell} (\vec{x}_{i}\gamma)^{2}
(\vec{x}_{i}\delta)^{2} = O(n).
  \end{equation*}
\item \label{A-regPS} $p^{-1}|\betatrue|_{2}^{2} = p^{-1} \sum_k
  \betatrue[k]^2$ is bounded.
\item\label{A-PSvar}  $s^{2}(\mathbf{x}\betatrue) =
  \betatrue'S^{(x)}\betatrue $ tends to a limit in $(0,
  \infty]$.
\item \label{A-consistencyrates} $p = o\bkt[2]{n/(\log n)}$, i.e. $(p
  \log n)/n \rightarrow 0$ (sub-$n$
  model dimension).
\setcounter{A-enumi}{\value{enumi}}
\end{enumerate}
\setcounter{enumi}{\value{save-enumi}}
\renewcommand{\theenumi}{\roman{enumi}}

While the link function $g$ is assumed the same for each $n$, the
coefficient vector $\beta$ grows in length, and $\betatrue$ needn't
converge.  Indeed, according to \ref{A-PSvar} $s(\mathbf{s}\betatrue)$
is permitted to diverge (but not tend to 0).

It is less burdensome here to assume invertibility of $A_{n}$, as
\ref{A-invertibleA} does, than in other regression contexts, as for present applications one can freely change the basis of the design matrix, there being no interest in particular elements or contrasts of $\beta$.   
Via the Cauchy-Schwartz inequality, \ref{A-l2Sfinite} entails that
$\max_{\gamma:
  |\gamma|_{2}=1}n^{-1}\sum_{i=1}^{n}w(\vec{x}_{i})^{\ell}
(\vec{x}_{i}\gamma)^{2} = O(1)$ for $\ell \in \{0, 1, 2\}$,
in turn giving $|S^{(x)}|_{2}=O(1)$.  The appropriateness of these
commitments can be evaluated in advance of IS estimation, whereas
\ref{A-invertibleA} calls for inspection of model-fitting artifacts
after estimation of $\beta$.  A simple measure to improve the fit of \ref{A-invertibleA}, as well as
\ref{A-fullrankB} below, is to
trim explanatory variables that contribute relatively little to index
model fit, as indicated by common model selection criteria.

Certain results take stronger forms with one or more
secondary conditions. 
\setcounter{save-enumi}{\value{enumi}}
\renewcommand{\theenumi}{A\arabic{enumi}}
\begin{enumerate}
\setcounter{enumi}{\value{A-enumi}}
\item \label{A-paramrates} $p^2 = o\bkt[2]{n/(\log n)}$
  (sub-$\sqrt{n}$ model dimension).
\item \label{A-boundedXes} $\max_{i\leq n}|\vec{x}_{i}|_{2}^{2} =
  O\bkt[2]{\max(p, \log n)}$, and
  $\max_{i \leq n} w(\vec{x}_{i})^{2} = O(\log n)$ (sub-gaussian covariates).
\item \label{A-fullrankB} Each $S^{(x)} =
  (n-L)^{-1}\mathbf{x}'\mathbf{x}$ and $B_{n}$ are of full rank, with 
  $|(S^{(x)})^{-1}|_{2}$ and $|B_{n}^{-1}|_{2}$ bounded (full-rank covariance).
  \setcounter{save-enumi}{\value{enumi}}
\end{enumerate}
\renewcommand{\theenumi}{\roman{enumi}}
If the $\vec{x}$es (and $w(\vec{x})$) are sub-gaussian in the sense of
being realizations of $\vec{X}$ with $\EE|\vec{X}|_{\psi_{2}}$
uniformly bounded, while also \ref{A-l2Sfinite} holds in the sense of
$|\Cov{\vec{X}}|_{2}$ (and thus $p^{-1}\EE|\vec{X}|_{2}^{2}$) being
uniformly bounded as well, then \ref{A-boundedXes} follows from
\eqref{eq:30}, as discussed in \S~\ref{sec:sub-gaussian-random}
following \eqref{eq:30}.
\marginpar{Abbreviate for version to submit}
According to Proposition~\ref{lem:ChatC} below, sandwich estimators of
$\Cov{\hat\beta}$ generally require sub-gaussian covariates, and
\ref{A-paramrates}, sub-$\sqrt{n}$ model dimension; but the covariance
estimator based on $\hat{A}_{n}$ but not $\hat{B}_{n}$ generally
requires only sub-$n$ model dimension, \ref{A-consistencyrates}, and
for present purposes will be similarly beneficial even when it lacks
Fisher consistency as compared to the sandwich estimator. According to
Proposition~\ref{prop:stderr-consist}, full-rank covariance (\ref{A-fullrankB})
makes index sampling variabilities
$\Var{\vec{x}\hat\beta}$ and $\Var{(\vec{x}_{i} -
  \vec{x}_{j})\hat\beta}$ estimable without attention to size of
$\vec{x}$ or $\vec{x}_{i} - \vec{x}_{j}$.  However, our method for
asymptotically exact matching  does not require this, and is valid
with $S$ or $B_{n}$ of less than full rank.

\section{Asymptotically exact matching with Gaussian or sub-gaussian $\vec{X}$}
\label{sec:asympt-exact-post}

For each $n$ let $\mathcal{S}_{n}$ be a random partition of
$\{1, \ldots, n\}$.  It is not presumed that $\mathcal{S}_{n}$
expands or extends any earlier
partition $\mathcal{S}_{1}, \ldots, \mathcal{S}_{n-1}$. Denote by
$[i]_{\mathcal{S}_{n}}$ the unique $\mathcal{S}_{n}$
element containing $i \leq n$, and write
$i \stackrel{\mathcal{S}_{n}}{\sim} j$ when there is
$\mathbf{s} \in \mathcal{S}_{n}$ such that $i, j \in \mathbf{s}$.\label{defsec:isimj}
Absent ambiguity as to which 
$n$ or partition sequence is intended, these symbols are given as ``$i
{\sim} j$'' or ``$[i]$,'' respectively.
The
progression $\{\mathcal{S}_{n}\}$ constitutes an 
\textit{asymptotically exact}
index post-stratification if $\mathcal{S}_{n}$-stratum
width in the direction of the underlying index, 
$\max \{|(\vec{x}_{i}-\vec{x}_{j})\betatrue|: i
\stackrel{\mathcal{S}_{n}}{\sim} j \}$, tends in probability to 0.
This section presents a tolerance for $\mathcal{S}_{n}$-stratum
width in the direction of the estimated index, 
$\max \{|(\vec{x}_{i}-\vec{x}_{j})\hat\beta|: i
\stackrel{\mathcal{S}_{n}}{\sim} j \}$, that is narrow enough to ensure asymptotic exactness
in the special case of a sub-gaussian covariate (\ref{A-boundedXes}).
It is also is sufficiently wide that, in
a further special case to be described in
Section~\ref{sec:mean-worst-case-Gaussian},  no $i$ meriting
placement in a poststratum with representation of the contrasting group
can be excluded from such placement in virtue of
$\vec{x}_{i}\hat\beta$ being isolated relative to
$\{\vec{x}_{j}\hat\beta: z_{j}\neq z_{i}\}$.

When $\betatrue$ is subject to estimation, the observable counterparts
of differences $(\vec{x}_{i}-\vec{x}_{j})\betatrue$, $1\leq i, j
\leq n$,  that is contrasts of form $(\vec{x}_{i}-\vec{x}_{j})\hat{\beta}$,
are termed \textit{paired index contrasts} ({PIC}s). The discrepancy
between a PIC and the paired contrast it estimates,
$(\vec{x}_{i}-\vec{x}_{j})(\hat{\beta} - \betatrue)$, is a \textit{PIC error}.
Ensuring that a post-stratification is asymptotically exact calls for
separate attention to PIC errors versus the PICs themselves.  Our
width tolerance will involve a novel estimate of PIC
error size, the \textit{PIC SE}.


\subsection{PIC errors in the sub-gaussian case} \label{sec:dimin-index-errors}

Recall that errors $\hat\beta - \betatrue$ of index coefficient
estimates decompose as $(\hat\beta - \tilde\beta) + (\tilde\beta - \betatrue)$,
with $\tilde\beta$ the linearization defined in \eqref{eq:4}.
Index and PIC errors decompose similarly.  

\begin{corollary}[of Proposition~\ref{lem:betahat-consist} part  \ref{lem:betahat-consist-btilde}] 
  If \ref{A-estimable}, \ref{A-psismooth}, \ref{A-l2Sfinite}, 
  \ref{A-consistencyrates}, and
  \ref{A-boundedXes}, then
  $\max_{i} |\vec{x}_{i}(\hat{\beta} - \tilde{\beta}_{n})|
  = O_{P}\bkt[2]{p^{3/2}(\log n)^{1/2}/n}$. (Unless $p =
  o(\log n)$, in which case $\max_{i} |\vec{x}_{i}(\hat{\beta} -
  \tilde{\beta}_{n})| = O_{P}\bkt[2]{p(\log n)/n}$).
\end{corollary}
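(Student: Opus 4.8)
The plan is to reduce everything to the $\ell_2$ linearization rate of Proposition~\ref{lem:betahat-consist} by a single application of Cauchy--Schwarz, after which only bookkeeping across the two dimensional regimes remains. The starting observation is that $\hat\beta - \tilde\beta_n$ does not depend on $i$, so that
\[
\max_{i\le n}\bigl|\vec{x}_i(\hat\beta - \tilde\beta_n)\bigr|
\;\le\; \Bigl(\max_{i\le n}|\vec{x}_i|_2\Bigr)\,\bigl|\hat\beta - \tilde\beta_n\bigr|_2 .
\]
I would then control the two factors separately. The second is handled directly by part~\ref{lem:betahat-consist-btilde} of Proposition~\ref{lem:betahat-consist}, which gives $|\hat\beta - \tilde\beta_n|_2 = O_P\bkt[2]{(p/n)(\log n)^{1/2}}$. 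The first is handled by the sub-gaussian covariate assumption~\ref{A-boundedXes}, which supplies $\max_{i\le n}|\vec{x}_i|_2 = O\bkt[2]{\max(p,\log n)^{1/2}}$ \emph{deterministically}; because this is an $O(\cdot)$ rather than an $O_P(\cdot)$ bound, it multiplies through the stochastic rate of the second factor without altering its probabilistic character.

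The remaining work is to resolve $\max(p,\log n)$ into its two regimes. When $p$ is not $o(\log n)$, the maximum is of order $p$, so $\max_i|\vec{x}_i|_2 = O(p^{1/2})$, and the product is $O_P\bkt[2]{p^{1/2}\cdot(p/n)(\log n)^{1/2}} = O_P\bkt[2]{p^{3/2}(\log n)^{1/2}/n}$, the principal rate claimed. When instead $p = o(\log n)$, the maximum is of order $\log n$, so $\max_i|\vec{x}_i|_2 = O\bkt[2]{(\log n)^{1/2}}$, and the product becomes $O_P\bkt[2]{(\log n)^{1/2}\cdot (p/n)(\log n)^{1/2}} = O_P\bkt[2]{p(\log n)/n}$, the parenthetical rate. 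This exhausts the cases and yields the stated conclusion.

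I do not anticipate a genuine obstacle here, since the statement is advertised as a corollary of the linearization rate and the Cauchy--Schwarz bound is already tight enough to reproduce it exactly. The only points requiring care are (i) keeping the deterministic $O(\cdot)$ from~\ref{A-boundedXes} separate from the $O_P(\cdot)$ from the proposition, so that the rates combine correctly, and (ii) correctly identifying which of $p$ and $\log n$ dominates in $\max(p,\log n)^{1/2}$. It is worth noting explicitly that no sharper maximal inequality over $i$ (e.g.\ one exploiting sub-gaussianity of the $\vec{x}_i$ directly, as in~\eqref{eq:30}) is available or needed: we possess no distributional handle on $\hat\beta - \tilde\beta_n$ beyond the bound on its $\ell_2$ size, so the crude norm-product step is the natural and essentially forced route, and it already delivers the target order.
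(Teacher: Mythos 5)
Your proposal is correct and is exactly the paper's argument: the paper declares the corollary ``immediate'' from part~\ref{lem:betahat-consist-btilde} of Proposition~\ref{lem:betahat-consist} combined with \ref{A-boundedXes}, i.e.\ the bound $\max_i|\vec{x}_i(\hat\beta-\tilde\beta_n)|\le(\max_i|\vec{x}_i|_2)\,|\hat\beta-\tilde\beta_n|_2$ followed by the same case split on $\max(p,\log n)$. Your bookkeeping of the two regimes and your remark that no sharper maximal inequality is available (since only the $\ell_2$ size of $\hat\beta-\tilde\beta_n$ is controlled) both match the paper's intent.
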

\begin{prop} \label{prop:btilde-subgnorm}
  If \ref{A-centering}, \ref{A-estimable}, \ref{A-c0moments},
  and \ref{A-l2Sfinite}, then $\|\tilde{\beta}_{n} -
  \betatrue\|_{\psi_{2}} = O\bkt[1]{n^{-1/2}}$.
\end{prop}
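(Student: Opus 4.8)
The plan is to exhibit the linearization error as a fixed linear image of an average of independent, mean-zero, sub-gaussian vectors, and then to bound its sub-gaussian norm \emph{one direction at a time}, arranging matters so that the model dimension $p$ never enters. First I would center the score using estimability. Evaluating the defining identity \eqref{eq:23} at $\betatrue$ gives $\pen{\betatrue} = -\EE\sum_{i} \psi(R_i, \vec{x}_i; \betatrue)$, so that the linearization error \eqref{eq:4} may be written
\[
  \tilde\beta_n - \betatrue = A_n^{-1}\,\frac1n\sum_{i=1}^n U_i, \qquad U_i \defeq \psi(R_i, \vec{x}_i; \betatrue) - \EE\psi(R_i, \vec{x}_i; \betatrue),
\]
where the $U_i$ are independent (the $\vec{x}_i$ are fixed, the $R_i$ independent) and mean zero. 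By the structural form \eqref{eq:20}, $U_i = \bar c_i\, w(\vec{x}_i)\,(1,\vec{x}_i)'$ with $\bar c_i \defeq \psiC(R_i, \vec{x}_i, \vec{x}_i\betatrue) - \EE\psiC(R_i, \vec{x}_i, \vec{x}_i\betatrue)$ a scalar whose sub-gaussian norm is bounded uniformly in $i,n$ by \ref{A-c0moments} (centering inflates $\|\cdot\|_{\psi_2}$ only by a constant). Hence each $U_i$ is sub-gaussian.

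Next I would unfold the vector norm through its definition, $\|\tilde\beta_n-\betatrue\|_{\psi_2} = \sup_{|\gamma|_2=1}\|\gamma'(\tilde\beta_n-\betatrue)\|_{\psi_2}$. Fix a unit $\gamma$ and put $\tilde\gamma \defeq (A_n')^{-1}\gamma$, so that $|\tilde\gamma|_2 \leq |A_n^{-1}|_2 = O(1)$ by the boundedness of $|A_n^{-1}|_2$ (\ref{A-invertibleA}, implicit wherever the linearization \eqref{eq:4} is used). Then $\gamma'(\tilde\beta_n-\betatrue) = n^{-1}\sum_i \bar c_i\, a_i$ with deterministic weights $a_i \defeq w(\vec{x}_i)(1,\vec{x}_i)\tilde\gamma$. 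This reduces the problem to a scalar weighted average of independent sub-gaussians, to which the Hoeffding-type bound recalled in \S\ref{sec:sub-gaussian-random} applies:
\[
  \bigl\|n^{-1}\textstyle\sum_i \bar c_i\, a_i\bigr\|_{\psi_2}^2 \;\lesssim\; n^{-2}\sum_i a_i^2\,\|\bar c_i\|_{\psi_2}^2 \;\lesssim\; \bigl(\max_i\|\bar c_i\|_{\psi_2}^2\bigr)\, n^{-2}\sum_i a_i^2 .
\]
It then remains to show $\sum_i a_i^2 = \tilde\gamma'\bigl(\sum_i w(\vec{x}_i)^2 (1,\vec{x}_i)'(1,\vec{x}_i)\bigr)\tilde\gamma = O(n)$ uniformly over unit $\gamma$. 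The spectral norm of the weighted Gram matrix is $O(n)$: its covariate block is controlled by the second-moment consequence of \ref{A-l2Sfinite}, $\max_{|v|_2=1}\sum_i w(\vec{x}_i)^2(\vec{x}_i v)^2 = O(n)$, while its intercept entries contribute $\sum_i w(\vec{x}_i)^2 = O(n)$ (part of the design moment conditions, and $=n$ for unweighted scores). With $|\tilde\gamma|_2 = O(1)$ this yields $\sum_i a_i^2 = O(n)$, whence $\sup_{|\gamma|_2=1}\|\gamma'(\tilde\beta_n-\betatrue)\|_{\psi_2}^2 = O(n^{-2}\cdot n) = O(n^{-1})$, the claim.

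The crux --- and the one place the argument must be handled with care --- is that this bound on $\sum_i a_i^2$ is $O(n)$ rather than $O(np)$. A naive estimate using the individual vector norms $\|U_i\|_{\psi_2} = \|\bar c_i\|_{\psi_2}\,w(\vec{x}_i)\,(1+|\vec{x}_i|_2^2)^{1/2}$ would give $\sum_i \|U_i\|_{\psi_2}^2 = O(np)$, because $|\vec{x}_i|_2^2 = \sum_k (\vec{x}_i e_k)^2$ sums over all $p$ coordinate directions; feeding that through the Hoeffding bound yields only the coefficient-error rate $O\{(p/n)^{1/2}\}$. Recovering the sharp $n^{-1/2}$ rate requires projecting onto $\gamma$ first, which replaces $|\vec{x}_i|_2^2$ by the single rank-one quadratic form $(\vec{x}_i\tilde\gamma)^2$ that \ref{A-l2Sfinite} controls uniformly in direction; the only genuinely separate bookkeeping item is then the intercept contribution $\sum_i w(\vec{x}_i)^2 = O(n)$.
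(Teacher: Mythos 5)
Your proposal is correct and follows essentially the same route as the paper's own proof: both reduce $\tilde\beta_n-\betatrue$ to $A_n^{-1}$ applied to the (centered) score sum, unfold the vector sub-gaussian norm as a supremum of scalar sub-gaussian norms over unit directions, apply the general Hoeffding inequality to each projected sum of independent sub-gaussians, and invoke \ref{A-l2Sfinite} (through its Cauchy--Schwarz consequence $\sup_{|\gamma|_2=1}\sum_i w(\vec{x}_i)^2(\vec{x}_i\gamma)^2=O(n)$) to close the bound. Your explicit bookkeeping for the intercept coordinate and your routing of the direction through $(A_n')^{-1}\gamma$ rather than factoring out $|A_n^{-1}|_2$ first are only cosmetic differences.
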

\begin{corollary}[of Proposition~\ref{prop:btilde-subgnorm}] 
  If \ref{A-centering}, \ref{A-estimable}, \ref{A-c0moments},
  \ref{A-l2Sfinite}, and \ref{A-boundedXes}, 
  then
    $\EE  {\max_{i\leq n}|\vec{x}_{i}(\tilde{\beta}_{n} -
    \betatrue)|} =  O\bbkt[3]{\bkt[2]{(p \log n)/n}^{1/2}}$.  (Unless
  $p = o(\log n)$, in which case $\EE  {\max_{i\leq n}|\vec{x}_{i}(\tilde{\beta}_{n} -
    \betatrue)|} =  O\bbkt[3]{\bkt[2]{(\log n)/n^{1/2}}}$.)
\end{corollary}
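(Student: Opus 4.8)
The plan is to treat each realized index error $\vec{x}_i(\tilde{\beta}_n - \betatrue)$ as a scalar sub-gaussian variable and then invoke the maximal inequality \eqref{eq:3}. Write $W = \tilde{\beta}_n - \betatrue$. Proposition~\ref{prop:btilde-subgnorm}, whose hypotheses \ref{A-centering}, \ref{A-estimable}, \ref{A-c0moments}, \ref{A-l2Sfinite} are all in force here, supplies $\|W\|_{\psi_2} = O(n^{-1/2})$. Because the design is conditioned upon, each $\vec{x}_i$ is a fixed vector, so $\vec{x}_i W$ is a fixed linear functional of the sub-gaussian random vector $W$; the norm contraction $\|\vec{V}\gamma\|_{\psi_2} \le |\gamma|_2\|\vec{V}\|_{\psi_2}$ recorded in \S~\ref{sec:sub-gaussian-random} then gives $\|\vec{x}_i W\|_{\psi_2} \le |\vec{x}_i|_2\,\|W\|_{\psi_2} = |\vec{x}_i|_2\cdot O(n^{-1/2})$, uniformly in $i$.

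First I would bound the worst-case covariate length. Assumption \ref{A-boundedXes} yields $\max_{i\le n}|\vec{x}_i|_2^2 = O(\max(p,\log n))$, hence $\max_{i\le n}|\vec{x}_i|_2 = O([\max(p,\log n)]^{1/2})$ and therefore $\max_{i\le n}\|\vec{x}_i W\|_{\psi_2} = O([\max(p,\log n)]^{1/2} n^{-1/2})$. Next I would pass to the expected maximum of the \emph{absolute} errors. Since $\|{-\vec{x}_i W}\|_{\psi_2}=\|\vec{x}_i W\|_{\psi_2}$, applying \eqref{eq:3} to the $2n$ variables $\{\pm\vec{x}_i W : i \le n\}$ and using $(\log 2n)^{1/2}=O((\log n)^{1/2})$ gives
\[
  \EE\max_{i\le n}|\vec{x}_i W|
  \;\le\; k_0\Bigl[\max_{i\le n}\|\vec{x}_i W\|_{\psi_2}\Bigr](\log 2n)^{1/2}
  \;=\; O\!\left(\bigl[\max(p,\log n)\bigr]^{1/2}(\log n)^{1/2}n^{-1/2}\right).
\]

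It then remains only to resolve $\max(p,\log n)$ into the two advertised regimes. If $p$ is not $o(\log n)$ then $\max(p,\log n)=O(p)$, so the master bound reads $O(p^{1/2}(\log n)^{1/2}n^{-1/2}) = O([(p\log n)/n]^{1/2})$; if instead $p = o(\log n)$ then $\max(p,\log n)=\log n$ eventually, and the bound reads $O((\log n)^{1/2}(\log n)^{1/2}n^{-1/2}) = O((\log n)/n^{1/2})$. These are exactly the two displayed rates.

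I do not expect a genuine obstacle: Proposition~\ref{prop:btilde-subgnorm} carries all of the probabilistic content, and the corollary is essentially a deterministic-design maximal-inequality wrapper around it. The two points deserving care are (i) recognising that the scalar bound \eqref{eq:3}, rather than the vector bound \eqref{eq:30}, suffices here, so that no standardisation of $W$ by its covariance is needed and the $\psi_2$ contraction can be applied directly to the coefficient-error vector; and (ii) the bookkeeping of the $\max(p,\log n)$ split, which is what produces the dichotomy according to whether $p^{1/2}(\log n)^{1/2}$ or $\log n$ dominates.
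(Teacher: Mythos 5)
Your argument is correct and is precisely the route the paper takes: the paper's entire proof is the remark that the corollary ``flows from \ref{A-boundedXes} and \eqref{eq:3},'' i.e.\ combine $\|\tilde{\beta}_{n}-\betatrue\|_{\psi_{2}}=O(n^{-1/2})$ from Proposition~\ref{prop:btilde-subgnorm} with the $\psi_{2}$ contraction, the covariate-length bound $\max_{i}|\vec{x}_{i}|_{2}=O([\max(p,\log n)]^{1/2})$, and the sub-gaussian maximal inequality. Your additional care on the absolute-value step (applying \eqref{eq:3} to the $2n$ variables $\pm\vec{x}_{i}W$) and the $\max(p,\log n)$ bookkeeping fills in exactly what the paper leaves implicit.
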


Proposition~\ref{lem:betahat-consist}'s corollary is immediate
from its part~\ref{lem:betahat-consist-btilde} in combination
with~\ref{A-boundedXes}.  Prop.~\ref{prop:btilde-subgnorm}
is proved in the appendix; its corollary flows from \ref{A-boundedXes} and
\eqref{eq:3}.  They follow \citet{damourDingFellerLeiSekhon21} in assuming sub-gaussian covariates,
in the sense of \ref{A-boundedXes}, an assumption to be relaxed in 
Section \ref{sec:errors-paired-index} below.  

The corollaries characterize errors of estimation of index values
rather than paired contrasts of them, but they have immediate
extensions giving the same rates of decline for 
$\max_{i,j\leq n} |(\vec{x}_{i}-\vec{x}_{j})(\hat{\beta} -
\tilde{\beta}_{n})|$
and $\EE  {\max_{i,j\leq n}|(\vec{x}_{i}-\vec{x}_{j})(\tilde{\beta}_{n} -
    \betatrue)|}$, respectively. 
Because for any collection of $\mathcal{W}$ of of length-$p$ row vectors,
\begin{equation}\label{eq:29}
  \sup_{\vec{w} \in \mathcal{W}}|\vec{w}(\hat\beta - \betatrue)| \leq
  \sup_{\vec{w} \in \mathcal{W}}|\vec{w}(\hat\beta - \tilde\beta)| +
\sup_{\vec{w} \in \mathcal{W}}|\vec{w}(\tilde\beta - \betatrue)|, 
\end{equation}
it follows that with sub-gaussian covariates the
worst-case PIC or index error tends to 0
provided that $\bkt[1]{p^{3/2}/n}\log n$ does, i.e. if $p =
o\bkt[3]{\bkt[2]{n/\bkt[1]{\log n}}^{2/3}}$.

When the covariate has sub-$\sqrt{n}$ dimension, the corollaries
indicate that in large samples the suprema of
$\{|\vec{x}_{i}(\hat\beta - \tilde\beta)|:i\leq n\}$
and
$\{|(\vec{x}_{i} - \vec{x}_{j})(\hat\beta - \tilde\beta)|:i,j\leq n\}$
will be smaller by an order of magnitude, $p/n^{1/2} = o_{P}(1)$,
than those of
$\{|\vec{x}_{i}(\tilde\beta - \betatrue)|: i\}$ and
$\{|(\vec{x}_{i} - \vec{x}_{j})(\tilde\beta - \betatrue)|: i,j\}$
(both of which are $O_{P}\bbkt[2]{\bkt[1]{(p /n)^{1/2}\log n}}$).
Of the two errors at right of \eqref{eq:29}, the $\sup |\vec{w}(\tilde\beta
- \betatrue)|$ term ordinarily dominates; 
we turn attention to it. 

\subsection{A thought experiment}
\label{sec:mean-worst-case-Gaussian}

In a special case making both $\vec{X}$ and $\hat\beta$ Gaussian,
sizes of PIC errors admit specific characterization in terms of
readily estimable quantities. \citet{ghoshCortes19}, among others,
consider related issues under an assumption of Gaussian $\vec{X}$. 
For vectors ${v} \in \Re^{m}$ let
$|{v}|_{2}$ and $|{v}|_{\infty}$ have their usual meanings,
$\bkt[1]{m^{-1}\sum_{i=1}^{m}v_{i}^{2}}^{1/2}$ and
$\sum_{i\leq m}|v_{i}|$ \label{defsec:ess-sup-norm} respectively.  For matrices $M$ and $N$ of
like dimension, $\fip{M}{N}$ denotes the Frobenius inner product $\operatorname{tr}(M'N)$.
\label{defsec:fip}

\begin{prop} \label{prop:gaussian-max-pic-e}
  Let $\mathcal{S}$ be a partition of $\{1, \ldots,
  n\}$ with an associated mapping $\vec{\mu}: \mathcal{S} \rightarrow 
  \Re^{p}$, and let
  $\EE_{\mathcal{S}}(\cdot)$ and $\cov_{\mathcal{S}}(\cdot)$ denote
  expectations calculated with $\mathcal{S}$ and $\{\mu_{\mathbf{s}}:
  \mathbf{s}\in \mathcal{S}\}$ held fixed. Given $\mathcal{S}$ and $\{\mu_{\mathbf{s}}:
  \mathbf{s}\in \mathcal{S}\}$ let 
  $\bkt[1]{\tilde{\beta}_{n}-\betatrue; \vec{X}_{i} - \vec{\mu}([i]_{\mathcal{S}});
    \vec{X}_{j} - \vec{\mu}([j]_{\mathcal{S}})}$ be
  jointly multivariate Normal, for each $i,j\leq n$,  with mean zero, $\cov_{\mathcal{S}}\bkt[1]{\tilde\beta} = C$, 
  $\cov_{\mathcal{S}}\bkt[1]{\vec{X}_{i}}=\cov_{\mathcal{S}}\bkt[1]{\vec{X}_{j}}=\Sigma$ and, if $i \neq j$, $\cov_{\mathcal{S}}\bkt[1]{\vec{X}_{i},
  \vec{X}_{j}} = 0$. Then we have
 \begin{align}
   \EE_{\mathcal{S}}\left[\big|\bkt[2]{(\vec{X}_{i} -
        \vec{X}_{j})(\tilde{\beta} - \betatrue)
    : i \stackrel{\mathcal{S}}{\sim} j, i < j}\big|_{2}^{2}\right] &=
                                                            \fip{2\Sigma}{C}\quad
                                                            \text{and}\label{eq:12} \\
    \EE_{\mathcal{S}}\left[\big|\bkt[2]{(\vec{X}_{i} -
        \vec{X}_{j})(\tilde{\beta} - \betatrue)
    : i \stackrel{\mathcal{S}}{\sim} j}\big|_{\infty}\right] &\leq
z^{*}_{\card[0]{\mathcal{S}}}\fip{2\Sigma}{C}^{1/2},\label{eq:14}
 \end{align}
where  $\card[0]{\mathcal{S}}\defeq \card{\{\{i,j\}: i \stackrel{\mathcal{S}}{\sim} j, i
 < j\}}$ and $z^{*}_{\card[0]{\mathcal{S}}}\defeq\bkt[1]{2\log
 2\card[0]{\mathcal{S}}}^{1/2}$ bounds $\EE \max_{1\leq i\leq
n} |G_{i}|$, $(G_{i}: i)$ independent $\mathrm{N}(0,1)$, as described
in Section~\ref{sec:sub-gaussian-random}. 
\end{prop}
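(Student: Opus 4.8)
The plan is to reduce both equalities to a single second-moment computation for a within-stratum paired error, and then, for \eqref{eq:14}, to a Gaussian maximal inequality applied conditionally on $\tilde\beta$. The structural fact that makes everything clean is that a matched pair $i \stackrel{\mathcal{S}}{\sim} j$ shares a stratum, so $\vec\mu([i]_{\mathcal{S}}) = \vec\mu([j]_{\mathcal{S}})$ and the stratum means cancel:
\[
  \vec{X}_i - \vec{X}_j = \bkt[1]{\vec{X}_i - \vec\mu([i]_{\mathcal{S}})} - \bkt[1]{\vec{X}_j - \vec\mu([j]_{\mathcal{S}})}.
\]
Writing $\vec{W}_{ij} = \vec{X}_i - \vec{X}_j$ and $\vec\tau = \tilde\beta - \betatrue$, the hypotheses make $\vec{W}_{ij}$ a mean-zero Normal row vector with $\cov_{\mathcal{S}}(\vec{W}_{ij}) = \Sigma + \Sigma = 2\Sigma$ (the cross term vanishing since $i\neq j$), the \emph{same} covariance for every matched pair; and they make $\vec{W}_{ij}$ independent of $\vec\tau$, the $\vec\tau$--covariate cross-block of the stated joint Normal being read as zero. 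The scalar of interest is $D_{ij} = \vec{W}_{ij}\vec\tau$.

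For \eqref{eq:12} I would compute $\EE_{\mathcal{S}}(D_{ij}^2)$ directly. Using independence of $\vec{W}_{ij}$ and $\vec\tau$ and $\EE_{\mathcal{S}}(\vec\tau\vec\tau') = C$,
\[
  \EE_{\mathcal{S}}\bkt[1]{D_{ij}^2} = \EE_{\mathcal{S}}\bkt[1]{\vec\tau'\vec{W}_{ij}'\vec{W}_{ij}\vec\tau} = \EE_{\mathcal{S}}\bkt[1]{\vec\tau'(2\Sigma)\vec\tau} = \operatorname{tr}\bkt[1]{2\Sigma\,C} = \fip{2\Sigma}{C}.
\]
Since the squared norm on the left of \eqref{eq:12} is, by the convention fixed just before the statement, the \emph{mean} of the $D_{ij}^2$ over the $\card[0]{\mathcal{S}}$ matched pairs, its expectation is this common value; the normalization by $\card[0]{\mathcal{S}}$ built into $|\cdot|_2$ is exactly what collapses the average to a single Frobenius inner product, so \eqref{eq:12} holds with equality.

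For \eqref{eq:14} the idea is to condition on $\vec\tau$. Given $\vec\tau$, each $D_{ij} = \vec{W}_{ij}\vec\tau$ is a fixed linear functional of the Gaussian vector $\vec{W}_{ij}$, hence $D_{ij}\mid\vec\tau \sim \mathrm{N}(0,\sigma_\tau^2)$ with $\sigma_\tau^2 = \vec\tau'(2\Sigma)\vec\tau$, and this conditional variance is common to all pairs. Normalizing by $\sigma_\tau$ makes each $D_{ij}/\sigma_\tau$ a standard Normal conditionally, so the Gaussian maximal bound of Section~\ref{sec:sub-gaussian-random}, which controls the expected maximum of $\card[0]{\mathcal{S}}$ standard Normals, gives
\[
  \EE_{\mathcal{S}}\bkt[2]{\max_{i \stackrel{\mathcal{S}}{\sim} j}\,|D_{ij}| \,\bigm|\, \vec\tau} \leq z^{*}_{\card[0]{\mathcal{S}}}\,\sigma_\tau,
\]
the factor of $2$ inside $z^{*}_{\card[0]{\mathcal{S}}}$ absorbing the passage to absolute values over the $\card[0]{\mathcal{S}}$ distinct pairs. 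Taking expectations over $\vec\tau$ and applying Jensen's inequality to the concave map $t\mapsto t^{1/2}$,
\[
  \EE_{\mathcal{S}}\bkt[2]{\max_{i \stackrel{\mathcal{S}}{\sim} j}|D_{ij}|} \leq z^{*}_{\card[0]{\mathcal{S}}}\,\EE_{\mathcal{S}}(\sigma_\tau) \leq z^{*}_{\card[0]{\mathcal{S}}}\bkt[1]{\EE_{\mathcal{S}}\sigma_\tau^2}^{1/2} = z^{*}_{\card[0]{\mathcal{S}}}\,\fip{2\Sigma}{C}^{1/2},
\]
using $\EE_{\mathcal{S}}\sigma_\tau^2 = \operatorname{tr}(2\Sigma\,C) = \fip{2\Sigma}{C}$ as in the first computation, which is \eqref{eq:14}.

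The individual steps are short; the point requiring care is the conditioning in \eqref{eq:14}. I would be explicit that the conditional maximal inequality is applied pairwise: it needs only that each $D_{ij}\mid\vec\tau$ be marginally $\mathrm{N}(0,\sigma_\tau^2)$, not joint Gaussianity or independence across pairs, so the (strong) correlation among pairs sharing an index is immaterial. The two facts doing the real work are the cancellation of stratum means for within-stratum pairs and the reduction to a \emph{common} conditional variance $\sigma_\tau^2$, which is what lets $\sigma_\tau$ factor out of the maximum. I expect the main obstacle to be a bookkeeping one rather than a conceptual one, namely flagging that the vanishing of the $\vec\tau$--covariate cross-covariance is genuinely needed — without it the Isserlis expansion of $\EE_{\mathcal{S}}(D_{ij}^2)$ carries extra cross terms — even though the statement specifies the covariance blocks of the joint Normal but leaves that cross-block to be read as zero.
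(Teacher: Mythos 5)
Your proposal is correct and follows essentially the same route as the paper's proof: establish that the within-pair difference $\vec{X}_{i}-\vec{X}_{j}$ is independent of $\tilde\beta - \betatrue$, compute the common second moment $\fip{2\Sigma}{C}$, and for the maximum condition on $\tilde\beta$, apply the Gaussian maximal inequality of Section~\ref{sec:sub-gaussian-random}, and finish with Jensen's inequality. The one point of divergence is the independence step: you read the $\tilde\beta$--covariate cross-block of the joint Normal as zero, whereas the paper's Lemma~\ref{lem:D-psi-indep} derives the needed orthogonality of $\tilde\beta$ to the \emph{difference} $\vec{X}_{i}-\vec{X}_{j}$ from the linearization's structure as a sum of i.i.d.\ contributions --- the two individual cross-covariances $\Cov[\tilde\beta]{\vec{X}_{i}}$ and $\Cov[\tilde\beta]{\vec{X}_{j}}$ are equal and cancel --- which is the safer reading, since $\tilde\beta$ is built from the $\vec{X}_{i}$ and cannot in general be taken independent of each of them separately.
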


Given its strong assumptions on the covariate, 
Proposition~\ref{prop:gaussian-max-pic-e} has limited practical use
for PIC error control; we shall arrive at methods for containment of $\bkt[2]{(\vec{X}_{i} -
        \vec{X}_{j})(\tilde{\beta} - \betatrue)
    : i \stackrel{\mathcal{S}}{\sim} j}$
that relax those assumptions to the moment condition \ref{A-l2Sfinite}.  But these
methods call for a limit on sizes of PIC errors that are to be tolerated, and 
\eqref{eq:14} will turn out to be helpfully specific in this regard.  

The statistician who sets out to select a matched sample has it as her
operating hypothesis that each member of the focal group has
counterparts that are close enough, in terms of $\vec{x}\betatrue$,
within the alternate group.  Simplifying, so as to remove the
question-begging ``close enough,'' let us suppose provisionally that
for each focal group member $i$, the available sample contains within
it at least one contrasting group member $j$ that would be a perfect
match on the underlying index,
$\vec{x}_{i}\betatrue = \vec{x}_{j}\betatrue$.  Continue the thought
experiment by supposing $\hat\beta$ and $\vec{X}$ to be as described
in Proposition~\ref{prop:gaussian-max-pic-e}, and by taking the focal
group to be the smaller of the treatment and control groups, implying
no fewer than $\min(n_{0}, n_{1})$ perfect pairs.  Let $\mathcal{S}$
to be the collection of equivalence classes induced by the relation
that $i \stackrel{\mathcal{S}}{\sim} j$ if and only if
$\vec{x}_{i}\betatrue=\vec{x}_{j}\betatrue$.  The proposition then
characterizes PICs $|(\vec{x}_{i} - \vec{x}_{j})\hat\beta|$ for which
the contrast on the underlying index,
$|(\vec{x}_{i} - \vec{x}_{j})\betatrue|$, is 0.  Each simplification
made thus far in order to apply the proposition should err in the
direction of understating the maximum PIC among pairs closely matched
on $\vec{x}\betatrue$; but even if we continue to arrange our thought
experiment so as to minimize this quantity we will find it to be
almost unworkably large, in a sense to be given presently.

As specified so far, our perfect pairing thought experiment permits
no $\vec{X}\betatrue$ variation within strata of $\mathcal{S}$.
This means the stratified covariance $\cov_{\mathcal{S}}(\vec{X})$ must satisfy
 $\betatrue'\Sigma\betatrue =0$.  Bending available covariate data to
 this constraint, take $\vec{X}$ to be distributed as in
 Proposition~\ref{prop:gaussian-max-pic-e} with 
 $\Sigma = S^{\perp \betatrue}\defeq (n-L)^{-1}\mathbf{x}^{\perp
   \mathbf{x}\betatrue\prime}\mathbf{x}^{\perp \mathbf{x}\betatrue}$, 
the observed covariates' covariance as projected onto the orthocomplement of
 the index, $\mathbf{x}^{\perp \mathbf{x}\betatrue}$.  (Here
 $\mathbf{x}$ is the $n \times p$ matrix of covariates as observed;
 $\mathbf{x}^{\perp v}$ denotes the $n \times p$ matrix of residuals
 arising from the $p$ regressions of $\mathbf{x}$-columns on
 $n$-vector $v$; and $L$ is the number of overt, preexisting strata,
 if such exist, and 1 otherwise. Following \ref{A-centering},
 $\mathbf{x}$ is assumed to be centered or stratum-centered, as
 appropriate, and $\mathbf{x}^{\perp \mathbf{x}\betatrue}$ inherits
 this centering.)  The natural estimate
 $S^{\perp \hat\beta}=(n-L)^{-1}\mathbf{x}^{\perp \mathbf{x}\hat\beta
   \prime}\mathbf{x}^{\perp \mathbf{x}\hat\beta}$ of
 $S^{\perp \betatrue}$ is appropriately consistent for
 $S^{\perp \betatrue}$, as noted in
 Proposition~\ref{prop:picse-consistency} below. Observe that use of
 $S^{\perp \hat\beta}$ (as opposed to $S$) again reduces \eqref{eq:12}
 and \eqref{eq:14}, if in increasing-$p$ regimes it leaves their order
 unchanged.

The maximum PIC bound \eqref{eq:14} is at its smallest, with
$\card[0]{\mathcal{S}}=\min(n_{0}, n_{1})$, when each member of
the smaller of the focal and comparison groups has just one perfectly matching counterpart.  Among such
configurations, the bound is sharpest when the
pairs do not overlap, as in matching without replacement.
(Conditionally given $\tilde\beta$ as well as $\mathcal{S}$ and
$\{\vec{\mu}\bkt[1]{\mathbf{s}}: \mathbf{s}\in \mathcal{S}\}$,
$(\vec{X}_{i} - \vec{X}_{j})(\tilde{\beta} - \betatrue)$ is
independent of
$(\vec{X}_{i'} - \vec{X}_{j'})(\tilde{\beta} - \betatrue)$, for
$i,j,i',j'$ with $\{i, j\} \neq \{i', j'\}$; by the discussion
following \eqref{eq:3} in Section~\ref{sec:sub-gaussian-random}, this
causes inequality \eqref{eq:5} in
Appendix~\ref{sec:proofs-sect-mean-worst-case-Gaussian} to be sharp.)
 Complete the specification of our perfect-pairing thought
experiment by supposing its $\card[0]{\mathcal{S}}= \min(n_{0}, n_{1})$ pairs
to be nonoverlapping.  Then \eqref{eq:14} more closely estimates the width in
$\vec{x}\hat\beta$ of $\mathcal{S}$. 

Across pairs $\mathcal{S}$ constituting the thought experiment,
the maximum PIC error is expected to be
$z^{*}_{\min(n_0,n_1)}\fip{2 S^{\perp \betatrue}}{{C}}^{1/2}$, of order $O\bkt[3]{\bkt[2]{(p \log
    n)/n}^{1/2}}$. The accompanying estimate is
$z^{*}_{\min(n_0,n_1)}\fip{2 S^{\perp \hat\beta}}{\hat{C}}^{1/2}$.  These quantities are small enough to tend to zero,
but only barely so: $\bkt[2]{(p\log n)/n}^{1/2}$ is precisely the rate
that \ref{A-consistencyrates} requires to decline to 0.  Among pairs
$\{i,j\}$ that in actuality are perfectly matched,
$(\vec{x}_{i} -\vec{x}_{j})\betatrue=0$, even tame, Gaussian variation
in other covariate directions engenders a maximum PIC
$|\bkt[2]{(\vec{X}_{i} - \vec{X}_{i})\hat\beta: i
  \stackrel{\mathcal{S}}{\sim} j}|_\infty$ of as large an order as can
be tolerated of separations on the actual index,
$|\bkt[2]{(\vec{X}_{i} - \vec{X}_{i})\betatrue: i
  \stackrel{\mathcal{S}}{\sim} j}|_\infty$, if the matching is to be
asymptotically exact.  So we will recommend this number as a matching
tolerance, not only when making the restrictive assumptions of
Proposition~\ref{prop:gaussian-max-pic-e} but also when entertaining
only the weaker \ref{A-centering}--\ref{A-consistencyrates}.

\subsection{A summary standard error for PICs}
\label{sec:covh-moder-high}

In light of \eqref{eq:4} and \eqref{eq:13}, the covariance of $\tilde{\beta}_{n}$ is
\begin{equation}
 C_{n} = n^{-1}A_{n}^{-1}B_{n}\bkt[1]{A_{n}^{-1}}'. 
\end{equation}

This $C_{n}$ approximates the covariance of $\hat{\beta}$,
particularly when the index model has sub-$\sqrt{n}$ dimension.  When
$p=o\bkt[3]{\bkt[2]{n/\log(n)}^{1/2}}$,
Prop.~\ref{lem:betahat-consist} says
$|\tilde{\beta} - \hat{\beta}|_{2} = o_{P}(n^{-1/2})$, small enough to
obviate distinctions between $C_{n} = \Cov{\tilde\beta}$ and
$\Cov{\hat\beta}$.  For example, Lemma~\ref{lem:C-rate} below entails
that
$\operatorname{s.d.}(\vec{x}\tilde\beta) =
\bkt[2]{\vec{x}{C}_{n}\vec{x}'}^{1/2}$ shares the order
$O_{P}(n^{-1/2}|\vec{x}|_{2})$ with $\vec{x}(\hat{\beta} -\betatrue)$,
whereas part~\ref{lem:betahat-consist-btilde} of
Prop.~\ref{lem:betahat-consist} says
$\vec{x}(\tilde{\beta} - \hat{\beta}) = o_{P}(n^{-1/2}|\vec{x}|_{2})$.
The proposition following the lemma will show the larger order
$O_{P}(n^{-1/2}|\vec{x}|_{2})$ also to be shared by
$\stderr(\vec{x}\tilde\beta) = \bkt[2]{\vec{x}\hat{C}_{n}\vec{x}'}^{1/2}$.

\begin{lemma} \label{lem:C-rate}
  Under~\ref{A-invertibleA}, \ref{A-c0moments}
  and~\ref{A-l2Sfinite}, $|B_{n}|_{2} = O(1)$ and $|C_{n}|_{2} = O(n^{-1})$.
\end{lemma}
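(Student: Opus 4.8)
The plan is to establish $|B_n|_2 = O(1)$ first and then read off $|C_n|_2 = O(n^{-1})$, since $C_n$ is assembled from $B_n$ and $A_n^{-1}$. To bound $B_n$ I would insert the gradient structure \eqref{eq:20}, writing $\psi(R_i,\vec{x}_i;\betatrue) = c_i\, w_i\, (1,\vec{x}_i)'$ with $c_i \defeq \psiC(R_i,\vec{x}_i,\vec{x}_i\betatrue)$ random and $w_i \defeq w(\vec{x}_i)$, $\vec{x}_i$ deterministic. Because \eqref{eq:13} sums the self outer products $\psi_i\psi_i'$ term by term,
\[
  B_n = \frac{1}{n}\sum_{i=1}^n \EE[c_i^2]\, w_i^2\, (1,\vec{x}_i)'(1,\vec{x}_i),
\]
a nonnegative-definite matrix, so no independence across $i$ is needed here. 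Assumption \ref{A-c0moments} bounds each $\|c_i\|_{\psi_2}$ by an absolute $M$; since $\EE\exp(c_i^2/M^2)\le 2$ and $\exp(t)\ge 1+t$ force $\EE[c_i^2]\le M^2$, the scalar weights $\EE[c_i^2]$ are uniformly $O(1)$. Hence $0 \preceq B_n \preceq M^2 G_w$ in the Loewner order, where $G_w \defeq n^{-1}\sum_i w_i^2 (1,\vec{x}_i)'(1,\vec{x}_i)$, and monotonicity of the operator norm on nonnegative-definite matrices gives $|B_n|_2 \le M^2\,|G_w|_2$.

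The next step is to show $|G_w|_2 = O(1)$. As $G_w$ is symmetric nonnegative-definite, its operator norm equals $\max_{\alpha^2+|\gamma|_2^2=1} n^{-1}\sum_i w_i^2 (\alpha+\vec{x}_i\gamma)^2$, and $(\alpha+\vec{x}_i\gamma)^2 \le 2\alpha^2 + 2(\vec{x}_i\gamma)^2$ separates this into a constant-direction term $2\alpha^2\, n^{-1}\sum_i w_i^2$ and a term $2\, n^{-1}\sum_i w_i^2(\vec{x}_i\gamma)^2$. I would dispatch the latter with the Cauchy--Schwarz consequence of \ref{A-l2Sfinite} recorded just after that assumption, namely $\max_{|\gamma|_2=1} n^{-1}\sum_i w_i^2(\vec{x}_i\gamma)^2 = O(1)$; this is exactly where the fourth-moment content of \ref{A-l2Sfinite} enters, via $\sum_i w_i^2(\vec{x}_i\gamma)^2 \le (\sum_i w_i^4(\vec{x}_i\gamma)^4)^{1/2}\, n^{1/2}$ together with the $\ell=4$ case. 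The constant-direction term is $O(1)$ because $n^{-1}\sum_i w_i^2 = O(1)$---trivially so when $w\equiv 1$, as for likelihood scores, and otherwise through the weight's appearance in \ref{A-l2Sfinite}. Combining gives $|G_w|_2 = O(1)$ and hence $|B_n|_2 = O(1)$.

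For $|C_n|_2$, I would invoke submultiplicativity of the operator norm on $C_n = n^{-1}A_n^{-1}B_n(A_n^{-1})'$:
\[
  |C_n|_2 \le n^{-1}\,|A_n^{-1}|_2\,|B_n|_2\,|(A_n^{-1})'|_2 = n^{-1}\,|A_n^{-1}|_2^2\,|B_n|_2 .
\]
Assumption \ref{A-invertibleA} makes $|A_n^{-1}|_2$ bounded, and $|B_n|_2 = O(1)$ was just shown, so $|C_n|_2 = O(n^{-1})$ follows at once.

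I expect the one genuinely substantive step to be $|B_n|_2 = O(1)$: controlling the operator norm of the growing-dimension weighted Gram matrix $G_w$ uniformly in $n$. That hinges on correctly extracting, from the fourth-moment budget of \ref{A-l2Sfinite}, a bound on the quadratic form $\gamma\mapsto n^{-1}\sum_i w_i^2(\vec{x}_i\gamma)^2$ that is uniform over the unit sphere, and on handling the constant/intercept direction's weight term $n^{-1}\sum_i w_i^2$. The remaining ingredients---the sub-gaussian second-moment bound from \ref{A-c0moments} and the submultiplicativity argument with bounded $|A_n^{-1}|_2$ from \ref{A-invertibleA}---are routine.
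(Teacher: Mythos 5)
Your proposal is correct and follows essentially the same route as the paper: bound $|B_{n}|_{2}$ by combining the sub-gaussian control of $\psiC(R_{i},\vec{x}_{i},\vec{x}_{i}\betatrue)$ from \ref{A-c0moments} with the moment budget of \ref{A-l2Sfinite}, then obtain $|C_{n}|_{2}=O(n^{-1})$ from submultiplicativity and the bound on $|A_{n}^{-1}|_{2}$ in \ref{A-invertibleA}. The only (cosmetic) difference is that you bound $\EE[\psiC^{2}]$ uniformly and pass to the Loewner order, needing only the $\ell=2$ consequence of \ref{A-l2Sfinite}, where the paper instead applies Cauchy--Schwarz across $i$ with the $\ell=4$ case; your explicit treatment of the intercept direction is more careful than the paper's (which silently drops it), though your claim that $n^{-1}\sum_{i}w(\vec{x}_{i})^{2}=O(1)$ follows from \ref{A-l2Sfinite} is not literally delivered by that condition as stated.
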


If $\psi$ is the gradient a (well-specified) log-likelihood, then
$A_{n}=B_{n}$, and $n^{-1}\hat{A}_{n}^{-1}$ estimates $C_{n}$. More broadly, $C_{n}$ is
estimated by $n^{-1}{\hat{A}_{n}}^{-1} \hat{B}_{n}\bkt[1]{\hat{A}_{n}^{-1}}'$, where
\begin{align*}
\hat{B}_{n}  =&  \hat{B}_n(\hat\beta), \quad \hat{B}_n(\beta) = n^{-1} \sum_{i=1}^{n} \psi(r_{i}, \vec{x}_{i};  {\beta} ) \psi(r_{i}, \vec{x}_{i};  {\beta} )'.  
\end{align*}
The propositions that follow establish the consistency of natural
covariance estimators and related quantities. 

\begin{prop} \label{lem:ChatC}
  Under \ref{A-centering}--\ref{A-consistencyrates}, $|\hat{A}_{n}^{-1} -
  A_{n}^{-1}|_{2} \stackrel{P}{\rightarrow} 0$.  If also
  \ref{A-paramrates} (sub-$\sqrt{n}$ dimension) and \ref{A-boundedXes}
  (sub-gaussian covariates), then $|\hat{A}_{n}^{-1} \hat{B}_{n}
  \bkt{\hat{A}_{n}^{-1}}'
  - nC_{n}|_{2} \stackrel{P}{\rightarrow} 0$. 
\end{prop}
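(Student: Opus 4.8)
The plan is to prove the two assertions in turn, reducing the first to a perturbation bound for matrix inversion that uses only the consistency of $\hat\beta$, and the second to an operator-norm law of large numbers for $\hat{B}_{n}$, which is where the extra hypotheses \ref{A-paramrates} and \ref{A-boundedXes} do their work.

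For the first claim I would start from the resolvent identity $\hat{A}_{n}^{-1} - A_{n}^{-1} = -\hat{A}_{n}^{-1}(\hat{A}_{n} - A_{n})A_{n}^{-1}$, so that $|\hat{A}_{n}^{-1} - A_{n}^{-1}|_{2} \leq |\hat{A}_{n}^{-1}|_{2}\,|A_{n}^{-1}|_{2}\,|\hat{A}_{n} - A_{n}|_{2}$. Assumption \ref{A-invertibleA} supplies $|A_{n}^{-1}|_{2}=O(1)$ together with a uniform bound on $|A_{n}(\gamma)^{-1}|_{2}$ over $|\gamma - \betatrue|_{2}<\delta$; since the consistency rate $|\hat\beta-\betatrue|_{2}=O_{P}((p/n)^{1/2})$ of Proposition~\ref{lem:betahat-consist} puts $\hat\beta$ inside that ball with probability tending to $1$, also $|\hat{A}_{n}^{-1}|_{2}=O_{P}(1)$. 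It then suffices to show $|\hat{A}_{n} - A_{n}|_{2} = |A_{n}(\hat\beta) - A_{n}(\betatrue)|_{2}\stackrel{P}{\rightarrow}0$. Writing the linear predictor as $\eta_{i}(\gamma)=\gamma_{0}+\vec{x}_{i}\gamma$ and reading $A_{n}(\gamma)$ off \eqref{eq:7} and \eqref{eq:20}, its $\psi$-part is $n^{-1}\sum_{i}d_{i}(\gamma)\,w(\vec{x}_{i})(1,\vec{x}_{i})'(1,\vec{x}_{i})$, where $d_{i}(\gamma)$ is the $\eta$-derivative of $\EE\,\psiC(R_{i},\vec{x}_{i},\cdot)$ at $\eta_{i}(\gamma)$, the penalty term contributing negligibly (its size controlled through \ref{A-estimable}). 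The common Lipschitz constant for $\eta\mapsto\partial_{\eta}\EE\,\psiC(R_{i},\vec{x}_{i},\eta)$ in \ref{A-psismooth} bounds $|d_{i}(\gamma)-d_{i}(\betatrue)|$ by $K|(1,\vec{x}_{i})(\gamma-\betatrue)|$; testing the resulting matrix against unit vectors $u,v$ and applying Cauchy--Schwarz to the product $w(\vec{x}_{i})\,|(1,\vec{x}_{i})(\gamma-\betatrue)|\,|(1,\vec{x}_{i})u|\,|(1,\vec{x}_{i})v|$, with the fourth-moment control of \ref{A-l2Sfinite} (the intercept coordinate handled by its boundedness), yields a Lipschitz bound $|A_{n}(\gamma)-A_{n}(\betatrue)|_{2}\leq C|\gamma-\betatrue|_{2}$ with $C$ uniform in $n$. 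Evaluating at $\gamma=\hat\beta$ and using $|\hat\beta-\betatrue|_{2}=o_{P}(1)$ completes the first claim; note that neither sub-gaussianity nor sub-$\sqrt{n}$ dimension entered.

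For the second claim I would first isolate $\hat{B}_{n}$. By the first claim $|\hat{A}_{n}^{-1}-A_{n}^{-1}|_{2}\stackrel{P}{\rightarrow}0$, while $|A_{n}^{-1}|_{2}=O(1)$, $|\hat{A}_{n}^{-1}|_{2}=O_{P}(1)$, and $|B_{n}|_{2}=O(1)$ by Lemma~\ref{lem:C-rate}. Since $nC_{n}=A_{n}^{-1}B_{n}(A_{n}^{-1})'$, the telescoping
\begin{align*}
\hat{A}_{n}^{-1}\hat{B}_{n}(\hat{A}_{n}^{-1})' - A_{n}^{-1}B_{n}(A_{n}^{-1})'
={}& (\hat{A}_{n}^{-1}-A_{n}^{-1})\hat{B}_{n}(\hat{A}_{n}^{-1})' \\
&+ A_{n}^{-1}(\hat{B}_{n}-B_{n})(\hat{A}_{n}^{-1})' \\
&+ A_{n}^{-1}B_{n}\bigl((\hat{A}_{n}^{-1})'-(A_{n}^{-1})'\bigr)
\end{align*}
bounds the target in operator norm by products of the quantities just listed, so the whole reduces to $|\hat{B}_{n}-B_{n}|_{2}\stackrel{P}{\rightarrow}0$ (which in turn gives $|\hat{B}_{n}|_{2}=O_{P}(1)$). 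I would split this as $[\hat{B}_{n}(\hat\beta)-\hat{B}_{n}(\betatrue)]+[\hat{B}_{n}(\betatrue)-B_{n}(\betatrue)]$. For the parameter-sensitivity bracket, write $\hat\eta_{i}=\eta_{i}(\hat\beta)$ and $\eta_{i}=\eta_{i}(\betatrue)$; then $\psi\psi'=\psiC^{2}w^{2}(1,\vec{x})'(1,\vec{x})$ and $\psiC(\hat\eta_{i})^{2}-\psiC(\eta_{i})^{2}=(\psiC(\hat\eta_{i})-\psiC(\eta_{i}))(\psiC(\hat\eta_{i})+\psiC(\eta_{i}))$, whose first factor is at most $K|(1,\vec{x}_{i})(\hat\beta-\betatrue)|$ by \ref{A-psismooth}. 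Bounding this through $\max_{i}|\vec{x}_{i}|_{2}^{2}=O(\max(p,\log n))$ from \ref{A-boundedXes} and $|\hat\beta-\betatrue|_{2}=O_{P}((p/n)^{1/2})$, and controlling the remaining factors by the sub-gaussianity of $\psiC$ at the truth (\ref{A-c0moments}) and the moment bounds \ref{A-l2Sfinite}, renders this bracket $o_{P}(1)$ under sub-$\sqrt{n}$ dimension \ref{A-paramrates}.

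The main obstacle is the remaining sampling bracket, an operator-norm law of large numbers for the independent mean-zero matrices $\hat{B}_{n}(\betatrue)-B_{n}(\betatrue)=n^{-1}\sum_{i}\bigl(\psiC[i]^{2}-\EE\,\psiC[i]^{2}\bigr)w(\vec{x}_{i})^{2}(1,\vec{x}_{i})'(1,\vec{x}_{i})$, with $\psiC[i]=\psiC(R_{i},\vec{x}_{i},\eta_{i}(\betatrue))$. Because $\psiC[i]$ is sub-gaussian (\ref{A-c0moments}) and $\vec{x}_{i}$ sub-gaussian (\ref{A-boundedXes}), each summand is, after the weight, a product of sub-gaussian factors and hence sub-exponential, so the scalar quadratic form $u'(\cdot)u=n^{-1}\sum_{i}(\psiC[i]^{2}-\EE\,\psiC[i]^{2})w(\vec{x}_{i})^{2}((1,\vec{x}_{i})u)^{2}$ concentrates at $0$ for each fixed unit $u$ by a Bernstein inequality for sub-exponential summands. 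Upgrading from a fixed $u$ to the supremum over the unit sphere in $\Re^{p+1}$ calls for an $\varepsilon$-net of cardinality $e^{O(p)}$ and a union bound, and the per-point Bernstein tail must beat that cardinality; here the scale factors in the Bernstein exponent carry the $\max_{i}|\vec{x}_{i}|_{2}^{2}=O(\max(p,\log n))$ and $\max_{i}w(\vec{x}_{i})^{2}=O(\log n)$ of \ref{A-boundedXes}, so that balancing the net cost against the tail is exactly what forces sub-$\sqrt{n}$ dimension \ref{A-paramrates} ($p^{2}\log n=o(n)$) rather than the weaker \ref{A-consistencyrates}. This concentration step --- the reason the second assertion, unlike the first, needs both sub-gaussian covariates and sub-$\sqrt{n}$ dimension --- is the principal technical content; everything else is assembly through the bounds above.
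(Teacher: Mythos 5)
Your proposal is correct and follows essentially the same route as the paper's proof: the resolvent identity together with a Lipschitz bound on $A_{n}(\cdot)$ for the first claim, then a telescoping decomposition reducing the second to $|\hat{B}_{n}-B_{n}|_{2}\stackrel{P}{\rightarrow}0$, split into a parameter-sensitivity bracket and a sampling bracket handled by Bernstein's inequality over a $\tfrac14$-net of the sphere, with the net-cardinality-versus-tail trade-off correctly identified as the source of the sub-$\sqrt{n}$ requirement. The only small divergence is that the paper disposes of the bracket $\hat{B}_{n}(\hat\beta)-\hat{B}_{n}(\betatrue)$ under \ref{A-centering}--\ref{A-consistencyrates} alone (Lemma~\ref{lem:ABhatbetahatABhatbetaT}, via Cauchy--Schwartz and the fourth-moment condition \ref{A-l2Sfinite}) where you lean on \ref{A-boundedXes} and \ref{A-paramrates}, which costs nothing here since those hypotheses are in force for the second claim anyway.
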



\begin{prop} \label{prop:picse-consistency}
  Let \ref{A-centering}--\ref{A-consistencyrates} hold, let $\hat{C}_{n}$ be a consistent estimate of $C_{n}$ ($|\hat{C}_{n} -
  C_{n}|_{2} = o_{P}(n^{-1})$) and let $S$,  $S^{\perp \betatrue}$
  and  $S^{\perp \hat\beta}$ be as defined in Section~\ref{sec:mean-worst-case-Gaussian}.  Then
\begin{equation*} \label{eq:36}
  | \fip{S}{\hat{C}} -
    \fip{S}{{C}} |= o_{P}\bkt[1]{p/n}
  \quad \text{and} \quad
    | \fip{S^{\perp\hat\beta}}{\hat{C}} -
    \fip{S^{\perp\betatrue}}{{C}} |= o_{P}\bkt[1]{p/n},
  \end{equation*}
  whereas $\fip{S}{{C}}
  =O_{P}\bkt[1]{p/n}$ and $\fip{S^{\perp\betatrue}}{{C}}
  = O_{P}\bkt[1]{p/n}.$
\end{prop}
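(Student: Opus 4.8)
The plan is to reduce all four assertions to bounds on operator and trace norms, leaning on three facts already available (and writing $C$ for $C_{n}$, $\hat{C}$ for $\hat{C}_{n}$): $|S|_{2} = O(1)$ (from \ref{A-l2Sfinite}, as noted following that assumption); $|C|_{2} = O(n^{-1})$, whence $\operatorname{tr}(C) \leq p|C|_{2} = O(p/n)$ (Lemma~\ref{lem:C-rate}, $C$ being positive semidefinite); and $|\hat\beta - \betatrue|_{2} = O_{P}\bkt[2]{(p/n)^{1/2}}$ (Proposition~\ref{lem:betahat-consist}). Two matrix inequalities will be used throughout: for symmetric $M$ and positive semidefinite $C$, $|\fip{M}{C}| = |\operatorname{tr}(MC)| \leq |M|_{2}\operatorname{tr}(C)$; and the Frobenius Cauchy--Schwarz bound $|\fip{M}{N}| \leq |M|_{F}|N|_{F} \leq p|M|_{2}|N|_{2}$ for $p\times p$ matrices. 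I would first record that, writing $P_{v} = vv'/|v|_{2}^{2}$ for the rank-one projection onto $v$, $S^{\perp v} = (n-L)^{-1}\mathbf{x}'(I - P_{v})\mathbf{x}$ is dominated by $S$ in the positive semidefinite order, so that $|S^{\perp\betatrue}|_{2}$ and $|S^{\perp\hat\beta}|_{2}$ are both at most $|S|_{2} = O(1)$, deterministically.

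The two ``whereas'' orders and the first consistency statement then follow at once. Since $S$, $S^{\perp\betatrue}$ and $C$ are positive semidefinite, the trace inequality gives $\fip{S}{C} = \operatorname{tr}(SC) \leq |S|_{2}\operatorname{tr}(C) = O(p/n)$ and, likewise, $\fip{S^{\perp\betatrue}}{C} \leq |S^{\perp\betatrue}|_{2}\operatorname{tr}(C) = O(p/n)$, the claimed (indeed deterministic) $O_{P}(p/n)$ rates. For the first consistency claim, $\fip{S}{\hat{C}} - \fip{S}{C} = \fip{S}{\hat{C} - C}$, and Frobenius Cauchy--Schwarz yields $|\fip{S}{\hat{C} - C}| \leq p|S|_{2}|\hat{C} - C|_{2} = p\cdot O(1)\cdot o_{P}(n^{-1}) = o_{P}(p/n)$, using the hypothesis $|\hat{C} - C|_{2} = o_{P}(n^{-1})$.

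For the second consistency claim the plan is to split $\fip{S^{\perp\hat\beta}}{\hat{C}} - \fip{S^{\perp\betatrue}}{C} = \fip{S^{\perp\hat\beta}}{\hat{C} - C} + \fip{S^{\perp\hat\beta} - S^{\perp\betatrue}}{C}$. The first summand is controlled exactly as above: $|\fip{S^{\perp\hat\beta}}{\hat{C} - C}| \leq p|S^{\perp\hat\beta}|_{2}|\hat{C} - C|_{2} = o_{P}(p/n)$, since $|S^{\perp\hat\beta}|_{2} = O(1)$. The second summand carries the real work and is the main obstacle: I must show the perturbation $S^{\perp\hat\beta} - S^{\perp\betatrue}$ is small in operator norm. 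Writing $S^{\perp\hat\beta} - S^{\perp\betatrue} = (n-L)^{-1}\mathbf{x}'\bkt[1]{P_{\mathbf{x}\betatrue} - P_{\mathbf{x}\hat\beta}}\mathbf{x}$, the plan is to invoke the elementary bound $|P_{\mathbf{x}\betatrue} - P_{\mathbf{x}\hat\beta}|_{2} \leq 2|\mathbf{x}(\hat\beta - \betatrue)|_{2}/|\mathbf{x}\betatrue|_{2}$ on the distance between the two rank-one projections.

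The two factors in that ratio I would bound separately. For the numerator, $|\mathbf{x}(\hat\beta - \betatrue)|_{2}^{2} = (n-L)(\hat\beta - \betatrue)'S(\hat\beta - \betatrue) \leq (n-L)|S|_{2}|\hat\beta - \betatrue|_{2}^{2} = O_{P}(p)$, so $|\mathbf{x}(\hat\beta - \betatrue)|_{2} = O_{P}(p^{1/2})$. For the denominator, $|\mathbf{x}\betatrue|_{2}^{2} = (n-L)\,\betatrue'S\betatrue = (n-L)\,s^{2}(\mathbf{x}\betatrue)$, which by \ref{A-PSvar} is bounded below by a positive multiple of $n$ for large $n$; hence $|\mathbf{x}\betatrue|_{2}^{-1} = O(n^{-1/2})$ (and, since $p = o(n)$, $|\mathbf{x}\hat\beta|_{2}$ is likewise bounded below with probability tending to one, so the projections are well defined). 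Combining gives $|P_{\mathbf{x}\betatrue} - P_{\mathbf{x}\hat\beta}|_{2} = O_{P}\bkt[2]{(p/n)^{1/2}}$; and since $|(n-L)^{-1}\mathbf{x}'M\mathbf{x}|_{2} \leq (n-L)^{-1}|\mathbf{x}|_{2}^{2}|M|_{2} = |S|_{2}|M|_{2}$ for every $M$, it follows that $|S^{\perp\hat\beta} - S^{\perp\betatrue}|_{2} = O_{P}\bkt[2]{(p/n)^{1/2}}$. A final application of the trace inequality then gives $|\fip{S^{\perp\hat\beta} - S^{\perp\betatrue}}{C}| \leq |S^{\perp\hat\beta} - S^{\perp\betatrue}|_{2}\operatorname{tr}(C) = O_{P}\bkt[2]{(p/n)^{1/2}}\cdot O(p/n) = O_{P}\bkt[2]{(p/n)^{3/2}}$, which is $o_{P}(p/n)$ because $p/n \to 0$ under \ref{A-consistencyrates}. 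Adding the two summands completes the second claim.
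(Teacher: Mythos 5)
Your proposal is correct, and on the second consistency claim it takes a genuinely different route from the paper. The paper first proves the algebraic identity $\fip{S^{\perp\hat\beta}}{\hat{C}} = \fip{S}{\hat{C}} - s^{-2}(\mathbf{x}\hat\beta)\fip{S\hat\beta\hat\beta'S}{\hat{C}}$ (Lemma~\ref{lem:eq8}, via a $U$-statistic/sampling construction), and then controls the rank-one correction term by term: it needs Lemma~\ref{lem:is-sds-same-limit} to handle $s^{-2}(\mathbf{x}\hat\beta)-s^{-2}(\mathbf{x}\betatrue)$, assumption~\ref{A-regPS} as well as \ref{A-PSvar}, and a fairly long chain of Frobenius-norm Cauchy--Schwarz estimates for $\|S^{1/2}\hat\beta\hat\beta'S^{1/2}-S^{1/2}\betatrue\betatrue'S^{1/2}\|_{F}$. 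You instead perturb the projection directly, writing $S^{\perp\hat\beta}-S^{\perp\betatrue}=(n-L)^{-1}\mathbf{x}'(P_{\mathbf{x}\betatrue}-P_{\mathbf{x}\hat\beta})\mathbf{x}$ and invoking the elementary rank-one-projection bound $|P_{u}-P_{v}|_{2}\leq 2|u-v|_{2}/|u|_{2}$ (which is valid: $|P_u-P_v|_2=\sin\theta\leq|u/|u|_2-v/|v|_2|_2$ regardless of the sign of $u'v$), then closing with $|\operatorname{tr}(MC)|\leq|M|_{2}\operatorname{tr}(C)$ and $\operatorname{tr}(C)\leq p|C|_{2}=O(p/n)$ from Lemma~\ref{lem:C-rate}. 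Your route is shorter, dispenses with Lemma~\ref{lem:eq8}, Lemma~\ref{lem:is-sds-same-limit} and assumption~\ref{A-regPS}, and uses \ref{A-PSvar} only through the lower bound $|\mathbf{x}\betatrue|_{2}\gtrsim n^{1/2}$; the price is the slightly coarser intermediate rate $O_{P}\bkt[2]{(p/n)^{3/2}}$ for the perturbation term, which still suffices because $p/n\rightarrow 0$ under \ref{A-consistencyrates}. What the paper's heavier machinery buys is the Lemma~\ref{lem:eq8} identity itself, which is reused in Section~\ref{sec:r.m.s.-index-s.e} to interpret and compute the PIC SE, so it is not wasted effort there even though your argument shows it is not needed for Proposition~\ref{prop:picse-consistency}.
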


The quantity $\fip{2 S^{\perp \hat\beta}}{\hat{C}_{\hat\beta}}^{1/2}$
will be termed the \textit{PIC standard error} (PIC SE);
Proposition~\ref{prop:picse-consistency} says that it consistently
estimates the analogous parameter appearing at right of \eqref{eq:12}
and \eqref{eq:14} in Proposition~\ref{prop:gaussian-max-pic-e}.
Proposition~\ref{lem:ChatC} is new as applied to increasing-$p$
regimes; Proposition~\ref{prop:picse-consistency} is entirely
new. Their proofs are given in Appendix~\ref{sec:Cproofs}, along with
demonstrations of intermediate results including
Lemma~\ref{lem:C-rate}.

\subsection{PIC SE calipers} \label{sec:pic-se-calipers}
We recommend matching within limits of
$z^{*}_{\min(n_0,n_1)}$ times the PIC SE
$\fip{S^{\perp\hat\beta}}{\hat{C}}^{1/2}$, whether or not the Gaussian
model of Proposition~\ref{prop:gaussian-max-pic-e} applies. If it does
apply, this ensures that the same multiple of the PIC SE characterizes
matched discrepancies on the underlying index
(Sections~\ref{sec:mean-worst-case-Gaussian} and
\ref{sec:covh-moder-high}). If it does not apply but the covariate is
sub-gaussian, \eqref{eq:14} may no longer limit sizes of PIC errors,
but they continue to tend to 0 as long as
$p = o\bkt[3]{\bkt[2]{n/\bkt[1]{\log n}}^{2/3}}$
(Section~\ref{sec:dimin-index-errors}).  If neither the Gaussian nor
sub-gaussian modeling assumptions apply, additional matching
requirements to be described below will be necessary to force the PIC
errors towards 0, but $z^{*}_{\min(n_0,n_1)}$ times
the PIC SE remains an appropriate tolerance for PICs. It tends to
zero, so its use as a caliper width forces PICs toward zero; it tends
to zero at the same $\bkt[2]{\bkt[1]{p\log n}/n}^{1/2}$ rate that
Proposition \ref{lem:betahat-consist} requires to tend to zero for
index model consistency, making it no more restrictive than is
necessary to force the PIC maximum to tend to 0.  In the context of
the idealized setting studied in
Section~\ref{sec:mean-worst-case-Gaussian} it was seen also to be
minimal in a more quantitatively specific sense, in virtue of its
sharp characterization of the notional experiment's maximum PIC error:
if such a paired experiment were to be lurking within the actual data,
setting a tolerance for matching on $\mathbf{x}\hat\beta$ any smaller
than $z^{*}_{\min(n_0,n_1)}$ times the PIC SE would
exclude pairs that are in fact perfectly matched on
$\mathbf{x}\betatrue$.

\section{Deconstructing Gaussian and sub-gaussian assumptions} \label{sec:errors-paired-index}
\subsection{PIC SEs with unrestricted $X$}\label{sec:r.m.s.-index-s.e}

The PIC SE averages expected PIC errors in either of two ways.  First,
if the sample available for matching contains a subsample of perfectly
matched subjects (as envisioned in
Section~\ref{sec:mean-worst-case-Gaussian}) for which the within-pair covariance of covariates is
$S^{\perp \betatrue}$, then the squared PIC SE
estimates the expected mean of squared PIC errors,
$\bkt[2]{(\vec{x}_{i} - \vec{x}_{j})(\hat\beta -\betatrue)}^{2}$,
across perfectly matched pairs $(i,j)$.  Second, taking the entirety
of the sample as-is but residualizing each subject's covariate for
$\mathbf{x}\betatrue$ (as also discussed in
Section~\ref{sec:mean-worst-case-Gaussian}), the squared PIC SE is
approximately the expected mean square of \textit{reduced} PIC errors,
$\bkt[2]{(\vec{x}_{i}^{\perp \mathbf{x}\betatrue} - \vec{x}_{j}^{\perp
    \mathbf{x}\betatrue}) (\hat\beta -\betatrue)}^{2}$, now across all pairs
$\{i,j\}$, $1\leq i < j \leq n$.

To see this, for $1 \leq i,j\leq n$ write
$\vec{d}_{ij}^{\perp\betatrue}\defeq \vec{x}_{i}^{\perp
  \mathbf{x}\betatrue} - \vec{x}_{j}^{\perp \mathbf{x}\betatrue}$,
noting that
$\vec{d}_{ij}^{\perp\betatrue}=\vec{x}_{i} - \vec{x}_{j}$ for
pairs $\{i,j\}$ that are perfectly matched for the index. Observe that
\begin{align*}
  \bkt[2]{\vec{d}_{ij}^{\perp\betatrue}(\tilde\beta -
  \betatrue)}^{2} &=\bkt[2]{\vec{d}_{ij}^{\perp\betatrue}(\tilde\beta -
  \betatrue)}'\bkt[2]{\vec{d}_{ij}^{\perp\betatrue}(\tilde\beta -
  \betatrue)}\\
  &=(\tilde\beta -
    \betatrue)'\bkt[1]{\vec{d}_{ij}^{\perp\betatrue\prime}\vec{d}_{ij}^{\perp\betatrue}}(\tilde\beta -
    \betatrue).
\end{align*}
Summing over perfectly matched pairs, for the first scenario,
or all ${n \choose 2}$ possible pairs, for the second, and in either case
letting $n_{p}$ denote the number of pairs contributing to the sum, we have
\begin{align}
  \frac{1}{n_{p}}\sum \bkt[2]{\vec{d}_{ij}^{\perp\betatrue}(\tilde\beta -
    \betatrue)}^{2} &=   (\tilde\beta -
    \betatrue)'\bbkt[1]{n_{m}^{-1}\sum \vec{d}_{ij}^{\perp\betatrue\prime}\vec{d}_{ij}^{\perp\betatrue}}(\tilde\beta -
                      \betatrue) \nonumber \\
  &= (\tilde\beta -
    \betatrue)' \bkt[1]{2S^{\perp \betatrue}}(\tilde\beta -
    \betatrue) \label{eq:16}\\
  & = \fip{2 S^{\perp \betatrue}}{(\tilde\beta -
    \betatrue) (\tilde\beta -
    \betatrue)'}\label{eq:6},
\end{align}
where \eqref{eq:16} invokes the U-statistic representation of
covariance, $(n-1)^{-1}\sum_{i=1}(w_{i}- \bar{w})(v_{i}-\bar{v}) = {n
  \choose 2}^{-1}\sum_{i=1}^{n-1}\sum_{j=i+1}^{n}\frac{1}{2}(w_{i}-w_{j})(v_{i}-v_{j})$,  and \eqref{eq:6} uses the sum of
elementwise products $\fip{M}{N}$ of matrices $M$ and $N$ to re-express
\eqref{eq:16}.  Accordingly
\begin{equation*}
    \EE\bbkt[3]{\frac{1}{n_{p}}\sum \bkt[2]{\vec{d}_{ij}^{\perp\betatrue}(\tilde\beta -
    \betatrue)}^{2}} = \fip{2S^{\perp \betatrue}}{C_{n}}.
\end{equation*}

A similar argument reveals the PIC SE's alternate interpretation as
root mean square of pairwise distances
$|\vec{d}_{ij}^{\perp\hat\beta}\hat{C}^{1/2}|_{2}=\bkt[1]{\vec{d}_{ij}^{\perp\hat\beta}\hat{C}\vec{d}_{ij}^{\perp\hat\beta\prime}}^{1/2}$
over pairs $\{i,j\}$. Invoking in turn the cyclic
property of the matrix trace, the definition of the Frobenius matrix
product $\fip{\cdot}{\cdot}$, the Frobenius product's bilinearity, and
the U-statistic representation of sample covariance:
\begin{align} \nonumber
                {\sum
                                                                     {\vec{d}_{ij}^{\perp\hat\beta}\hat{C}{d}_{ij}^{\perp\hat\beta\prime}}}
 &=
    {\sum
                                                                     \operatorname{tr}\bkt[1]{{\vec{d}_{ij}^{\perp\hat\beta}\hat{C}\vec{d}_{ij}^{\perp\hat\beta \prime}}}}
 =
    {\sum \operatorname{tr}\bkt[1]{{\vec{d}_{ij}^{\perp\hat\beta \prime}\vec{d}_{ij}^{\perp\hat\beta}\hat{C}}}}\\
  &= \nonumber 
    {\sum\fip{\vec{d}_{ij}^{\perp\hat\beta \prime}\vec{d}_{ij}^{\perp\hat\beta}}{\hat{C}}}
    = \fip{\sum\vec{d}_{ij}^{\perp\hat\beta \prime}\vec{d}_{ij}^{\perp\hat\beta}}{\hat{C}}
    = n_{p}\fip{2S^{\perp\hat\beta}}{\hat{C}}.
\end{align}
When the average is only over $\{i,j\}$ that are
perfectly matched for the index,
$\bkt[1]{\vec{d}_{ij}^{\perp\hat\beta}\hat{C}{d}_{ij}^{\perp\hat\beta\prime}}^{1/2}$
approximates $\operatorname{s.d.}[(\vec{x}_{i}-\vec{x}_{j})\hat\beta]$,
because $\vec{x}_{i}\betatrue = \vec{x}_{j}\betatrue$ means that
$\vec{d}_{ij}^{\perp\hat\beta}$ approximates $\vec{x}_{i} - \vec{x}_{j}$. The
interpretation 
as a pairwise covariate distance, within the
orthocomplement in $\mathbf{x}$ of $\mathbf{x}\hat\beta$ and after rescaling by $\hat{C}^{1/2}$,
is available both for the perfect pairing thought experiment and also
when the mean is over all $\{i,j\}$, $1\leq i<j\leq n$.

These arguments rely implicitly on
\ref{A-centering}--\ref{A-consistencyrates}, via Proposition
\ref{prop:picse-consistency}, 
but do not call for Gaussian covariates, nor for
boundedness of covariates' sub-gaussian norms. By the same token, none
admit extensions offering maximum, rather than average, PIC error
control, as is necessary for asymptotically exact matching.

\subsection{Caliper refinement with attention to index error distances} \label{sec:benchm-index-s.e.s}
It is intuitive that with covariates drawn from
heavy tailed distributions there may be PICs exceeding the PIC SE
by factors well above $z^{*}_{n_{p}}=\sqrt{\bkt{2 \log 2 n_{p}}}$, in contrast to the
Gaussian situation described in
Proposition~\ref{prop:btilde-subgnorm}. Writing $\delta\bkt[2]{x}$ for
the distribution placing point mass at $x$,  heavier tails on the
covariate mean heavier tails on the empirical distributions
${n \choose 2}^{-1}\sum_{i=1}^{n-1}\sum_{j=i+1}^{n}
\delta\bkt[2]{\operatorname{s.d.}\bkt[2]{(\vec{x}_{i}-\vec{x}_{j})\tilde\beta}}$, 
${n \choose 2}^{-1}\sum_{i=1}^{n-1}\sum_{j=i+1}^{n}
\delta\bkt[2]{(\vec{x}_{i}-\vec{x}_{j})(\tilde\beta - \betatrue)}$,
and in turn
${n \choose 2}^{-1}\sum_{i=1}^{n-1}\sum_{j=i+1}^{n}
\delta\bkt[2]{(\vec{x}_{i}-\vec{x}_{j})(\hat\beta - \betatrue)}$.

Fortunately, estimates $|(\vec{x}_{i}-\vec{x}_{j})\hat{C}^{1/2}|_{2}$
of standard deviations
$\operatorname{s.d.}(i,j) =
|(\vec{x}_{i}-\vec{x}_{j})\hat{C}^{1/2}|_{2}$ are available at the
time of matching: one can simply avoid pairings $\{i,j\}$ for which
$|(\vec{x}_{i}-\vec{x}_{j})\hat{C}^{1/2}|_{2}$ is too large.  Call
$|(\vec{x}_{i}-\vec{x}_{j})\hat{C}^{1/2}|_{2}$ the \textit{index error
  distance} separating $i$ from
$j$. Proposition~\ref{prop:stderr-consist} says index error distances
can estimate pairwise index sampling variabilities uniformly well,
even for models of sub-$\sqrt{n}$ dimension if the estimator $\hat{C}$
is appropriately chosen.

\begin{prop} \label{prop:stderr-consist} 
  \textit{i.}  Let
  $\psi(r, \vec{x}, \beta)$ be the
  gradient of a likelihood function governing the conditional
  distribution of $R$ given $\vec{X}$, with $\hat{C}_{n} =
  n^{-1}\hat{A}_{n}^{-1}$. Under
  \ref{A-centering}--\ref{A-consistencyrates}, 
  \begin{equation}\label{eq:1}
    \left|\bbkt[1]{1 -
        \frac{|\vec{x}_{i}\hat{C}^{1/2}|_{2}}{|\vec{x}_{i}{C}^{1/2}|_{2}}:
      i\leq n}\right|_{\infty},\,
    \left|\bbkt[2]{1 -
        \frac{|(\vec{x}_{i}-\vec{x}_{j})\hat{C}^{1/2}|_{2}}{|(\vec{x}_{i}-\vec{x}_{j}){C}^{1/2}|_{2}}:
      i, j\leq n }\right|_{\infty}
    \stackrel{P}{\rightarrow} 0.
  \end{equation}
  (Here $0/0$ is taken to be 1.)
  \begin{enumerate} \setcounter{enumi}{1}
\item \label{item:stderr-consist-no-fullrankB}Let $\{\epsilon_{n}: n\}$ satisfy
  $\epsilon_{n}^{-1} =
  O\bkt[2]{n/\max(p, \log n)}$. Under \ref{A-centering}--\ref{A-boundedXes} with
  $\hat{C}_{n}=n^{-1}\hat{A}_{n}^{-1}\hat{B}_{n}\hat{A}_{n}^{-1}$,
  \begin{equation*}
    \label{eq:9}
    \left|\bbkt[2]{1 - \frac{%
        \max\bkt[1]{\epsilon_{n}^{1/2}, |\vec{x}_{i}\hat{C}^{1/2}|_{2}}%
      }{%
        \max\bkt[1]{\epsilon_{n}^{1/2}, |\vec{x}_{i}{C}^{1/2}|_{2}}
      } : i}\right|_{\infty}
    \stackrel{P}{\rightarrow} 0.
  \end{equation*}
  If $\mathbf{x}$ and $\mathcal{E}_{n} \subseteq \{\{i, j\}:
  1\leq i,j\leq n\}$ satisfy $|(|\vec{x}_{i} - \vec{x}_{j}|_{2}^{2}:
  \{i, j\} \in \mathcal{E}_{n})|_{\infty} = O_{P}(\max(p, \log n))$,
  then under  \ref{A-centering}--\ref{A-paramrates}
  \begin{equation*}
        \left|\bbkt[3]{1 - \frac{%
        \max\bkt[2]{\epsilon_{n}^{1/2}, |(\vec{x}_{i}-\vec{x}_{j})\hat{C}^{1/2}|_{2}}%
      }{%
        \max\bkt[2]{\epsilon_{n}^{1/2},
          |(\vec{x}_{i}-\vec{x}_{j}){C}^{1/2}|_{2}}
      }:
  \{i, j\} \in \mathcal{E}_{n}}\right|_{\infty}
    \stackrel{P}{\rightarrow} 0.
  \end{equation*}
\item \label{item:stderr-consist-with-fullrankB} Under \ref{A-centering}--
  \ref{A-fullrankB}, \eqref{eq:1} holds with 
  $\hat{C}_{n}=n^{-1}\hat{A}_{n}^{-1}\hat{B}_{n}\hat{A}_{n}^{-1}$.
\end{enumerate}
\end{prop}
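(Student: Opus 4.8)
The plan is to reduce every assertion to the control of the quadratic forms $\vec{v}\hat{C}_{n}\vec{v}'$ and $\vec{v}C_{n}\vec{v}'$, with $\vec{v}=\vec{x}_{i}$ or $\vec{v}=\vec{x}_{i}-\vec{x}_{j}$. For the untruncated ratios of \eqref{eq:1} this is immediate: with the convention $0/0=1$, each such ratio equals $(\vec{v}\hat{C}_{n}\vec{v}'/\vec{v}C_{n}\vec{v}')^{1/2}$, and since $t\mapsto t^{1/2}$ is continuous at $1$ it suffices to drive $\vec{v}\hat{C}_{n}\vec{v}'/\vec{v}C_{n}\vec{v}'$ to $1$ uniformly over the relevant index set; writing the deficit as $\vec{v}(\hat{C}_{n}-C_{n})\vec{v}'$, the question is whether $|\vec{v}(\hat{C}_{n}-C_{n})\vec{v}'|$ is negligible relative to $\vec{v}C_{n}\vec{v}'$. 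The truncated ratios of part~\ref{item:stderr-consist-no-fullrankB} will be reduced to the same kind of control below. The three parts differ only in how the denominator $\vec{v}C_{n}\vec{v}'$, or its truncated surrogate, can be bounded below.

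For part~\textit{i} the likelihood structure gives $A_{n}=B_{n}$, hence $C_{n}=n^{-1}A_{n}^{-1}$ and $\hat{C}_{n}=n^{-1}\hat{A}_{n}^{-1}$, so the $n^{-1}$ cancels and the ratio is $\vec{v}\hat{A}_{n}^{-1}\vec{v}'/\vec{v}A_{n}^{-1}\vec{v}'$. Since $A_{n}^{-1}$ is positive definite, $\vec{v}A_{n}^{-1}\vec{v}'\geq |A_{n}|_{2}^{-1}|\vec{v}|_{2}^{2}$, while the deficit is at most $|\hat{A}_{n}^{-1}-A_{n}^{-1}|_{2}|\vec{v}|_{2}^{2}$; thus the ratio differs from $1$ by at most $|A_{n}|_{2}\,|\hat{A}_{n}^{-1}-A_{n}^{-1}|_{2}$, a bound free of $\vec{v}$ and therefore uniform over single indices and pairs at once. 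Lemma~\ref{lem:C-rate} (and the bounds behind it) give $|A_{n}|_{2}=|B_{n}|_{2}=O(1)$, and Proposition~\ref{lem:ChatC} gives $|\hat{A}_{n}^{-1}-A_{n}^{-1}|_{2}=o_{P}(1)$ under \ref{A-centering}--\ref{A-consistencyrates}, closing part~\textit{i}. Part~\ref{item:stderr-consist-with-fullrankB} is the same computation for the sandwich $\hat{C}_{n}$: full rank (\ref{A-fullrankB}) with $|A_{n}|_{2}=O(1)$ forces $\lambda_{\min}(nC_{n})=|A_{n}'B_{n}^{-1}A_{n}|_{2}^{-1}\geq(|A_{n}|_{2}^{2}|B_{n}^{-1}|_{2})^{-1}\geq c>0$, so $\vec{v}C_{n}\vec{v}'\geq cn^{-1}|\vec{v}|_{2}^{2}$, whereas Proposition~\ref{lem:ChatC} (now invoking \ref{A-paramrates} and \ref{A-boundedXes}) gives $|\hat{C}_{n}-C_{n}|_{2}=o_{P}(n^{-1})$. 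The ratio then differs from $1$ by at most $|\hat{C}_{n}-C_{n}|_{2}/(cn^{-1})=o_{P}(1)$, again uniformly in $\vec{v}$.

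Part~\ref{item:stderr-consist-no-fullrankB} is the case where $C_{n}$ may be rank-deficient, so no lower eigenvalue bound is available and the floor at $\epsilon_{n}^{1/2}$ becomes essential. Put $a(\vec{v})=|\vec{v}C_{n}^{1/2}|_{2}$ and $\hat{a}(\vec{v})=|\vec{v}\hat{C}_{n}^{1/2}|_{2}$. The elementary inequality $|\max(t,\hat{a})-\max(t,a)|\leq|\hat{a}-a|$ together with $\max(\epsilon_{n}^{1/2},a)\geq\epsilon_{n}^{1/2}$ shows the truncated ratio differs from $1$ by at most $|\hat{a}(\vec{v})-a(\vec{v})|/\epsilon_{n}^{1/2}$, so it suffices to show $\sup_{\vec{v}}|\hat{a}(\vec{v})-a(\vec{v})|=o_{P}(\epsilon_{n}^{1/2})$ over the relevant set. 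Here the triangle inequality and the operator-norm square-root inequality for positive semidefinite matrices give $|\hat{a}(\vec{v})-a(\vec{v})|\leq|\vec{v}(\hat{C}_{n}^{1/2}-C_{n}^{1/2})|_{2}\leq|\hat{C}_{n}^{1/2}-C_{n}^{1/2}|_{2}\,|\vec{v}|_{2}\leq|\hat{C}_{n}-C_{n}|_{2}^{1/2}\,|\vec{v}|_{2}$. Proposition~\ref{lem:ChatC} makes the matrix factor $o_{P}(n^{-1/2})$; the covariate factor is $\max_{i}|\vec{x}_{i}|_{2}=O(\max(p,\log n)^{1/2})$ by \ref{A-boundedXes} for the single-index claim, and $\max_{\{i,j\}\in\mathcal{E}_{n}}|\vec{x}_{i}-\vec{x}_{j}|_{2}=O_{P}(\max(p,\log n)^{1/2})$ by the hypothesis on $\mathcal{E}_{n}$ for the pairwise claim. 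The condition $\epsilon_{n}^{-1}=O(n/\max(p,\log n))$ is exactly what makes $\epsilon_{n}^{1/2}$ of order $(\max(p,\log n)/n)^{1/2}$, so the product $o_{P}(n^{-1/2})\cdot O_{P}(\max(p,\log n)^{1/2})$ is $o_{P}(\epsilon_{n}^{1/2})$, as needed. The point of passing to $\mathcal{E}_{n}$ is precisely that, once \ref{A-boundedXes} is dropped, the hypothesis on $\mathcal{E}_{n}$ supplies the covariate-norm control that \ref{A-boundedXes} had provided, but only along the pair-difference directions that actually enter matching.

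I expect part~\ref{item:stderr-consist-no-fullrankB} to be the main obstacle, for two reasons. First, rank-deficiency of $C_{n}$ rules out the quadratic-form route used in the other parts, and the conversion of the available $o_{P}(n^{-1})$ operator-norm rate on $\hat{C}_{n}-C_{n}$ into the $o_{P}(n^{-1/2})$ rate on the square roots rests entirely on the operator-norm inequality $|\hat{C}_{n}^{1/2}-C_{n}^{1/2}|_{2}\leq|\hat{C}_{n}-C_{n}|_{2}^{1/2}$, which must be applied in operator norm to survive the absence of a positive smallest eigenvalue. Second, the scale-matching is tight: the argument works only because $\epsilon_{n}$ is no smaller than a multiple of $\max(p,\log n)/n$, and a smaller floor would leave $|\hat{a}-a|/\epsilon_{n}^{1/2}$ possibly bounded away from $0$. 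A subsidiary check for the pairwise claim is that the sandwich rate $|\hat{C}_{n}-C_{n}|_{2}=o_{P}(n^{-1})$ be secured under \ref{A-centering}--\ref{A-paramrates} with covariate control entering through $\mathcal{E}_{n}$ rather than through a global bound on $\max_{i}|\vec{x}_{i}|_{2}$; verifying this is the step that most directly exploits the restriction to eligible pairs.
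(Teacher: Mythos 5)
Your proposal is correct, and for parts \textit{i} and \ref{item:stderr-consist-with-fullrankB} it is essentially the paper's own argument: the paper likewise reduces the ratio to the quadratic forms, writes $1 - v'\hat{C}_{n}v/v'C_{n}v = \bkt[2]{v'(C_{n}-\hat{C}_{n})v/v'v}\cdot\bkt[2]{v'v/v'C_{n}v}$, and bounds this by $|C_{n}-\hat{C}_{n}|_{2}\,|C_{n}^{-1}|_{2} = o_{P}(n^{-1})\cdot O(n)$, with $|C_{n}^{-1}|_{2}=O(n)$ coming from Lemma~\ref{lem:fullrankAinv} in case \textit{i} and from \ref{A-fullrankB} in case \ref{item:stderr-consist-with-fullrankB} --- exactly your $|A_{n}|_{2}\,|\hat{A}_{n}^{-1}-A_{n}^{-1}|_{2}$ and $\lambda_{\min}(nC_{n})\geq c$ bounds in different notation. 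The genuine divergence is in part~\ref{item:stderr-consist-no-fullrankB}. The paper stays with the \emph{squared} quantities: it uses the elementary inequality $|a\vee\epsilon_{n} - b\vee\epsilon_{n}|\leq|a-b|$ on $v'C_{n}v$ versus $v'\hat{C}_{n}v$, divides by $v'v$, and multiplies by $\sup v'v/(v'C_{n}v\vee\epsilon_{n})=O_{P}(n)$, so that $o_{P}(n^{-1})\cdot O_{P}(n)=o_{P}(1)$; the stated ratio of norms then converges because the ratio of their squares does. You instead work with the norms directly, converting the $o_{P}(n^{-1})$ operator-norm rate into an $o_{P}(n^{-1/2})$ rate on $\hat{C}_{n}^{1/2}-C_{n}^{1/2}$ via the positive-semidefinite square-root perturbation bound $|\hat{C}_{n}^{1/2}-C_{n}^{1/2}|_{2}\leq|\hat{C}_{n}-C_{n}|_{2}^{1/2}$ and dividing by $\epsilon_{n}^{1/2}$. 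Both routes exploit the same scale matching of $\epsilon_{n}$ to $\max(p,\log n)/n$ and both are valid; yours buys a direct treatment of the stated ratio at the cost of importing a nontrivial matrix-analytic fact (the operator square-root inequality for PSD matrices, which you should cite, e.g.\ as in Bhatia's Theorem X.1.1), whereas the paper's version is entirely elementary. One further point in your favor: the loose end you flag at the close --- that the pairwise claim over $\mathcal{E}_{n}$ is stated under \ref{A-centering}--\ref{A-paramrates} without \ref{A-boundedXes}, even though the $o_{P}(n^{-1})$ sandwich rate is established in Proposition~\ref{lem:ChatC} (via Lemma~\ref{lem:BhatbetatrueB}) using \ref{A-boundedXes} --- is real, and the paper's own proof simply invokes that rate without addressing it; neither you nor the paper closes this, so it is a shared gap rather than a defect of your argument alone.
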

We focus on situations conforming to the hypotheses of (i) or of
(iii), warranting uniform convergence \eqref{eq:1} of index error
distances.

Let us calibrate sizes of index error distances with reference to
Section~\ref{sec:mean-worst-case-Gaussian}'s perfect pairing thought experiment. 
Proposition~\ref{prop:chaosbound} adapts extant results about Gaussian
chaos to characterize that setting's maximum of 
$|(\vec{X}_{i}-\vec{X}_{j}){C}^{1/2}|_{2}$.

\begin{prop} \label{prop:chaosbound}
  Let $\vec{X}_{i}$, $i\leq n$, be independent
  $\mathrm{MVN}(\mu, \Sigma)$. Let $C$ be a second positive semidefinite matrix of the same
  dimension as $\Sigma$, let $\mathcal{E} \subseteq \{\{i, j\}:
  1\leq i\neq j\leq n\}$ and let $\card[0]{\mathcal{E}}\defeq \card{\mathcal{E}}$.
Then
  \begin{multline}
    \bbkt[3]{\EE\bbkt[2]{\big|\{|(\vec{X}_{i} - \vec{X}_{j})C^{1/2}|_{2}
    : \{i, j\} \in \mathcal{E}\}\big|_{\infty}^{2}}}^{1/2}
    \leq  \\ \fip{2\Sigma}{C}^{1/2}\bbkt[2]{1 +
      \bbkt[1]{\frac{\log \card[0]{\mathcal{E}}}{p_{[\Sigma^{1/2}C\Sigma^{1/2}]}}}^{1/2}},     \label{eq:40}
    \end{multline}
    where $p_{[M]}$ denotes {intrinsic dimension},
    $\operatorname{tr}(M)/|M|_{2}$, for positive semidefinite $M$. 
\end{prop}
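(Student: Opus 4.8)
The plan is to standardize the Gaussians, identify the quadratic form governing each pair, and then convert a maximal inequality over $\mathcal{E}$ into the intrinsic-dimension form of \eqref{eq:40}. First I would write $\vec{X}_{i} = \mu + \vec{Z}_{i}\Sigma^{1/2}$ with $\vec{Z}_{i}$ independent $\mathrm{MVN}(0, I)$ row vectors, so that $\vec{X}_{i} - \vec{X}_{j} = (\vec{Z}_{i} - \vec{Z}_{j})\Sigma^{1/2}$ and $Y_{ij} \defeq |(\vec{X}_{i} - \vec{X}_{j})C^{1/2}|_{2} = |(\vec{Z}_{i} - \vec{Z}_{j})\Sigma^{1/2}C^{1/2}|_{2}$. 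Setting $M = \Sigma^{1/2}C\Sigma^{1/2}$, a one-line moment computation gives $\EE Y_{ij}^{2} = \operatorname{tr}(2\Sigma C) = 2\operatorname{tr}(M) = \fip{2\Sigma}{C}$, which isolates the leading factor on the right of \eqref{eq:40}; by Jensen $\EE Y_{ij} \leq (\EE Y_{ij}^{2})^{1/2} = \fip{2\Sigma}{C}^{1/2}$, and this term will supply the ``$1$'' in the bracket.

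The key step is to control the fluctuation of each $Y_{ij}$ about its (pair-independent) mean at the scale set by the operator norm $|M|_{2}$. Here I would invoke the extant Gaussian-chaos machinery: $Y_{ij}^{2} = (\vec{X}_{i} - \vec{X}_{j})C(\vec{X}_{i} - \vec{X}_{j})'$ is a second-order Gaussian chaos, and equivalently $g \mapsto |g\Sigma^{1/2}C^{1/2}|_{2}$ is a Lipschitz function of the underlying standard Gaussians with Lipschitz constant a multiple of $|M|_{2}^{1/2}$. Gaussian concentration then makes $Y_{ij} - \EE Y_{ij}$ sub-gaussian with $\|Y_{ij} - \EE Y_{ij}\|_{\psi_{2}}$ bounded by a multiple of $|M|_{2}^{1/2}$. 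The decisive structural point is that the fluctuation scale enters only through $|M|_{2}$ while the mean enters through $\operatorname{tr}(M)$; their ratio is exactly the intrinsic dimension $p_{[M]} = \operatorname{tr}(M)/|M|_{2}$.

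To assemble the maximal bound I would use that, because $\EE Y_{ij}$ is the same constant for every pair, pointwise $\max_{\{i,j\}\in\mathcal{E}}Y_{ij} = \EE Y_{ij} + \max_{\{i,j\}\in\mathcal{E}}(Y_{ij} - \EE Y_{ij})$; the triangle inequality in $L^{2}$ then reduces the problem to adding $\fip{2\Sigma}{C}^{1/2}$ to the $L^{2}$ norm of the maximum of the $\card[0]{\mathcal{E}}$ centered sub-gaussian variables $Y_{ij} - \EE Y_{ij}$. A maximal inequality of the type recorded in \eqref{eq:3} — which importantly holds for \emph{dependent} summands, as pairs sharing an index are correlated — bounds this by a multiple of $|M|_{2}^{1/2}(\log\card[0]{\mathcal{E}})^{1/2}$. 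Using $|M|_{2} = \operatorname{tr}(M)/p_{[M]}$ and $2\operatorname{tr}(M) = \fip{2\Sigma}{C}$, that quantity is a multiple of $\fip{2\Sigma}{C}^{1/2}(\log\card[0]{\mathcal{E}}/p_{[M]})^{1/2}$, which is the second bracketed term in \eqref{eq:40}.

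The main obstacle is the constant in this last step: I need the \emph{second} moment of the maximum, not merely its expectation, and I need the dependence on $|M|_{2}$ sharp enough that the fluctuation contribution lands at $\fip{2\Sigma}{C}^{1/2}(\log\card[0]{\mathcal{E}}/p_{[M]})^{1/2}$ rather than at a looser multiple. This is where the crude Lipschitz-plus-union-bound argument is wasteful, and where the sharp Gaussian-chaos (Hanson--Wright / Laurent--Massart) tail for $Y_{ij}^{2}$ must be integrated against a union bound over $\mathcal{E}$ and optimized in the exponential-moment parameter. Tracking the interplay between the $\operatorname{tr}(M)$ contribution (the bulk, dominating when $\log\card[0]{\mathcal{E}} \ll p_{[M]}$) and the $|M|_{2}$ contribution (the fluctuation, dominating when $\log\card[0]{\mathcal{E}} \gg p_{[M]}$) through that optimization is the delicate part, and it is precisely what the intrinsic dimension $p_{[M]}$ is packaging.
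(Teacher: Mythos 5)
Your plan correctly identifies the structure of the bound (mean scale $\operatorname{tr}(M)$ vs.\ fluctuation scale $|M|_{2}$, with $p_{[M]}$ as their ratio) and names the right family of tools, but as written it does not deliver \eqref{eq:40}, and the step you defer as ``the delicate part'' is exactly where the proof lives. Your primary route --- Borell--TIS concentration of the \emph{norm} $Y_{ij}$ around $\EE Y_{ij}$, followed by an $L^{2}$ triangle inequality and a sub-gaussian maximal inequality --- provably overshoots the stated constant. With $g=(\vec{Z}_{i}-\vec{Z}_{j})/\surd 2$ standard Gaussian, $Y_{ij}$ is $\surd 2\,|M|_{2}^{1/2}$-Lipschitz in $g$, so even with the sharp Gaussian constants the centered maximum contributes $2\bkt[1]{|M|_{2}\log \card[0]{\mathcal{E}}}^{1/2}=\surd 2\,\fip{2\Sigma}{C}^{1/2}\bkt[1]{\log\card[0]{\mathcal{E}}/p_{[M]}}^{1/2}$, i.e.\ a factor $\surd 2$ too large in the second bracketed term. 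Your Jensen step $\EE Y_{ij}\leq \fip{2\Sigma}{C}^{1/2}$ supplies the ``1'' only because it goes the cheap direction; the real difficulty is that the decomposition $\max Y = \EE Y + \max(Y-\EE Y)$ does not interact with the square root in a way that recovers the constant $1$.

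The paper avoids this by never centering the norm at all. It bounds $\EE\bkt[2]{\max_{\{i,j\}} Y_{ij}^{2}}$ directly (legitimate since $\max_{k}Y_{k}^{2}=(\max_{k}Y_{k})^{2}$ for nonnegative $Y_{k}$), diagonalizes so that $Y_{ij}^{2}=\sum_{k}W_{ijk}^{2}$ with independent Gaussian coordinates of variances $v_{k}$ (the eigenvalues of $2\Sigma^{1/2}C\Sigma^{1/2}$), and shows via the Laurent--Massart MGF computation that $\sum_{k}(W_{ijk}^{2}-v_{k})$ is sub-gamma with variance factor $\sum_{k}v_{k}^{2}\leq(\max_{k}v_{k})(\sum_{k}v_{k})$ and scale factor $\max_{k}v_{k}$. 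The sub-gamma maximal inequality (Boucheron--Lugosi--Massart, Cor.~2.6) then yields a bound of the form $2a^{2}+2ab+b^{2}$ with $a^{2}=\fip{\Sigma}{C}$ and $b^{2}=2|M|_{2}\log\card[0]{\mathcal{E}}$; the identity $2a^{2}+2ab+b^{2}=a^{2}\bkt[2]{1+(1+b/a)^{2}}$ and the elementary bound $\bkt[2]{1+(1+x)^{2}}^{1/2}\leq \surd 2 + x$ then produce the constant $1$ exactly. Note in particular that the fluctuation of the \emph{squared} norm does \emph{not} enter only through $|M|_{2}$: the variance factor carries a $\operatorname{tr}(M)$ as well, and it is precisely that cross term $2ab$ which makes the completion of the square work. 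Your fallback (``integrate the Hanson--Wright/Laurent--Massart tail against a union bound and optimize'') is the right instinct and is essentially what the cited BLM corollary packages, but without the max-of-squares reduction and the two algebraic steps above, the bound does not land at \eqref{eq:40}.
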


Proposition~\ref{prop:chaosbound} is proved in
Appendix~\ref{sec:proofs-sect-mean-worst-case-Gaussian}. With $\Sigma=S^{\perp \betatrue}$
and $C=C_{n}$, it explicitly bounds the worst-case pairwise distance
$|(\vec{X}_{i} - \vec{X}_{j})C^{1/2}|_{2}$ within the perfect-pairing thought experiment of
Section~\ref{sec:mean-worst-case-Gaussian}.  To contain PIC errors of
actual experiments to a similar level, \Iwe/ recommend the
match-eligibility requirement that  
\begin{equation}\label{eq:66}
|(\vec{x}_{i} - \vec{x}_{j})\hat{C}^{1/2}|_{2} \leq  \fip{2S^{\perp \hat\beta}}{\hat{C}}^{1/2}\bbkt[2]{2 +
      \bbkt[1]{\frac{\log \min(n_{0}, n_{1})}{p-1}}^{1/2}},
\end{equation}
as a complement to the $z^{*}_{\min(n_0,n_1)}\fip{2S^{\perp \hat\beta}}{\hat{C}}^{1/2}$ limit on PICs $|(\vec{x}_{i} -
\vec{x}_{j})\hat\beta|$ that was recommended in Section~\ref{sec:pic-se-calipers}.
The heuristic by which Proposition~\ref{prop:chaosbound} supports
constraint \eqref{eq:66}, to be explained presently, more simply
suggests the stricter cap on $|(\vec{x}_{i} - \vec{x}_{j})\hat{C}^{1/2}|_{2}$ of
\begin{equation} \label{eq:24}
  \fip{2S^{\perp \hat\beta}}{\hat{C}}^{1/2}\bbkt[2]{1 +
      \bbkt[1]{\frac{\log \min(n_{0}, n_{1})}{p-1}}^{1/2}};
\end{equation}
but it will subsequently be shown that \eqref{eq:66} together with a
softer penalty on index error distances respecting \eqref{eq:66} while
exceeding \eqref{eq:24}
is sufficient for present purposes.

To relate \eqref{eq:40} to \eqref{eq:24}, first recall that
$\min(n_{0}, n_{1})$ is the size of the perfect-pairing thought
experiment (Sec.~\ref{sec:mean-worst-case-Gaussian}) and $\fip{2S^{\perp
    \hat\beta}}{\hat{C}}$ is consistent for $\fip{2S^{\perp
    \betatrue}}{{C}_{n}}$ (Prop.~\ref{prop:picse-consistency}).
In general $0 \leq p_{[\Sigma^{1/2}C\Sigma^{1/2}]}\leq p$, by definition \citep[see also][\S 7.8]{tropp15,vershynin18HDPbook}.  In the
special cases that $(R, \vec{X})$ has linear discriminant
structure with $\operatorname{Cor}(\vec{X})$ known,
or that $R$ is linear in $\vec{X}$, with $\vec{X}$ Gaussian and
$\betatrue$ estimated accordingly in either case, intrinsic and
extrinsic dimensions coincide:
$p_{[\Sigma^{1/2}C_{n}\Sigma^{1/2}]} = p$ or $p-1$, depending as
$\Sigma=S^{(x)}$ or $S^{\perp
  \betatrue}$. 
If supposed to contain $\min(n_{0}, n_{1})$ distinct pairs $\{i,j\}$
for which $i\neq j$ but $\vec{x}_{i}\betatrue = \vec{x}_{j}\betatrue$,
then either of these Gaussian-$\vec{X}$ special cases closely models
Section~\ref{sec:mean-worst-case-Gaussian}'s notional perfect pairing,
with $\Sigma = S^{\perp \betatrue}$, and \eqref{eq:24} estimates the
expected maximum covariate distance
$|(\vec{x}_{i} - \vec{x}_{j}){C}^{1/2}|_{2}$ among perfect pairs.  At
the same time, an $|(\vec{x}_{i} - \vec{x}_{j})\hat{C}^{1/2}|_{2}$
limit of form \eqref{eq:24} should rarely exclude perfect pairs
because of separation in directions orthogonal to the index, since
\eqref{eq:40} approximates such separations' expected maximum
from above.

Regardless of what distribution the covariate may have been drawn
from, limits \eqref{eq:24} or \eqref{eq:66} on covariate distances
$|(\vec{x}_{i} - \vec{x}_{j})\hat{C}^{1/2}|_{2}$ also engender limits on
PIC errors $|(\vec{x}_{i} - \vec{x}_{j})(\hat\beta - \betatrue)|$.  In
part this is because
$(\vec{x}_{i}-\vec{x}_{j})(\tilde\beta-\betatrue)$ is sub-gaussian if
$\tilde\beta$ is, and 
is $\mathrm{N}\bkt[2]{0, |(\vec{x}_{i}-\vec{x}_{j})C_{n}^{1/2}|_{2}^{2}}$
if $\tilde\beta \sim \mathrm{MVN}(\betatrue, C_{n})$.
Proposition~\ref{prop:max-E-PIC-errors}, stated here without proof,
collects the relevant facts reviewed in
Section~\ref{sec:sub-gaussian-random}.

\begin{prop} \label{prop:max-E-PIC-errors}
  Let $\mathcal{E} \subseteq \{\{i, j\}:
  1\leq i, j\leq n\}$ satisfy $|(\vec{x}_{i} -
  \vec{x}_{j})C_{n}^{1/2}|_{2} >0$ for all $\{i, j\} \in
  \mathcal{E}$, and let $\card[0]{\mathcal{E}} =\card{\mathcal{E}}$.
  If $\tilde\beta \sim \mathrm{MVN}(\betatrue,
  C_{n})$,
\begin{equation*}
    \EE
    \bbkt[2]{\left|\left\{
      \frac{(\vec{x}_{i}-\vec{x}_{j})(\tilde\beta-\betatrue)}{%
        |(\vec{x}_{i} - \vec{x}_{j})C_{n}^{1/2}|_{2}}:
      \{i, j\} \in \mathcal{E}\right\}\right|_{\infty}}
\leq z^{*}_{\card[0]{\mathcal{E}}}=\bkt[2]{2\log \bkt[1]{2\card[0]{\mathcal{E}}}}^{1/2}
  \end{equation*}
  and \marginpar{Remove me?}
  \begin{equation}
    \label{eq:42}
    \EE \bbkt[2]{\left|\left\{
          \frac{(\vec{x}^{\perp x\betatrue}_{i}-\vec{x}^{\perp
              x\betatrue}_{j})(\tilde\beta-\betatrue)}{%
            |(\vec{x}^{\perp x\betatrue}_{i} - \vec{x}^{\perp
      x\betatrue}_{j})C_{n}^{1/2}|_{2}
          }: \{i, j\} \in \mathcal{E}\right\}\right|_{\infty}}
    \leq z^{*}_{\card[0]{\mathcal{E}}}.
  \end{equation}
If $\tilde\beta-\betatrue$ is non-Normal but sub-gaussian with
$\|\tilde\beta-\betatrue\|_{\psi_{2}}$ bounded, these expected maximums
continue to be of order $(\log \card[0]{\mathcal{E}})^{1/2}$.
\end{prop}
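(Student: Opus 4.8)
The plan is to observe that, after the indicated normalization, each member of the family is marginally standard Normal (in the Gaussian case), so that both displays reduce to the maximal inequality for Normals recorded in Section~\ref{sec:sub-gaussian-random}. Fix $\{i,j\}\in\mathcal{E}$, set $\vec{d}=\vec{x}_{i}-\vec{x}_{j}$, and write $G_{ij}=\vec{d}(\tilde\beta-\betatrue)/|\vec{d}C_{n}^{1/2}|_{2}$. Since $\tilde\beta-\betatrue\sim\mathrm{MVN}(0,C_{n})$, the scalar $\vec{d}(\tilde\beta-\betatrue)$ is a fixed linear image of a Gaussian vector, hence mean-zero Normal with variance $\vec{d}C_{n}\vec{d}'=|\vec{d}C_{n}^{1/2}|_{2}^{2}$; the standing hypothesis $|\vec{d}C_{n}^{1/2}|_{2}>0$ makes $G_{ij}$ well defined and exactly $\mathrm{N}(0,1)$. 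Thus $\{G_{ij}:\{i,j\}\in\mathcal{E}\}$ is a family of $\card[0]{\mathcal{E}}$ variables, each marginally standard Normal though in general dependent.

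The first display then follows from the Gaussian form of \eqref{eq:3}, which (Section~\ref{sec:sub-gaussian-random}) is valid for dependent as well as independent Normal variables of variance at most one: applied to the $2\card[0]{\mathcal{E}}$ signed variables $\pm G_{ij}$ it gives $\EE\max_{\{i,j\}}|G_{ij}|\leq 2^{1/2}(\log 2\card[0]{\mathcal{E}})^{1/2}=z^{*}_{\card[0]{\mathcal{E}}}$. I would stress that independence plays no role, which is precisely what licenses the bound for the overlapping pairs that arise in matching with replacement. The second display \eqref{eq:42} is verbatim the same with $\vec{d}$ replaced by the residualized difference $\vec{x}^{\perp x\betatrue}_{i}-\vec{x}^{\perp x\betatrue}_{j}$: this is still a fixed row vector, so the corresponding scalar is Normal with variance $|(\vec{x}^{\perp x\betatrue}_{i}-\vec{x}^{\perp x\betatrue}_{j})C_{n}^{1/2}|_{2}^{2}$, and the identical normalization and maximal bound apply.

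For the sub-gaussian case I would replace ``exactly $\mathrm{N}(0,1)$'' by ``sub-gaussian of order one'' and invoke the general inequality \eqref{eq:3} in place of its Gaussian sharpening. Each $G_{ij}$ remains sub-gaussian, being a fixed linear functional of the sub-gaussian vector $\tilde\beta-\betatrue$; once $\max_{\{i,j\}\in\mathcal{E}}\|G_{ij}\|_{\psi_{2}}=O(1)$ is established, \eqref{eq:3} yields $\EE\max_{\{i,j\}}|G_{ij}|\leq k_{0}\,O(1)\,(\log\card[0]{\mathcal{E}})^{1/2}$, i.e.\ order $(\log\card[0]{\mathcal{E}})^{1/2}$ as claimed. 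To obtain the uniform $O(1)$ I would whiten: with $\vec{w}=C_{n}^{-1/2}(\tilde\beta-\betatrue)$ (on the range of $C_{n}$) one has $\vec{d}(\tilde\beta-\betatrue)=(\vec{d}C_{n}^{1/2})\vec{w}$, so by the property $\|\vec{V}\gamma\|_{\psi_{2}}\leq|\gamma|_{2}\|\vec{V}\|_{\psi_{2}}$ of Section~\ref{sec:sub-gaussian-random},
\[
\|G_{ij}\|_{\psi_{2}}\leq\|\vec{w}\|_{\psi_{2}}=\|C_{n}^{-1/2}(\tilde\beta-\betatrue)\|_{\psi_{2}},
\]
a quantity free of $\{i,j\}$.

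The main obstacle is exactly this last bound: dividing by the standard deviation $|\vec{d}C_{n}^{1/2}|_{2}$ does not by itself control the heavier $\psi_{2}$ scale unless the ratio of sub-gaussian norm to standard deviation of the whitened error is uniformly bounded. I would settle it by pairing Proposition~\ref{prop:btilde-subgnorm}, which gives $\|\tilde\beta-\betatrue\|_{\psi_{2}}=O(n^{-1/2})$, with the eigenvalue bound $\lambda_{\min}(C_{n})=\Omega(n^{-1})$, equivalently $|C_{n}^{-1/2}|_{2}=O(n^{1/2})$, which holds under full-rank covariance (\ref{A-fullrankB}) together with Lemma~\ref{lem:C-rate}; then $\|C_{n}^{-1/2}(\tilde\beta-\betatrue)\|_{\psi_{2}}\leq|C_{n}^{-1/2}|_{2}\|\tilde\beta-\betatrue\|_{\psi_{2}}=O(1)$ via the matrix property $\|\vec{V}M\|_{\psi_{2}}\leq|M|_{2}\|\vec{V}\|_{\psi_{2}}$. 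When $C_{n}$ is rank-deficient one restricts $\mathcal{E}$ to pairs whose $\vec{d}$ lies in the range of $C_{n}$ --- consistent with the hypothesis $|\vec{d}C_{n}^{1/2}|_{2}>0$ --- and works with the Moore--Penrose inverse, reading the eigenvalue bound off the nonzero spectrum.
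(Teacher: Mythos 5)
Your proposal is correct, and it is essentially the argument the paper intends: the paper states Proposition~\ref{prop:max-E-PIC-errors} explicitly ``without proof,'' saying only that it collects the facts of Section~\ref{sec:sub-gaussian-random}, and your write-up is exactly that --- each normalized contrast is marginally $\mathrm{N}(0,1)$ under the Gaussian hypothesis, so the dependence-free maximal bound \eqref{eq:3} applied to the $2\card[0]{\mathcal{E}}$ signed variables gives $z^{*}_{\card[0]{\mathcal{E}}}$, and the sub-gaussian case follows from the general form of \eqref{eq:3} once the normalized variables have $O(1)$ sub-gaussian norm. Your whitening step supplying that last bound is the one genuinely substantive addition, and it is the right one (the paper itself invokes $|C_{n}^{-1/2}|_{2}=O_{P}(n^{1/2})$ under \ref{A-fullrankB} in Section~\ref{sec:contr-line-error}); note only that the ingredient $|A_{n}|_{2}=O(1)$ comes from Lemma~\ref{lem:fullrankAinv} rather than Lemma~\ref{lem:C-rate}, which gives the upper rather than the lower spectral bound on $C_{n}$.
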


Together with Prop.~\ref{prop:stderr-consist}, Proposition~\ref{prop:max-E-PIC-errors}
says requiring each of $\min(n_{0}, n_{1})$ pairs $\{i, j\}$ to
have $|(\vec{x}_{i} -
\vec{x}_{j})\hat{C}_{n}^{1/2}|_{2}$ below
\eqref{eq:24} puts the corresponding PIC errors below $z^{*}_{\min(n_{0}, n_{1})}$ times
\eqref{eq:24}. If $p$ increases faster than $\log n$,
then the ratio in \eqref{eq:24} tends to 0, and
\eqref{eq:24} is asymptotically equivalent to the PIC SE.  That is,
confining matching to pairs $\{i,j\}$ for which $|(\vec{x}_{i} -
\vec{x}_{j})\hat{C}_{n}^{1/2}|_{2}$ falls left of \eqref{eq:24} makes
the supremum of matched errors $|(\vec{x}_{i} -
\vec{x}_{j})(\tilde\beta - \betatrue)|$  asymptotically as it would be in the perfect-pairing
thought experiment, given $\log n = o(p)$ but not special conditions on the distribution
of the covariate.  (If $p$ increases no faster than $\log n$, $p =
O(\log n)$,  these
errors somewhat exceed those of the corresponding thought experiment,
but they tend quickly to zero anyway, due to the $p$'s slow increase.)

These considerations suggest \eqref{eq:24} as a hard limit
for pair distances $|(\vec{x}_{i} - \vec{x}_{j})\hat{C}^{1/2}|_{2}$,
but similar control of PIC errors can be had with a simple policy that
encourages matches with $|(\vec{x}_{i} -
\vec{x}_{j})\hat{C}^{1/2}|_{2}$ beneath \eqref{eq:24} while only requiring
\eqref{eq:66}. Make $i$ is eligible for pairing to $j$ if 
\begin{equation} 
  |(\vec{x}_{i} - \vec{x}_{j})\hat\beta| \leq 
z^{*}_{\min(n_0,n_1)}\bkt[2]{ \fip{2S^{\perp \hat\beta}}{\hat{C}}^{1/2}
  -\hat{e}(i,j)}\label{eq:43}
\end{equation}
where 
\begin{equation}
\hat{e}(i,j) \defeq \bbkt[3]{|(\vec{x}_{i} - \vec{x}_{j})\hat{C}^{1/2}|_{2} - \fip{2S^{\perp \hat\beta}}{\hat{C}}^{1/2}\bbkt[2]{1 +
    \bbkt[1]{\frac{\log \min(n_{0}, n_{1})}{p-1}}^{1/2}}}_{+}, \label{eq:81}
\end{equation}
$v_{+}\defeq \max(0,v)$, represents excess in index error distance as
compared to its nominal supremum \eqref{eq:24}. 
If $|(\vec{x}_{i} - \vec{x}_{j})\hat{C}^{1/2}|_{2}$ never exceeds this
nominal supremum, \eqref{eq:43} reduces to the requirement that
$|(\vec{x}_{i} - \vec{x}_{j})\hat\beta| \leq 
z^{*}_{\min(n_0,n_1)}\fip{2S^{\perp \hat\beta}}{\hat{C}}^{1/2}$,
as proposed in Section~\ref{sec:pic-se-calipers}. 
For potential pairings $\{i, j\}$ with
$|(\vec{x}_{i} - \vec{x}_{j})\hat{C}^{1/2}|_{2}$ exceeding
\eqref{eq:24}, Section~\ref{sec:pic-se-calipers}'s PIC allowance of
$z^{*}_{\min(n_0,n_1)}\fip{2S^{\perp
    \hat\beta}}{\hat{C}}^{1/2}$ is reduced in recognition of the pairing's
large standard error. When the index error distance $|(\vec{x}_{i} - \vec{x}_{j})\hat{C}^{1/2}|_{2}$
is so large that its excess $\hat{e}(i,j)$ exceeds the PIC SE ---
or equivalently, so large that \eqref{eq:66} fails --- \eqref{eq:43}
forbids $i$'s pairing with $j$.

This selectively narrowed PIC SE caliper
\marginpar{Abbreviate for version to submit}
has important advantages over
non-varying PIC SE calipers, alone or in combination with calipers of
width \eqref{eq:24} on the pairwise index error distance. Non-varying
PIC SE calipers secure asymptotic exactness of a match only for
sub-gaussian covariates, an assumption that selective narrowing of the
caliper enables us to do without. Coupling a non-varying PIC SE
caliper with a limit on the pairwise index error distance of \eqref{eq:24}
contains the sum
\begin{equation} \label{eq:82}
|[(\vec{x}_{i}-\vec{x}_{j})\hat\beta : i
\stackrel{\mathcal{S}}{\sim} j]|_{\infty}
+ 
  |[(\vec{x}_{i}-\vec{x}_{j})(\tilde\beta-\betatrue) : i
\stackrel{\mathcal{S}}{\sim} j]|_{\infty}
\end{equation}
at the product of $z^{*}_{\min(n_0,n_1)}$
with the right hand side of \eqref{eq:66}, just as the selectively narrowed PIC SE caliper
does, but at the cost of categorically disallowing pairwise index
error distances in excess of \eqref{eq:24}.  In contrast, the
selectively narrowed caliper permits those pairings if their PICs are
sufficient small. This additional tolerance is important because
\eqref{eq:24} systematically underestimates suprema of pairwise index
error distances for some index models, even with Gaussian $\vec{X}$,
because of its use of $p-1$ in lieu of
Proposition~\ref{prop:chaosbound}'s
$p_{[(S^{\perp \betatrue})^{1/2}C (S^{\perp \betatrue})^{1/2}]}$. That
minor embarrassment could be remedied by replacing $p-1$ in
\eqref{eq:24} by
$p_{[(S^{\perp \hat\beta})^{1/2}\hat{C}((S^{\perp
    \hat\beta}))^{1/2}]}$, but then assumption \ref{A-fullrankB}
would become necessary for asymptotic equivalence of \eqref{eq:24} and the
PIC SE
--- an equivalence needed even under \ref{A-consistencyrates},
the weakest of the model dimensionality restrictions considered in
this paper, to force \eqref{eq:82} toward an asymptote of 0.
Assumption~\ref{A-fullrankB} is discussed in the next section. 

\subsection{The contribution of linearization error}
\label{sec:contr-line-error}

Display \eqref{eq:82} omits linearization error.  Unless the estimator
of the index regression is linear in its dependent variable $R$, to
estimate $|[(\vec{x}_{i}-\vec{x}_{j})\betatrue : i \sim j]|$ we must
attend to $|[(\vec{x}_{i}-\vec{x}_{j})(\hat\beta - \tilde\beta) : i
\sim j]|$ as well as $|[(\vec{x}_{i}-\vec{x}_{j})\hat\beta : i \sim j]|$
and
$|[(\vec{x}_{i}-\vec{x}_{j})(\tilde\beta - \betatrue) : i \sim j]|$.
Index estimators are linear in $R$ in the special cases of linear regression
and linear discriminant modeling with fixed correlation, but not for
indices estimated with probit or logistic regression.

Recall from Section~\ref{sec:dimin-index-errors} that with
sub-gaussian covariates (\ref{A-boundedXes}),
$|[(\vec{x}_{i}-\vec{x}_{j})(\hat\beta - \tilde\beta) : i,j \leq
n]|_{\infty}$ tends to 0 provided that
$p = o\bkt[3]{\bkt[2]{n/\bkt[1]{\log n}}^{2/3}}$, and is smaller by an
order of magnitude than
$|[(\vec{x}_{i} - \vec{x}_{j})(\tilde\beta -\betatrue) :
\vec{x}_{i}\betatrue \approx \vec{x}_{j}\betatrue]|_{2}$ measured in
the PIC SE, provided that
$p = o\bkt[3]{\bkt[2]{n/\bkt[1]{\log n}}^{1/2}}$. Matching within
selectively narrowed PIC SE calipers secures these conclusions under
conditions not including \ref{A-boundedXes}. However, depending on the
specific side conditions and estimation routines that are employed,
the matching procedure may need to observe additional caliper
restrictions.

First consider the case that
\ref{A-centering}--\ref{A-consistencyrates} hold, with $C_{n}$
estimated by $n^{-1}\hat{A}_{n}^{-1}$.   The matching requirement \eqref{eq:66},
a consequence of \eqref{eq:43}, ensures that $|[|(\vec{x}_{i} -
\vec{x}_{j})\hat{C}^{1/2}|_{2} : i \sim j]|_{\infty} = O_{P}\bkt[3]{\bkt[2]{
  \max(p, \log n)/n}^{1/2}}$, since (by
Proposition~\ref{prop:picse-consistency}) the PIC SE is
$O_{P}\bkt[2]{(p/n)^{1/2}}$ and since $(p-1)^{-1}\log\min(n_{0}, n_{1}) =
O\bkt[2]{\max\bkt[1]{1, p^{-1}\log n}}$.  Proposition~\ref{prop:stderr-consist} and
$\hat{C} = n^{-1}\hat{A}_{n}^{-1}$ in turn
give $|[|(\vec{x}_{i} - \vec{x}_{j}){A}^{-1/2}|_{2} : i \sim j]|_{\infty} =
O_{P}\bkt[2]{{\max(p, \log n)}^{1/2}}$.  As \ref{A-invertibleA} and
Lemma~\ref{lem:fullrankAinv} in Appendix~\ref{sec:Cproofs} entail
$|A_{n}|_{2}= O(1)$,  $|[|(\vec{x}_{i} - \vec{x}_{j})|_{2} : i \sim j]|_{\infty} =
O_{P}\bkt[2]{{\max(p, \log n)}^{1/2}}$. Whether or not $\vec{X}$ was
drawn from a sub-gaussian distribution, pairs $\{i,j\}$ selected
within selectively narrowed PIC SE
calipers can be no more separated on $\vec{x}$ than they would have been
under sub-gaussian sampling, and Proposition~\ref{lem:betahat-consist} entails
\begin{equation}
  \label{eq:83}
  |[(\vec{x}_{i} -
\vec{x}_{j})(\hat\beta - \tilde\beta) : i \sim j]|_{\infty} =
O_{P}\bkt[2]{pn^{-1}(\log n)^{1/2}{\max(p, \log n)}^{1/2}}.
\end{equation}
Within matched pairs, linearization error is as described in
Section~\ref{sec:dimin-index-errors}, even without \ref{A-boundedXes}
or additional matching restrictions.

When $C_{n}$ is instead estimated by
$n^{-1}\hat{A}_{n}^{-1}B_{n}\hat{A}_{n}^{-1}$,
\ref{A-paramrates} is needed in addition to \ref{A-centering}--\ref{A-consistencyrates}, for consistency of $\hat{C}_{n}$
(by Proposition~\ref{lem:ChatC}).  Then $|[|(\vec{x}_{i} -
\vec{x}_{j})\hat{C}^{1/2}|_{2} : i \sim j]|_{\infty} = O_{P}\bkt[3]{\bkt[2]{
  \max(p, \log n)/n}^{1/2}}$, by a similar argument as above.
Case~(\ref{item:stderr-consist-no-fullrankB}) of Proposition~\ref{prop:stderr-consist}
then gives that
$
{|[|(\vec{x}_{i} - \vec{x}_{j}) {C}^{1/2}|_{2} : i \sim j]|_{\infty}}
= O_{P}\bkt[3]{\bkt[2]{ \max(p, \log n)/n}^{1/2}}$, provided that $\fip{2S^{\perp
    \hat\beta}}{\hat{C}} \bbkt[2]{1 +
    \bbkt[1]{\frac{\log \min(n_{0}, n_{1})}{p-1}}^{1/2}}$ is of the same order as
$\max(p, \log n)/n$. (Proposition~\ref{prop:picse-consistency} gives that it is $O_{P}\bkt[2]{\max(p, \log
  n)/n}$, but here we also require its reciprocal to be $O_{P}\bkt[3]{\bkt[2]{\max(p, \log
  n)/n}^{-1}}$.) At this point
\ref{A-fullrankB} also becomes necessary, to ensure
$|B_{n}^{-1/2}|_{2}=O_{P}(1)$ and thus that
$|C_{n}^{-1/2}|_{2}=O_{P}(n^{1/2})$.  If so, 
Proposition~\ref{lem:betahat-consist} again gives
\eqref{eq:83}. 

The full-rank covariance condition \ref{A-fullrankB} merits careful
consideration in practice, however. It will poorly describe some
otherwise unassailable index models, as partitioners have long been
encouraged to add covariates in such models without regard to their
mutual correlations \citep{rubin:thom:1996}. Fortunately
\ref{A-fullrankB} is straightforward to diagnose, by checking that
neither $S$ nor $\hat{B}_{n}$ is ill-conditioned. If sustainable, it
delivers (in combination with Lemma~\ref{lem:C-rate} and
\ref{A-l2Sfinite}) the needed assurance that the PIC SE declines no
faster than $(p/n)^{1/2}$.  If \ref{A-fullrankB} cannot be sustained
while an assumption that $\fip{S^{\perp \betatrue}}{C_{n}}=O(p/n)$ can
be, we can instead combine that weaker assumption with additional
matching restrictions ensuring that
$\max_{i\sim j}|\vec{x}_{i}-\vec{x}_{j}|=O_{P}\bkt[2]{\max(p, \log
  n)^{1/2}}$.  Matches can be required to fall within Euclidean
distance calipers of width
$\operatorname{tr}(S)^{1/2}\bkt[3]{1 + \bkt[2]{(\log
    n)/(p-1)}^{1/2}}$, or with calipers of width
$s(x_{j})\bkt[3]{1 + \bkt[2]{(\log n)/(p-1)}^{1/2}}$ on each dimension
$j=1, \ldots p$ of $\vec{x}$ separately.  Either way, the combination
of additional assumptions and matching restrictions secures
\eqref{eq:83}, and that the dominant part of the PIC error
$|[(\vec{x}_{i} - \vec{x}_{j})(\hat\beta - \betatrue)]|_{\infty}$ is
$|[(\vec{x}_{i} - \vec{x}_{j})(\tilde\beta - \betatrue)]|_{\infty}$,
not
$|[(\vec{x}_{i} - \vec{x}_{j})(\hat\beta - \tilde\beta)]|_{\infty}$.



\section{Asymptotically exact matching and consistency of impact estimation}
\label{sec:cons-match-estim}
Matching within PIC SE calipers arranges that paired differences of
the index tend uniformly to zero, given mild conditions on the
index model.  For propensity and certain other index models, this
convergence is precisely what is needed to ensure that in the absence
of unmeasured confounding, the matched structure enjoys the same
consistency properties as would be enjoyed were paired differences
on the index uniformly and identically zero.
In this section we assume
$Z \in \{0, 1\}$; the very weak overlap condition
\begin{equation}
  \label{eq:50}
  \PP\bbkt[2]{0<
    \PP(Z=1 \mid \mathbf{X}\betatrue) < 1} =1; 
\end{equation}
that $\EE|Y_{C}|^{1+\delta}, \EE|Y_{T}|^{1+\delta} < \infty$ for some $\delta>0$; and that the mapping
$v \mapsto \operatorname{logit} \bkt[2]{\PP (Z =1 | \mathbf{X}\beta =  v)}$ is Lipschitz. 
If $\mathbf{X}\beta$ is a propensity score modeled on the logit scale,
this mapping is the identity and \eqref{eq:50} follows from
Rosenbaum and Rubin's \citeyearpar{rosenbaum:rubi:1983} overlap
condition, $0< \PP(Z=1 \mid \mathbf{X}) < 1$; if $\mathbf{X}\beta$ is
a risk or prognostic score, \eqref{eq:50} is less restrictive
\marginpar{Add cites\ldots}
than their already weak overlap requirement.

A partition $\mathcal{S}_{n}$ is a \textit{finely stratified design}
\citep{fogarty2018vhatfinelystratified} if it divides
$\{1, \ldots, n\}$ into partition elements
$\mathbf{s} \in \mathcal{S}_{n}$ that satisfy
$\sum_{i \in \mathbf{s}}\indicator{z_{i}=z} \leq 1$ for either or both
of $z=0,1$.  These can be $\atob{1}{m_{0}}$ or $\atob{m_{1}}{1}$
matched sets, for natural numbers $m_{0}, m_{1}$, if not
$\atob{m_{1}}{m_{0}}$ blocks with both $m_{0}, m_{1}\geq 2$; singleton
elements, $\mathbf{s}$ of size $\card[0]{\mathbf{s}}=1$, represent
unmatched units.  \marginpar{Abbreviate for version to submit} Such
$\mathcal{S}_{n}$ may emerge from pair matching, where each
$\mathbf{s}\in \mathcal{S}_{n}$ is either a $\atob{1}{1}$ pair,
$\sum_{i \in \mathbf{s}}\indicator{z_{i}=1} =\sum_{i \in
  \mathbf{s}}\indicator{z_{i}=0} =1$, or an unmatched $\atob{0}{1}$ or
$\atob{1}{0}$ singleton; from matching with multiple controls,
permitting $\atob{1}{m}$, $m\geq 1$, matches as well as singletons;
from 1-nearest neighbor matching, in which $\atob{m}{1}$ but not
$\atob{1}{m}$ sets may arise; or from full matching
\citep{rosenbaum:1991a}, permitting both $\atob{m}{1}$ and
$\atob{1}{m}$ configurations for any $m$; or from full matching with
symmetric restrictions
\citep{stuart:green:2006,fredricksonErricksonHansen20}, permitting
both $\atob{m}{1}$ and $\atob{1}{m}$ matched sets, but only for $m$
falling below a designated $m_{0}$.  The notation
$[i]_{\mathcal{S}_{n}}$ for the partition element
$\mathbf{s} \in \mathcal{S}_{n}$ containing $i$ is abbreviated to
$[i]$ when no partition other than $\mathcal{S}_{n}$ is under
consideration.

Consider\marginpar{Remove denominator from
  $\psi_{\mathcal{S}_{n}}(\eta)$ def?\ldots}
estimates defined as roots of $\psi_{\mathcal{S}_{n}}(\cdot) = 0$, where
\begin{equation}
  \label{eq:44}
\psi_{\mathcal{S}_{n}}(\eta) \defeq \frac{\sum_{\mathbf{s}\in
  \mathcal{S}_{n}}\sum_{i\in \mathbf{s}}\psi_{\mathbf{s}i}(\eta)}{\sum_{\mathbf{s}\in
  \mathcal{S}_{n}}w_{\mathbf{s}}\card[0]{\mathbf{s}}\bar{Z}_{\mathbf{s}}(1-\bar{Z}_{\mathbf{s}})}, \quad
\psi_{\mathbf{s}i}(\eta) \defeq  w_{\mathbf{s}}
\bkt[2]{Y_{i} - \eta \bkt[1]{Z_{i} -
    \bar{Z}_{\mathbf{s}}}}
(Z_{i} -\bar{Z}_{\mathbf{s}}), 
\end{equation}
$\bar{z}_{\mathbf{s}}\defeq
{\card[0]{\mathbf{s}}}^{-1} {\sum_{j\in\mathbf{s}}z_{j}}$ and
$w_{\mathbf{s}}$ is a nonnegative weight determined by
$z_{\mathbf{s}}$ and/or ${\card[0]{\mathbf{s}}}$.
For example, the $z$-coefficient in an ordinary regression of
outcomes $y$ on $z$ and matched-set indicator variables is expressible
as the solution $\hat\tau$ of $\psi_{\mathcal{S}_{n}}(\tau) = 0$ for 
$w_{\mathbf{s}}\equiv 1$, since $(z_{i} - \bar{z}_{[i]}: i)$ is the
residual of $z$'s ordinary regression on matched-set indicator variables.
As a second example, the effect of treatment-on-treated
estimator
\begin{equation*}
\card[0]{\{i:  Z_{i}=1,
  \card[0]{[i]}>1\}}^{-1}\sum_{\{i: Z_{i}=1, \card[0]{[i]}>1\}} Y_{i} -
\operatorname{avg}(Y_{j}: Z_{j}=0, j \sim i)
\end{equation*}
uniquely solves $\psi_{\mathcal{S}_{n}}(\cdot) = 0$ with
$w_{[i]_{\mathcal{S}_{n}}}=0$ for unmatched $i$ and 
$w_{\mathbf{s}}=(1  - \bar{Z}_{s})^{-1}$ for $\mathbf{s}$ with $\card[0]{\mathbf{s}}>1$.

Inferences will reflect $\mathcal{S}_{n}$ by conditioning on
stratum-wise treatment allocations, that is on a sigma field
containing
$\mathcal{F}_{n} \defeq \sigma\bbkt[1]{\sum_{i \in \mathbf{s}}Z_{i}:
  \mathbf{s} \in \mathcal{S}_{n}}$.  Desite this notation,
$\{\mathcal{F}_{n}: n\}$ is not a nested filtration: as a
rule $\mathcal{S}_{m}\not\subseteq \mathcal{S}_{m}$, as strict
containment does not permit the maximum index discrepancy,
$|\{\vec{x}_{i}\betatrue - \vec{x}_{j}\betatrue: i
\stackrel{\mathcal{S}_{n}}{\sim} j\}|_{\infty}$, to decline with
increasing $n$.  The statistic
$[\sum_{i \in \mathbf{s}}Z_{i}: \mathbf{s} \in \mathcal{S}_{n}]$
that defines $\mathcal{F}_{n}$ is
in itself uninformative, $S$-ancillary
\citep{severini2000likelihood,lehmannRomano22} to
parameters defined as roots of $\eta \mapsto
\EE\bkt[2]{\psi_{\mathcal{S}_{n}}(\eta)}$, $\psi_{\mathcal{S}_{n}}$ as
defined in \eqref{eq:44}.
  

Proposition~\ref{prop:Qconsist} says that under mild assumptions about
the regularity of $\{\mathcal{S}_{n}:n\}$ and
$(Y_{T}, Y_{C}, Z)$, the solution of \eqref{eq:44} tends to a
probability limit.  To state the regularity assumptions, write
$\bar{Y}_{\mathbf{s}(z)}=0$ if
$\sum_{i \in \mathbf{s}}\indicator{Z_{i}=z}=0$, for $z=0$ or 1, and
$\bkt[1]{\sum_{i \in \mathbf{s}}\indicator{Z_{i}=z}}^{-1}{\sum_{i\in
    \mathbf{s}}Y_{i}\indicator{Z_{i}=z}}$ otherwise; and let
$\Vns \defeq \bar{Y}_{\mathbf{s}(1)}-\bar{Y}_{\mathbf{s}(0)}-
\EE(\bar{Y}_{\mathbf{s}(1)}-\bar{Y}_{\mathbf{s}(0)} \mid
\mathcal{F}_{n})$.

\begin{prop} \label{prop:Qconsist} Let
  $\{(\vec{X}_{i}, Y_{Ci}, Y_{Ti}, Z_{i}): i\}$ be i.i.d., let
  $\{\mathcal{S}_{n}: n\}$ be finely stratified designs and let
  $\mathcal{F}_{n} =\sigma\bkt[1]{(\sum_{i \in \mathbf{s}}Z_{i}:
    \mathbf{s} \in \mathcal{S}_{n})}$.
  Assume the moment condition that for some $\delta>0$ either: (i) $\EE|Y_{C}|^{1+\delta}$,
  $\EE |Y_{T}|^{1+\delta}< \infty$ and $\card[0]{\mathbf{s}}$ is bounded; or
  (ii) there is a $V$
  and $\delta>0$ with $\EE |V|^{1+\delta} < \infty$ such that for each $n$ and
  $\mathbf{s}  \in \mathcal{S}_{n}$,  $|V|$ stochastically dominates
  $|\Vns|$ given $\mathcal{F}_{n}$.%
  \marginpar{Omit footnote from version to submit.}
\footnote{That is,
the $\mathcal{F}_{n}$-conditional distribution of 
$|\Vns|$ falls at or below the unconditional distribution of $|V|$ in
the usual stochastic ordering where $W \preceq V$ iff $\PP(W>a) \leq
\PP(V>a)$ for all $a$.}
  Let  $\{w_{\mathbf{s}}:
  \mathbf{s} \in \mathcal{S}_{n}\}$ be
  nonnegative, $\mathcal{F}_{n}$-measurable weights, and let $\psi_{\mathcal{S}_{n}}(\cdot)$ be as in \eqref{eq:44}.  Assume that with
  probability one: $m_{n}\rightarrow\infty$, where $m_{n}\defeq \sum_{\mathbf{s}\in \mathcal{S}_{n}}\indicator{w_{\mathbf{s}}\bar{Z}_{\mathbf{s}}(1-\bar{Z}_{\mathbf{s}})>0}$ is the cardinality of $\mathcal{S}_{n}$
  exclusive of unmatched singletons and strata receiving zero weight;
  $w_{\mathbf{s}}\card[0]{\mathbf{s}}\bar{Z}_{\mathbf{s}}(1-\bar{Z}_{\mathbf{s}})$
  is bounded above; and $m_{n}^{-1}\sum_{\mathbf{s}\in
    \mathcal{S}_{n}}w_{\mathbf{s}}\card[0]{\mathbf{s}}\bar{Z}_{\mathbf{s}}(1-\bar{Z}_{\mathbf{s}})$
  is bounded away from 0. 
Conditionally given $\mathcal{F}_{n}$ we then have, for any sigma
fields $\{\mathcal{G}_{n}: n\}$ with $\mathcal{F}_{n} \subseteq \mathcal{G}_{n}$: 
\begin{enumerate}
\item for each $\eta$,
  $\psi_{\mathcal{S}_{n}}(\eta)- \EE\bkt[2]{\psi_{\mathcal{S}_{n}}(\eta)  |
  \mathcal{G}_{n}}  \rightarrow 0$ in probability and in
$L_{1}$; \label{prop:Qconsist:item1} and
\item $\eta \mapsto \psi_{\mathcal{S}_{n}}(\eta)$ and $\eta \mapsto \EE\bkt[2]{\psi_{\mathcal{S}_{n}}(\eta) \mid
    \mathcal{G}_{n}}$ have unique roots
$\hat{\tau}_{n}$ and $\tau_{n}$. \label{prop:Qconsist:item2} 
\item In addition, if there is  $\tau_{0}\in (-\infty, \infty)$ such that
$\tau_{n}\stackrel{P}{\rightarrow}\tau_{0}$, then $\hat{\tau}_{n} \stackrel{P}{\rightarrow}
\tau_{0}$. \label{prop:Qconsist:item3}
\end{enumerate}
\end{prop}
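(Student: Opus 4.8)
The plan is to exploit the fact that $\psi_{\mathcal{S}_n}$ is affine in $\eta$ with slope $-1$, which trivializes the root-finding and collapses the whole statement to a single weighted average of within-stratum contrasts. First I would expand \eqref{eq:44}: for binary $Z$ one has $\sum_{i\in\mathbf{s}}(Z_i-\bar{Z}_{\mathbf{s}})^2=\card[0]{\mathbf{s}}\bar{Z}_{\mathbf{s}}(1-\bar{Z}_{\mathbf{s}})$ and $\sum_{i\in\mathbf{s}}Y_i(Z_i-\bar{Z}_{\mathbf{s}})=\card[0]{\mathbf{s}}\bar{Z}_{\mathbf{s}}(1-\bar{Z}_{\mathbf{s}})(\bar{Y}_{\mathbf{s}(1)}-\bar{Y}_{\mathbf{s}(0)})$, so writing $a_{\mathbf{s}}\defeq w_{\mathbf{s}}\card[0]{\mathbf{s}}\bar{Z}_{\mathbf{s}}(1-\bar{Z}_{\mathbf{s}})$, $N_n\defeq\sum_{\mathbf{s}}a_{\mathbf{s}}(\bar{Y}_{\mathbf{s}(1)}-\bar{Y}_{\mathbf{s}(0)})$ and $D_n\defeq\sum_{\mathbf{s}}a_{\mathbf{s}}$ gives
\[
\psi_{\mathcal{S}_n}(\eta)=\frac{N_n}{D_n}-\eta .
\]
The sum runs effectively over the $m_n$ strata with $a_{\mathbf{s}}>0$. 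Each $a_{\mathbf{s}}$, and hence $D_n$, is $\mathcal{F}_n$-measurable, and the regularity assumptions give $D_n\ge c\,m_n>0$ for some $c>0$. This yields part~2 immediately: both $\psi_{\mathcal{S}_n}(\cdot)$ and $\eta\mapsto\EE[\psi_{\mathcal{S}_n}(\eta)\mid\mathcal{G}_n]$ are affine with slope $-1$, with unique roots $\hat\tau_n=N_n/D_n$ and $\tau_n=\EE[N_n\mid\mathcal{G}_n]/D_n$. Moreover the difference in part~1 is free of $\eta$, since $\psi_{\mathcal{S}_n}(\eta)-\EE[\psi_{\mathcal{S}_n}(\eta)\mid\mathcal{G}_n]=\hat\tau_n-\tau_n=W_n/D_n$ with $W_n\defeq N_n-\EE[N_n\mid\mathcal{G}_n]$.

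So the proposition reduces to proving $W_n/D_n\to 0$ conditionally on $\mathcal{F}_n$. My next move would be to replace the arbitrary $\mathcal{G}_n$ by $\mathcal{F}_n$ at the cost of a constant. Setting $\tilde{N}_n\defeq N_n-\EE[N_n\mid\mathcal{F}_n]=\sum_{\mathbf{s}}a_{\mathbf{s}}\Vns$ and using $\EE[\EE[N_n\mid\mathcal{F}_n]\mid\mathcal{G}_n]=\EE[N_n\mid\mathcal{F}_n]$ for $\mathcal{F}_n\subseteq\mathcal{G}_n$, one checks $W_n=\tilde{N}_n-\EE[\tilde{N}_n\mid\mathcal{G}_n]$, whence the conditional Minkowski and Jensen inequalities (the latter with the tower property) give, for every $p\ge 1$,
\[
\big(\EE[\,|W_n|^p\mid\mathcal{F}_n]\big)^{1/p}\le 2\big(\EE[\,|\tilde{N}_n|^p\mid\mathcal{F}_n]\big)^{1/p}.
\]
This is the key reduction: it expresses everything through the $\Vns$, for which the moment hypotheses (i)/(ii) are designed, and sidesteps the arbitrariness of $\mathcal{G}_n$.

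It then suffices to bound $\EE[\,|\tilde{N}_n|^{1+\delta}\mid\mathcal{F}_n]$ (take $\delta\le 1$ without loss, else replace $\delta$ by $1$). Here I would establish that, conditionally on $\mathcal{F}_n$, the $\{\Vns\}$ are mean zero---immediate from the definition of $\Vns$---and form an independent (equivalently, martingale-difference) array, using that the strata partition $\{1,\dots,n\}$ into disjoint unit sets, that the units are i.i.d., and that $\mathcal{F}_n$ is $S$-ancillary, so that outcomes in distinct strata stay conditionally independent given the per-stratum treatment totals. Granting this, the von Bahr--Esseen inequality (in its martingale-difference form, valid for $r=1+\delta\in(1,2]$) gives $\EE[\,|\tilde{N}_n|^{1+\delta}\mid\mathcal{F}_n]\le 2\sum_{\mathbf{s}}a_{\mathbf{s}}^{1+\delta}\EE[\,|\Vns|^{1+\delta}\mid\mathcal{F}_n]$. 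Under either hypothesis $\sup_{\mathbf{s}}\EE[\,|\Vns|^{1+\delta}\mid\mathcal{F}_n]\le K<\infty$---directly under (ii) by stochastic domination by $V$, and under (i) because bounded $\card[0]{\mathbf{s}}$ makes $\bar{Y}_{\mathbf{s}(1)}-\bar{Y}_{\mathbf{s}(0)}$ an average of boundedly many outcomes with finite $(1+\delta)$ moment. Since $a_{\mathbf{s}}\le M$ (the ``bounded above'' assumption) forces $\sum_{\mathbf{s}}a_{\mathbf{s}}^{1+\delta}\le M^\delta D_n$, we obtain $\EE[\,|\tilde{N}_n/D_n|^{1+\delta}\mid\mathcal{F}_n]\le 2KM^\delta D_n^{-\delta}$, which tends to $0$ because $D_n\ge c\,m_n$ and $m_n\to\infty$. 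Through the factor-$2$ reduction this gives conditional $L^{1+\delta}$, hence conditional $L^1$ and in-probability, convergence of $W_n/D_n$, proving part~1. Part~3 follows at once: $\hat\tau_n-\tau_n=W_n/D_n\stackrel{P}{\to}0$ conditionally on $\mathcal{F}_n$, hence unconditionally by taking expectations of the (bounded) conditional probabilities, so $\tau_n\stackrel{P}{\to}\tau_0$ yields $\hat\tau_n\stackrel{P}{\to}\tau_0$.

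I expect the main obstacle to be the conditional-independence (martingale-difference) structure of $\{\Vns\}$ given $\mathcal{F}_n$: making rigorous that conditioning on the partition and on the treatment totals does not couple outcomes across disjoint strata, and pinning down exactly which measurability of $\mathcal{S}_n$ and role of the $S$-ancillary $\mathcal{F}_n$ this requires. Everything else---the algebraic collapse of $\psi_{\mathcal{S}_n}$, the $\mathcal{G}_n\to\mathcal{F}_n$ reduction, the von Bahr--Esseen bound, and the bookkeeping with the weight bounds and $m_n\to\infty$---is routine once that structure is secured.
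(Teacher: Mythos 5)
Your proposal is correct, and for the heart of part~1 it takes a genuinely different route from the paper. Both arguments begin with the same algebraic collapse of $\psi_{\mathcal{S}_{n}}$ (the paper's display \eqref{eq:54} is exactly your identity $\psi_{\mathcal{S}_n}(\eta)=N_n/D_n-\eta$), both observe that the discrepancy in part~1 is free of $\eta$, and both pass from $\mathcal{G}_n$ to $\mathcal{F}_n$ by the contractivity of conditional expectation (the paper does this in $L_1$ after establishing convergence; you do it up front in $L^{1+\delta}$ via conditional Minkowski and Jensen --- same idea, different order). Where you diverge is the convergence proof itself: the paper truncates the increments $\Vns$ and invokes Durrett's weak law for triangular arrays to get in-probability convergence, then separately uses the $L_{1+\delta}$ domination to establish uniform integrability and upgrade to $L_1$; you instead spend the $(1+\delta)$-th moment immediately through the von Bahr--Esseen inequality, obtaining conditional $L^{1+\delta}$ convergence of $W_n/D_n$ in one step with an explicit rate $D_n^{-\delta/(1+\delta)}$. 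Your route is shorter and quantitative; the paper's is closer to minimal-moment territory (its in-probability step needs only first-moment domination) and its Huber-style argument for part~3 would survive a non-affine monotone $\psi$, whereas your part~3 --- simply noting $\hat\tau_n-\tau_n=W_n/D_n$ --- is cleaner but leans entirely on the slope-$(-1)$ structure. The one point you rightly flag, conditional independence of $\{\Vns:\mathbf{s}\in\mathcal{S}_n\}$ given $\mathcal{F}_n$, is asserted in a single sentence in the paper as well (i.i.d.\ units plus the fact that conditioning on within-stratum treatment totals couples units only within strata), so you are not missing anything the paper supplies; just be sure to handle the reduction of moment condition (i) to a uniform bound on $\sup_{\mathbf{s}}\EE\bkt[2]{|\Vns|^{1+\delta}\mid\mathcal{F}_n}$ with the same care the paper takes in its opening paragraph, and note that for $\delta>1$ your ``replace $\delta$ by $1$'' step is licensed by Lyapunov's inequality so that von Bahr--Esseen applies with exponent in $(1,2]$.
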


As compared to the classical consistency principle for
i.i.d. observations \citep[Lemma A of
\S~7.2.1]{huber1964robust,serfling80}, Proposition~\ref{prop:Qconsist}
upgrades moment requirements from estimating equation contributions
$\psi(W; \theta)$ being $L_{1}$ to $\Vns$ being $L_{1+\delta}$, some
$\delta >0$. This enables conclusions in terms of $L_{1}$ as well as
in-probability convergence, which in turn accommodates refinement of
$\mathcal{F}_{n}$-conditioning to conditioning on finer sigma fields
reflecting matched variation in index scores.  Specifically,
consider
\begin{equation*}
  \mathcal{G}_{n} \defeq \sigma\bbkt[1]{(\vec{X}_{i}\betatrue : i \leq
n) \cup \mathcal{F}_{n}}.
\end{equation*}
The generating statistic $(\vec{X}_{i}\betatrue, 1\leq i \leq
n; \sum_{j \in \mathbf{s}}Z_{j}, \mathbf{s} \in \mathcal{S}_{n})$ is
again $S$-ancillary to matched treatment-control contrasts such as $\tau_{n}$.

Because remaining information about
$(\vec{X}_{i}, Z_{i})$,  $1\leq i \leq n$, is barred, the
transformed index scores
$\theta_{i}\defeq \operatorname{logit} \bkt[2]{\PP (Z =1 |
  \mathbf{X}\betatrue = \mathbf{x}_{i}\betatrue)}$ determine $\mathcal{G}_{n}$-conditional assignment
probabilities as follows.  If $\mathbf{s} \in \mathcal{S}_{n}$ and
$\zeta: \mathbf{s} \rightarrow \{0,1\}$ satisfies
$\sum_{i \in \mathbf{s}}\zeta_{i} = \sum_{i \in \mathbf{s}}z_{i}$, then
\begin{align}
  \pibs(\zeta)&\defeq \PP(Z_{i}=\zeta_{i} \text{ all } i \in \mathbf{s}\mid \mathcal{G}_{n})
                         \nonumber \\
  \label{eq:45}
  &=
  \begin{cases}
    \frac{\exp(\theta_{i})}{\sum_{j\in \mathbf{s}} \exp(\theta_{j})}, &
    \text{any } i \in \mathbf{s} \text{ s.t. } \zeta_{i}=1, \text{ and } \zeta_{j}=0
    \text{ for all } j\in \mathbf{s}\setminus\{i\}\\
    \frac{\exp(-\theta_{i})}{\sum_{j\in \mathbf{s}} \exp(-\theta_{j})},
    & \text{any } i \in \mathbf{s} \text{ s.t. } \zeta_{i}=0, \text{ and } \zeta_{j}=1
    \text{ for all } j\in \mathbf{s}\setminus\{i\} .\\
  \end{cases}
\end{align}
(Because we assume \eqref{eq:50}, $\theta_{i} \in (-\infty, \infty)$
for all $i$. When $\mathbf{s} = \{i\}$ is an unmatched singleton,
$\pibs(\zeta)=1$ for the sole permissible $\zeta$,
$\{i \mapsto z_{i}\}$.  When $\mathbf{s}$ is a $\atob{1}{1}$ matched
pair $\{i_{1}, i_{2}\}$, one condition in \eqref{eq:45} obtains with
$i=i_{1}$ while the other obtains with $i=i_{2}$, so that
\eqref{eq:45} presents two distinct expressions for
$\pi_{\mathbf{s}}(\zeta)$.  But these expressions then assign the same
value to $\pi_{\mathbf{s}}(\zeta)$, for each 
$\zeta: \{i_{1}, i_{2}\} \rightarrow \{0,1\}$.) Now define
\begin{equation}
  \label{eq:73}
  \tilde{\psi}_{\mathbf{s}}(\eta) \defeq \frac{\sum_{i \in
      \mathbf{s}}\psi_{\mathbf{s}i}(\eta)}{\card[0]{\mathbf{s}}\pibs\bkt[1]{Z_{\mathbf{s}}}}; \quad
  \tilde{\psi}_{\mathcal{S}_{n}}(\eta)\defeq \frac{\sum_{\mathbf{s}
      \in \mathcal{S}_{n}} \tilde{\psi}_{\mathbf{s}}(\eta)}{\sum_{\mathbf{s}\in
  \mathcal{S}_{n}}w_{\mathbf{s}}\card[0]{\mathbf{s}}\bar{Z}_{\mathbf{s}}(1-\bar{Z}_{\mathbf{s}})}.
\end{equation}

In contrast to $\psi_{\mathbf{s}}(\eta)\defeq \sum_{i \in
  \mathbf{s}}\psi_{\mathbf{s}i}(\eta)$,
$\tilde{\psi}_{\mathbf{s}}(\eta)$ cannot be calculated in practice, as its
random denominator involves the unknown
$\betatrue$.  Accordingly $\tilde{\psi}_{\mathcal{S}_{n}}(\cdot)$ lacks
direct application to effect estimation.  However, it is useful for analysis
of estimates based on $\psi_{\mathcal{S}_{n}}(\cdot)$. 

\begin{prop} \label{prop:Pconsist}
  \textit{i.} The unique root of $\eta \mapsto \EE\bkt[2]{\tilde{\psi}_{\mathcal{S}_{n}}(\eta)|\mathcal{G}_{n}}$ is
 \begin{equation}
    \label{eq:72}
\bbkt[1]{\sum_{\mathbf{s}\in
      \mathcal{S}_{n}}\tilde{w}_{\mathbf{s}}\card[0]{\mathbf{s}}}^{-1}
    { \sum_{\mathbf{s}\in \mathcal{S}_{n}}
        \tilde{w}_{\mathbf{s}}
      \sum_{i
    \in \mathbf{s}}{\EE\bkt[1]{Y\mid
        Z=1, \vec{X}\betatrue=\vec{x}_{i}\betatrue} -
      \EE\bkt[1]{Y\mid Z=0, \vec{X}\betatrue=\vec{x}_{i}\betatrue}
    }
  },
\end{equation}
where $\tilde{w}_{\mathbf{s}}\defeq
w_{\mathbf{s}}\bar{z}_{\mathbf{s}}(1-\bar{z}_{\mathbf{s}})$. 
\begin{enumerate} \setcounter{enumi}{1}
\item For all $\eta$ and $n$,
\begin{multline}
      \label{eq:62}
      \left| \EE \bkt[2]{\tilde{\psi}_{\mathcal{S}_{n}}(\eta) -
      \psi_{\mathcal{S}_{n}}(\eta) \mid \mathcal{G}_{n}} \right| \leq \\
      \bkt[2]{\exp\bkt[1]{4 | \{\theta_{i} - \theta_{j}: i \sim
        j\} |_{\infty}} -1} \cdot
      \frac{\sum_{\mathbf{s} \in
          \mathcal{S}_{n}}\tilde{w}_{\mathbf{s}}\card[0]{\mathbf{s}}\EE|\Vns|}{\sum_{\mathbf{s}\in
  \mathcal{S}_{n}}\tilde{w}_{\mathbf{s}}\card[0]{\mathbf{s}}},
\end{multline}
  where $\theta_{i}=\operatorname{logit} \bkt[2]{\PP (Z =1 |
  \mathbf{X}\betatrue =  \mathbf{x}_{i}\betatrue)}$ and $\Vns$ is as
defined in \S~\ref{sec:cons-match-estim}, above.

\item  If $v \mapsto \operatorname{logit}
  \bkt[2]{\PP (Z =1 | \mathbf{X}\beta =  v)}$ is Lipschitz
  and the conditions of Proposition~\ref{prop:Qconsist} hold,
  $|\{(\vec{x}_{i} - \vec{x}_{j})\betatrue: i \sim
  j\}|_{\infty}\rightarrow 0$ entails that the difference of
  $\tau_{n}$ with \eqref{eq:72} tends in probability to 0, where 
$\tau_{n}$ is the unique root 
of $\eta \mapsto \EE\bkt[2]{\psi_{\mathcal{S}_{n}}(\eta) \mid \mathcal{G}_{n}}$.

\item  If the averages \eqref{eq:72} tends in probability to a finite
  limit $\tau_{0}$, then  $\hat{\tau}_{n}\stackrel{P}{\rightarrow}
  \tau_{0}$.
\end{enumerate}
\end{prop}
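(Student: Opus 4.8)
The plan is to handle the four parts in sequence, exploiting throughout that both $\eta\mapsto\EE\bkt[2]{\psi_{\mathcal{S}_{n}}(\eta)\mid\mathcal{G}_{n}}$ and $\eta\mapsto\EE\bkt[2]{\tilde{\psi}_{\mathcal{S}_{n}}(\eta)\mid\mathcal{G}_{n}}$ are affine in $\eta$ with the common slope $-1$, so that everything reduces to comparing their intercepts. For (i), I would first observe that the denominator $\sum_{\mathbf{s}}w_{\mathbf{s}}\card[0]{\mathbf{s}}\bar{Z}_{\mathbf{s}}(1-\bar{Z}_{\mathbf{s}})$ of $\tilde{\psi}_{\mathcal{S}_{n}}$ is $\mathcal{F}_{n}$- and hence $\mathcal{G}_{n}$-measurable, so it factors out of the conditional expectation. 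The engine is the importance-weight cancellation: expanding $\EE\bkt[2]{\tilde{\psi}_{\mathbf{s}}(\eta)\mid\mathcal{G}_{n}}$ as a sum over the admissible assignments $\zeta$ (those agreeing with the observed within-$\mathbf{s}$ treatment total), the factor $\pibs(Z_{\mathbf{s}})^{-1}$ in $\tilde{\psi}_{\mathbf{s}}$ exactly cancels the assignment probability $\pibs(\zeta)$ supplied by \eqref{eq:45}, leaving the unweighted average $\card[0]{\mathbf{s}}^{-1}\sum_{\zeta}\EE\bkt[2]{\sum_{i\in\mathbf{s}}\psi_{\mathbf{s}i}(\eta)\mid\mathcal{G}_{n},Z_{\mathbf{s}}=\zeta}$. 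Strong ignorability together with the balancing-score property of $\vec{X}\betatrue$ then replaces each treated or control potential outcome by $\EE\bkt[1]{Y\mid Z=1,\vec{X}\betatrue=\vec{x}_{i}\betatrue}$ or $\EE\bkt[1]{Y\mid Z=0,\vec{X}\betatrue=\vec{x}_{i}\betatrue}$. Carrying out the elementary stratum sum for a $1{:}m$ (or, symmetrically, $m{:}1$) set, noting that strata with $\bar{Z}_{\mathbf{s}}(1-\bar{Z}_{\mathbf{s}})=0$ contribute nothing, and dividing by the denominator leaves an affine function of $\eta$ with slope $-1$ whose root is precisely \eqref{eq:72}.

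For (ii), I would use the stratum identity $\sum_{i\in\mathbf{s}}\psi_{\mathbf{s}i}(\eta)=\tilde{w}_{\mathbf{s}}\card[0]{\mathbf{s}}\bkt[2]{(\bar{Y}_{\mathbf{s}(1)}-\bar{Y}_{\mathbf{s}(0)})-\eta}$, so that the per-stratum discrepancy factors as $\tilde{\psi}_{\mathbf{s}}(\eta)-\sum_{i}\psi_{\mathbf{s}i}(\eta)=\bkt[1]{\sum_{i}\psi_{\mathbf{s}i}(\eta)}\,g(Z_{\mathbf{s}})$ with $g(Z_{\mathbf{s}})=\bkt[1]{\card[0]{\mathbf{s}}\pibs(Z_{\mathbf{s}})}^{-1}-1$. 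Because $\EE\bkt[2]{g(Z_{\mathbf{s}})\mid\mathcal{G}_{n}}=0$, I may subtract the $\mathcal{F}_{n}$-conditional mean of $\bar{Y}_{\mathbf{s}(1)}-\bar{Y}_{\mathbf{s}(0)}$ without altering the conditional expectation, which turns the bracket into $\Vns$ and gives $\EE\bkt[2]{\tilde{\psi}_{\mathbf{s}}(\eta)-\sum_{i}\psi_{\mathbf{s}i}(\eta)\mid\mathcal{G}_{n}}=\tilde{w}_{\mathbf{s}}\card[0]{\mathbf{s}}\,\EE\bkt[2]{\Vns\, g(Z_{\mathbf{s}})\mid\mathcal{G}_{n}}$. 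Finally \eqref{eq:45} yields $\bkt[1]{\card[0]{\mathbf{s}}\pibs(Z_{\mathbf{s}})}^{-1}=\card[0]{\mathbf{s}}^{-1}\sum_{j\in\mathbf{s}}e^{\pm(\theta_{j}-\theta_{\ell})}\in[e^{-\Delta},e^{\Delta}]$, where $\Delta=|\{\theta_{i}-\theta_{j}:i\sim j\}|_{\infty}$, so $|g(Z_{\mathbf{s}})|\le e^{\Delta}-1\le\exp(4\Delta)-1$; summing over $\mathbf{s}$, applying the triangle inequality, and dividing by the positive denominator delivers \eqref{eq:62}.

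Part (iii) then follows quickly. Since both conditional-mean maps are affine in $\eta$ with slope $-1$, the difference between $\tau_{n}$ and the average \eqref{eq:72} equals the ($\eta$-free) quantity $\EE\bkt[2]{\psi_{\mathcal{S}_{n}}(\eta)-\tilde{\psi}_{\mathcal{S}_{n}}(\eta)\mid\mathcal{G}_{n}}$, whose absolute value is bounded by the right-hand side of \eqref{eq:62}. The Lipschitz hypothesis on $v\mapsto\operatorname{logit}\bkt[2]{\PP(Z=1\mid\mathbf{X}\beta=v)}$ gives $\Delta\le L\,|\{(\vec{x}_{i}-\vec{x}_{j})\betatrue:i\sim j\}|_{\infty}\stackrel{P}{\rightarrow}0$, so $\exp(4\Delta)-1\stackrel{P}{\rightarrow}0$, while the $L_{1+\delta}$ / stochastic-dominance moment conditions of Proposition~\ref{prop:Qconsist} keep the weighted average of $\EE|\Vns|$ bounded in probability; the product therefore tends to $0$ in probability.

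For (iv), combining part (iii) with the hypothesis that \eqref{eq:72} tends in probability to $\tau_{0}$ gives $\tau_{n}\stackrel{P}{\rightarrow}\tau_{0}$, and Proposition~\ref{prop:Qconsist}(\ref{prop:Qconsist:item3}), applied with the finer field $\mathcal{G}_{n}$, upgrades this to $\hat{\tau}_{n}\stackrel{P}{\rightarrow}\tau_{0}$. I expect the main obstacle to lie in the careful conditioning behind (i)–(ii): making the importance-weight cancellation rigorous requires controlling the joint $\mathcal{G}_{n}$-conditional law of assignments and potential outcomes (through strong ignorability and the $S$-ancillarity of the stratum-wise treatment totals), and recovering $\Vns$ in \eqref{eq:62} hinges on recognizing that the $\mathcal{F}_{n}$-centering is invisible to $\EE\bkt[2]{\,\cdot\,g(Z_{\mathbf{s}})\mid\mathcal{G}_{n}}$ since $\EE\bkt[2]{g(Z_{\mathbf{s}})\mid\mathcal{G}_{n}}=0$.
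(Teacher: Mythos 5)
Your argument is correct, and for parts (i), (iii) and (iv) it coincides with the paper's: (i) by the same importance-weight cancellation over admissible assignments $\zeta$ using \eqref{eq:45}, and (iii)--(iv) by the same observation that both conditional-mean maps are affine in $\eta$ with slope $-1$, so that the gap between roots is the $\eta$-free intercept difference controlled by \eqref{eq:62}, followed by an appeal to Proposition~\ref{prop:Qconsist}. Where you genuinely depart from the paper is part (ii). The paper expands $\EE\bkt[2]{\tilde{\psi}_{\mathbf{s}}(\eta)\mid\mathcal{G}_{n}}$ and $\EE\bkt[2]{\psi_{\mathbf{s}}(\eta)\mid\mathcal{G}_{n}}$ case by case ($\atob{1}{m}$ versus $\atob{m}{1}$), recenters by the $\mathcal{F}_{n}$-conditional means $\mu^{(\mathbf{s})}_{0},\mu^{(\mathbf{s})}_{1}$ inside each display, and then invokes Lemma~\ref{lem:msPSerr} to bound $|\card[0]{\mathbf{s}}^{-1}/\pibs(\zeta)-1|$. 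You instead factor the per-stratum discrepancy as $\psi_{\mathbf{s}}(\eta)\,g(Z_{\mathbf{s}})$ with $g(Z_{\mathbf{s}})=\bkt[1]{\card[0]{\mathbf{s}}\pibs(Z_{\mathbf{s}})}^{-1}-1$, note $\EE\bkt[2]{g(Z_{\mathbf{s}})\mid\mathcal{G}_{n}}=0$ (there are exactly $\card[0]{\mathbf{s}}$ admissible $\zeta$, each contributing $\card[0]{\mathbf{s}}^{-1}$ after the cancellation), and use this orthogonality to swap the bracket for $\Vns$ at no cost, since the $\mathcal{F}_{n}$-measurable centering term and the constant $\eta$ are both killed by $\EE\bkt[2]{g\mid\mathcal{G}_{n}}=0$. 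This is cleaner: it dispenses with Lemma~\ref{lem:msPSerr} entirely, handles the $\atob{1}{m}$ and $\atob{m}{1}$ cases in one stroke, and your direct computation $\bkt[1]{\card[0]{\mathbf{s}}\pibs(Z_{\mathbf{s}})}^{-1}=\card[0]{\mathbf{s}}^{-1}\sum_{j\in\mathbf{s}}e^{\pm(\theta_{j}-\theta_{\ell})}$ yields $|g|\leq e^{\Delta}-1$, strictly sharper than the $e^{4\Delta}-1$ appearing in \eqref{eq:62}, which you then relax to match the stated bound. One small caution on part (i): you invoke strong ignorability and the balancing-score property, but neither is assumed in (nor needed for) this proposition --- \eqref{eq:72} is stated in terms of the observational regressions $\EE\bkt[1]{Y\mid Z=z,\vec{X}\betatrue=\vec{x}_{i}\betatrue}$, and the identification of $\EE\bkt[2]{Y_{i}\mid\mathcal{G}_{n},Z_{\mathbf{s}}=\zeta}$ with these quantities follows from the i.i.d.\ structure and independence across units alone; strong ignorability enters only afterwards, when the paper reinterprets \eqref{eq:72} as an average causal effect.
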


If the index deconfounds allocation of treatment,  $(Y_{C},
Y_{T}) \perp Z | \mathbf{X}\betatrue$, then \eqref{eq:72} coincides
with the average causal effect
\begin{equation*}
  \frac{\sum_{\mathbf{s}\in \mathcal{S}_{n}}\tilde{w}_{\mathbf{s}}\sum_{i
    \in \mathbf{s}} \EE\bkt[1]{Y_{Ti} - Y_{Ci} \mid \vec{X}\betatrue =
  \vec{x}_{i}\betatrue}}%
{\sum_{\mathbf{s}\in \mathcal{S}_{n}}\tilde{w}_{\mathbf{s}}\card[0]{\mathbf{s}}}.
  \label{eq:65}
\end{equation*}
If the index is a propensity score, this deconfounding 
flows from strong ignorability in the sense of
\citet{rosenbaum:rubi:1983}; if $\mathbf{X}\beta$ is a prognostic score, strong ignorability
entails index strong ignorability under a secondary ``no effect
modification'' condition \citep[Prop.~3]{hansen:2008biometrika}.

\bibliographystyle{asa}
\bibliography{abbrev_long,misc,causalinference,biomedicalapplications}

\appendix
\section{Review of mathematical symbols}

The symbols $|\cdot|_{2}$ and $|\cdot|_{\infty}$ indicate Euclidean
and supremum norms as usual (\S~\ref{defsec:ess-sup-norm}).
For scalar or vector random variables $V$, $\|V\|_{\psi_{2}}$ is the
sub-gaussian norm of $V$; for fixed matrices $M$, $|M|_{2}$ and
$|M|_{F}$ are $M$'s operator and Frobenius norms respectively
(\S~\ref{sec:sub-gaussian-random}). For matrices $M$ and $N$ of like
dimension, $\fip{M}{N}$ is the Frobenius inner product
$\operatorname{tr}(M'N)$ (\S~\ref{defsec:fip}).

Symbols $\hat\beta$, $\psi(r, \vec{x}, \beta)$ and
$\psiC(r, \vec{x}, \beta_{0}+ \vec{x}\beta)$ are defined in
Section~\ref{sec:estim-index-scor}, while
Section~\ref{sec:cons-estim-iss} defines $\betatrue$, $\tilde{\beta}$,
$A_{n}$, and $B_{n}$.  For partitions $\mathcal{S}$ of
$\{1, \ldots, n\}$, $[i]_{\mathcal{S}}$ denotes the subset of
$\{1, \ldots, n\}$ belonging to $\mathcal{S}$ that contains $i$ and
$i \stackrel{\mathcal{S}}{\sim} j$ means there is
$\mathbf{s} \in \mathcal{S}$ with both $i \in \mathbf{s}$ and
$j \in \mathbf{s}$ (\S~\ref{defsec:isimj}).
Section~\ref{sec:mean-worst-case-Gaussian} defines $\card[0]{S}$ (as
$\card{S}$) and $z^{*}_{m}$, for positive integers $m$.
$\indicator{\mathcal{A}}$ is the indicator of event $\mathcal{A}$.
Section~\ref{sec:cons-match-estim} defines
$\bar{v}_{\mathbf{s}}=\card[0]{\mathbf{s}}^{-1}\sum_{j\in
  \mathbf{s}}v_{j}$ for $\mathbf{s} \subseteq \{1, \ldots, n\}$;
associates estimating functions $\psi_{\mathcal{S}_{n}}(\cdot)$ and
$\tilde{\psi}_{\mathcal{S}_{n}}(\cdot)$, and sigma fields
$\mathcal{F}_{n}$ and $\mathcal{G}_{n}$, with partitions
$\mathcal{S}_{n}$ of $\{1, \ldots, n\}$; and also defines
$\bar{v}_{\mathbf{s}(z)}$ for $\mathbf{s} \subseteq \{1, \ldots, n\}$
and $z \in \{0,1\}$.

\section{Proofs for Section~\ref{sec:asympt-exact-post}}
\subsection{Section~\ref{sec:dimin-index-errors}}
\begin{proof}[Proof of Prop.~\ref{prop:btilde-subgnorm}]

  In light of \eqref{eq:4} and IS estimability (\ref{A-estimable}, \ref{A-invertibleA}),
  the difference between 
  $(\tilde{\beta}_{n} - \betatrue)$ and $A_{n}^{-1}
  \frac{1}{n}\bkt[2]{\sum_{j=1}^{n}\psi(R_{j}, \vec{x}_{j};
    \betatrue)}$ has Euclidean norm of order smaller than $n^{-1/2}$. Because it is
  nonrandom, the sub-gaussian norm of this difference is also $o(n^{-1/2})$.  So it
  suffices to show $\|A_{n}^{-1}
  \frac{1}{n}\bkt[2]{\sum_{j=1}^{n}\psi(R_{j}, \vec{x}_{j};
    \betatrue)}\|_{\psi_{2}} = O(n^{-1/2})$.  Given \ref{A-invertibleA},
  for this it suffices in turn to show that $\| \sum_{j=1}^{n}\psi(R_{j}, \vec{x}_{j};
  \betatrue)\|_{\psi_{2}} = O(n^{1/2})$.

  By~\eqref{eq:20}, 
  \begin{align*}
    \| \sum_{i=1}^{n}\psi(R_{i}, \vec{x}_{i}; \betatrue)
    \|_{\psi_{2}} &= \sup_{\gamma: |\gamma|_{2}=1} \| \sum_{i=1}^{n}\gamma'\psi(R_{i}, \vec{x}_{i}; \betatrue)
                    \|_{\psi_{2}} \\
    &= \sup_{\gamma: |\gamma|_{2}=1}\|\sum_{1}^{n} \psiC(R_{i}, \vec{x}_{i},
      \vec{x}_{i}\betatrue) w(\vec{x}_{i})\vec{x}_{i}\gamma\|_{\psi_{2}}.
  \end{align*}
 Let $k_{1}$ be a bound for $\|\psiC(R_{i}, \vec{x}_{i},\vec{x}_{i}\betatrue)
 \|_{\psi_{2}}$,  by~\ref{A-c0moments}. According to the general Hoeffding inequality
 \citep[\S~2.6]{vershynin18HDPbook}, there is a universal $k_{0}$ such that
  \begin{align*}
    \|\sum_{1}^{n} \psiC(R_{i}, \vec{x}_{i},
    \vec{x}_{i}\beta) w(\vec{x}_{i})\vec{x}_{i}\gamma\|_{\psi_{2}}^{2} 
                                                       \leq & 
                                                       k_{0}\sum_{1}^{n}\|\psiC(R_{i}, \vec{x}_{i},
    \vec{x}_{i}\beta) w(\vec{x}_{i})\vec{x}_{i}\gamma
                                                       \|_{\psi_{2}}^{2}\\
                                                         \leq &
                                                           k_{0}k_{1}\sum_{1}^{n}|w(\vec{x}_{i})\vec{x}_{i}\gamma|_{2}^{2}. 
                                                           \text{ So}
    \\
\sup_{\gamma: |\gamma|_{2}=1}\|\sum_{1}^{n} \psiC(R_{i}, \vec{x}_{i},
      \vec{x}_{i}\beta)
    w(\vec{x}_{i})\vec{x}_{i}\gamma\|_{\psi_{2}}^{2}    \leq &
                                                          k_{0}k_{1}
                                                          n\bbkt[1]{\frac{1}{n}\sum_{1}^{n}w(\vec{x}_{i})^{2}}\cdot\\
                   &\sup_{\gamma:
                                                          |\gamma|_{2}=1} \bbkt[1]{\frac{1}{n}\sum_{1}^{n}|\vec{x}_{i}\gamma|_{2}^{2}}.
                                                             \end{align*}
The left-hand side equals the square of $\|\sum_{1}^{n} \psiC(R_{i}, \vec{x}_{i},
      \vec{x}_{i}\beta) \vec{x}_{i}\|_{\psi_{2}}$, whereas
      \ref{A-l2Sfinite} says the product at right is $O(n)$. The proof is complete.   
\end{proof}

\subsection{Proofs for section~\ref{sec:mean-worst-case-Gaussian}}
\label{sec:proofs-sect-mean-worst-case-Gaussian}

The following lemma helps to prove Proposition~\ref{prop:gaussian-max-pic-e}.

\begin{lemma} \label{lem:D-psi-indep}
  Under the conditions of
  Proposition~\ref{prop:gaussian-max-pic-e},  for all $1\leq i< j <
  n$ we have $(\vec{X}_{i} -
  \vec{X}_{j}) \perp \tilde\beta - \betatrue \mid \mathcal{S}, \{\vec{\mu}\bkt[1]{\mathbf{s}}: \mathbf{s}\in \mathcal{S}\}$. 
\end{lemma}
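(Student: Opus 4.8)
The plan is to exploit the hypothesized joint Gaussianity, for which zero covariance is equivalent to independence. Fix $i<j$ and condition throughout on $\mathcal{S}$ and $\{\vec{\mu}(\mathbf{s}): \mathbf{s}\in\mathcal{S}\}$. Writing $\vec{X}_{i}-\vec{X}_{j} = (\vec{X}_{i}-\vec{\mu}([i]_{\mathcal{S}})) - (\vec{X}_{j}-\vec{\mu}([j]_{\mathcal{S}})) + (\vec{\mu}([i]_{\mathcal{S}})-\vec{\mu}([j]_{\mathcal{S}}))$, the first two summands are, by hypothesis, components of the jointly Normal vector featured in Proposition~\ref{prop:gaussian-max-pic-e}, while the third is a constant under the conditioning; hence $(\vec{X}_{i}-\vec{X}_{j},\ \tilde\beta-\betatrue)$ is itself jointly Normal, and independence will follow once I show $\cov_{\mathcal{S}}(\vec{X}_{i}-\vec{X}_{j},\ \tilde\beta-\betatrue)=0$, i.e.\ $\cov_{\mathcal{S}}(\vec{X}_{i},\tilde\beta)=\cov_{\mathcal{S}}(\vec{X}_{j},\tilde\beta)$.

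To establish this equality I would unfold $\tilde\beta$ through its defining linearization \eqref{eq:4}, $\tilde\beta-\betatrue = A_{n}^{-1}\,n^{-1}[\sum_{k=1}^{n}\psi(R_{k},\vec{X}_{k};\betatrue)+\pen{\betatrue}]$, in which $A_{n}$ and $\pen{\betatrue}$ are deterministic given the conditioning. Two routes then present themselves. The direct route computes $\cov_{\mathcal{S}}(\vec{X}_{i},\tilde\beta) = A_{n}^{-1}n^{-1}\sum_{k}\cov_{\mathcal{S}}(\vec{X}_{i},\psi(R_{k},\vec{X}_{k};\betatrue))$ and kills every $k\neq i$ term by independence of distinct observations, leaving $A_{n}^{-1}n^{-1}\cov_{\mathcal{S}}(\vec{X}_{i},\psi(R_{i},\vec{X}_{i};\betatrue))$; the analogous expression holds for $j$, and the two single-observation cross-covariances agree because $(R_{i},\vec{X}_{i})$ and $(R_{j},\vec{X}_{j})$ are identically distributed. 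The cleaner route argues by symmetry: for a pair with $i\stackrel{\mathcal{S}}{\sim} j$ the observations $(R_{i},\vec{X}_{i})$ and $(R_{j},\vec{X}_{j})$ are exchangeable, so swapping their labels leaves the permutation-symmetric sum $\sum_{k}\psi(R_{k},\vec{X}_{k};\betatrue)$---and hence $\tilde\beta-\betatrue$---unchanged while sending $\vec{X}_{i}-\vec{X}_{j}$ to its negative; thus $(\vec{X}_{i}-\vec{X}_{j},\ \tilde\beta-\betatrue)\stackrel{d}{=}(-(\vec{X}_{i}-\vec{X}_{j}),\ \tilde\beta-\betatrue)$, forcing the cross-covariance to equal its own negative and so to vanish.

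The step I expect to be the main obstacle is justifying the equality of the two single-observation cross-covariances (equivalently, the exchangeability underlying the symmetry route). This is transparent when $i\stackrel{\mathcal{S}}{\sim} j$, since then $\vec{\mu}([i]_{\mathcal{S}})=\vec{\mu}([j]_{\mathcal{S}})$ and the two observations share a distribution; it is precisely the within-stratum pairs that Proposition~\ref{prop:gaussian-max-pic-e} invokes. Care is needed on two points: that $A_{n}$ and $\pen{\betatrue}$ are genuinely nonrandom under the conditioning (so that $\tilde\beta-\betatrue$ is a fixed linear image of the score average, preserving both joint Normality and permutation symmetry), and that the constant offset $\vec{\mu}([i]_{\mathcal{S}})-\vec{\mu}([j]_{\mathcal{S}})$ present for cross-stratum pairs is immaterial to independence. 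Everything else is the routine implication that jointly Gaussian and uncorrelated entails independent.
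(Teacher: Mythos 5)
Your direct route is essentially the paper's own proof: expand $\tilde\beta-\betatrue$ via the linearization \eqref{eq:4}, note that $\cov_{\mathcal{S}}\bkt[1]{\vec{X}_{i}-\vec{X}_{j}, A_{n}^{-1}n^{-1}\sum_{k}\psi(R_{k},\vec{X}_{k};\betatrue)}$ reduces by independence across observations to the difference of the two single-observation cross-covariances, argue these cancel, and conclude by joint Normality that uncorrelated implies independent. The point you flag as the main obstacle --- that the cancellation is transparent only when $(R_{i},\vec{X}_{i})$ and $(R_{j},\vec{X}_{j})$ are identically distributed under the conditioning, i.e.\ for within-stratum pairs, which are the only ones Proposition~\ref{prop:gaussian-max-pic-e} actually uses --- is a genuine subtlety that the paper's proof also passes over silently, so your treatment is if anything slightly more careful than the original.
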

\begin{proof}Recall that $\tilde\beta - \betatrue =
  A_{n}^{-1}\frac{1}{n}\sum_{i=1}^{n}\psi(R_{i}, \vec{X}_{i};
  \betatrue)$.  Suppressing conditioning for $\mathcal{S}, \{\mu_{\mathbf{s}}: \mathbf{s}\in \mathcal{S}\}$  in the notation, 
  \begin{eqnarray*}
    \Cov{A_{n}^{-1}\frac{1}{n} \sum_{i=1}^{n}\psi(R_{i}, \vec{X}_{i}; \betatrue), \vec{X}_{1} -
  \vec{X}_{2}} =& \frac{1}{n} \Cov{A_{n}^{-1}\psi(R_{1}, \vec{X}_{1};
                 \betatrue), \vec{X}_{1}} - \\
    &\frac{1}{n}\Cov{A_{n}^{-1}\psi(R_{2}, \vec{X}_{2};
                 \betatrue), \vec{X}_{2}} \\
    &=0.
  \end{eqnarray*}
  By joint Normality of $\bkt[3]{\bkt[2]{\vec{X}_{i}; \psi'(R_{i}, \vec{X}_{i};
      \betatrue)}: i}$, $\frac{1}{n} A_{n}^{-1}\psi(R_{1}, \vec{X}_{1};
                 \betatrue)$ and $(\vec{X}_{i} - \vec{X}_{j})$ are
                 jointly Normal, and the fact that they are
                 uncorrelated means they are independent.
\end{proof}
\begin{proof}[Proof of Proposition~\ref{prop:gaussian-max-pic-e}]
For fixed $\gamma \in \Re^{p}$ we have, after some algebra that \Iwe/
omit, 
\begin{align*}
\EE \bkt[2]{(\vec{X}_{1} -
        \vec{X}_{2})\gamma}^{2}  &= 2\gamma'\Sigma\gamma \quad \text{and}\\
  \EE \big|\{(\vec{X}_{i} -
        \vec{X}_{j})\gamma
    : i \stackrel{\mathcal{S}}{\sim} j, i \neq j\}\big|_{2}^{2} &= 2\gamma'\Sigma\gamma.
  \end{align*}
(Throughout the proof \Iwe/ write 
````$\EE\bkt[2]{\cdot}$'' for ``$\EE_{\mathcal{S}}\bkt[2]{\cdot}$.'')   By the conditional independence
established in Lemma~\ref{lem:D-psi-indep}, it follows that 
\begin{equation*}
  \EE \big|\{(\vec{X}_{i} -
        \vec{X}_{j})(\tilde\beta - \betatrue)
    : i \stackrel{\mathcal{S}}{\sim} j, i \neq j\}\big|_{2}^{2} = \fip{2\Sigma}{C}.
  \end{equation*}
  
  Also for fixed $\gamma$, \eqref{eq:3} as applied to Normal variables gives
  \begin{equation}
    \EE \big|\{(\vec{X}_{i} -
        \vec{X}_{j})\gamma
    : i \stackrel{\mathcal{S}}{\sim} j, i \neq j\}\big|_{\infty} \leq \bkt[2]{4
      \gamma'\Sigma\gamma \log 2\card[0]{\mathcal{S}}}^{1/2}.\label{eq:5}
  \end{equation}

Since $\tilde{\beta}$ is independent of $\vec{X}_{i} - \vec{X}_{j}$
for each $1\leq i< j \leq n$ (Lemma~\ref{lem:D-psi-indep}), \eqref{eq:5}
entails
  \begin{multline*}
 \EE\bbkt[3]{\EE\bbkt[2]{\big|\{(\vec{X}_{i} -
        \vec{X}_{j})(\tilde{\beta} - \betatrue)
    : i \stackrel{\mathcal{S}}{\sim} j, i \neq j\}\big|_{\infty}\mid
                                                 \tilde{\beta}}^{2}}
    \leq \\ \EE\bbkt[3]{ 4
      (\tilde\beta -\betatrue)'\Sigma(\tilde\beta - \betatrue) \log 2\card[0]{\mathcal{S}}} 
    = 4\fip{\Sigma}{C_{n}}\log 2\card[0]{\mathcal{S}}.
  \end{multline*}
Combining this fact with Jensen's inequality for conditional expectation, 
  \begin{multline*}
    \bbkt[3]{\EE\bbkt[2]{\big|\{(\vec{X}_{i} -
        \vec{X}_{j})(\tilde{\beta} - \betatrue)
    : i \stackrel{\mathcal{S}}{\sim} j, i \neq j\}\big|_{\infty}}}^{2} \leq \\
 \EE\bbkt[3]{\EE\bbkt[2]{\big|\{(\vec{X}_{i} -
        \vec{X}_{j})(\tilde{\beta} - \betatrue)
    : i \stackrel{\mathcal{S}}{\sim} j, i \neq j\}\big|_{\infty}\mid
                                                 \tilde{\beta}}^{2}}
    \leq 4\fip{\Sigma}{C_{n}}\log 2\card[0]{\mathcal{S}}.
  \end{multline*}
\end{proof}

\subsection{Section~\ref{sec:covh-moder-high}}
\label{sec:Cproofs}

\begin{proof}[Proof of Lemma~\ref{lem:C-rate}]
  Letting $k<\infty$ denote the supremum (Condition~\ref{A-c0moments}) of sub-gaussian norms of $\{\psiC(R_{i},\vec{x}_{i}, \vec{x}_{i}\betatrue): i\}$, $\EE[\psiC(R_{i},\vec{x}_{i}, \vec{x}_{i}\betatrue)^4] \leq (2k)^4$ for all $i$ \citep[e.g.,][\S~2.5.2]{vershynin18HDPbook}.   Accordingly  $\sum_{i}\E{\psiC(R_{i},\vec{x}_{i},
 \vec{x}_{i}\betatrue)^{4}} = O(n)$ and in turn $\sum_{i}\bkt[3]{\E{\psiC(R_{i},\vec{x}_{i},
   \vec{x}_{i}\betatrue)^{2}}}^{2} = O(n)$.
Combining this with $\sup_{\gamma:
  |\gamma|_{2}=1} \sum_{i}w(\vec{x}_{i})^{4}(\vec{x}_{i}\gamma)^{4} = O(n)$
(Condition~\ref{A-l2Sfinite}), Cauchy-Schwartz gives $\sup_{\gamma:
  |\gamma|_{2}=1} \sum_{i} \E{\psiC(R_{i},\vec{x}_{i},
 \vec{x}_{i}\betatrue)^{2}}w(\vec{x}_{i})^{2} (\vec{x}_{i}\gamma)^{2} = O(n)$. 
Rearranging terms in light of \eqref{eq:20}, this says
$\sup_{\gamma: |\gamma|_{2}=1} \sum_i \E{[\gamma'\psi(R_i, \vec{x}_i, \beta)]^2} = 
O(n)$, or  $\sup_{\gamma: |\gamma|_{2}=1} \gamma' n B_{n}\gamma =
O(n)$; thus $|B_{n}|_2 = O(1)$.   Condition~\ref{A-invertibleA} gives 
$|{A}_{n}^{-1}|_{2}=O(1)$, so also $|{C}_{n}^{-1}|_{2}=O(n^{-1})$.
\end{proof}

Our demonstration of Proposition~\ref{lem:ChatC} relies on three supporting lemmas, as follows.

\begin{lemma} \label{lem:fullrankAinv}
  Under \ref{A-psismooth} and \ref{A-l2Sfinite},
  $\sup_{\beta}|A_{n}(\beta)|_{2}=O(1)$. 
\end{lemma}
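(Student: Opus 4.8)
The plan is to differentiate \eqref{eq:7} termwise, to extract a scalar factor that the Lipschitz hypothesis \ref{A-psismooth} bounds uniformly, and then to read the operator-norm bound off the moment control already contained in \ref{A-l2Sfinite}. Writing $\tilde{\vec{x}}_i=(1,\vec{x}_i)$ and using the product form \eqref{eq:20}, first I would note that $\EE[\psi(R_i,\vec{x}_i;\beta)]=g_i(\beta)\,w(\vec{x}_i)\,\tilde{\vec{x}}_i'$, where $g_i(\beta)=\EE[\psiC(R_i,\vec{x}_i,\tilde{\vec{x}}_i\beta)]$ is scalar and depends on $\beta$ only through $\eta=\tilde{\vec{x}}_i\beta$. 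The chain rule then makes the $i$-th summand of the Jacobian the symmetric rank-one matrix $\dot{c}_i(\beta)\,w(\vec{x}_i)\,\tilde{\vec{x}}_i'\tilde{\vec{x}}_i$, with $\dot{c}_i(\beta)=(\partial/\partial\eta)\EE[\psiC(R_i,\vec{x}_i,\eta)]|_{\eta=\tilde{\vec{x}}_i\beta}$, so that the leading (penalty-free) part of $A_n(\beta)$ is the symmetric matrix $M_n(\beta)=n^{-1}\sum_{i=1}^{n}\dot{c}_i(\beta)\,w(\vec{x}_i)\,\tilde{\vec{x}}_i'\tilde{\vec{x}}_i$.

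The key step, and the one that lets the supremum range over all of $\beta$ rather than merely a neighborhood of $\betatrue$, is a bound on $\dot{c}_i(\beta)$ uniform in both $i$ and $\beta$. By \ref{A-psismooth} each map $\eta\mapsto\psiC(r,\vec{x},\eta)$ is Lipschitz with a common constant $K$; passing to expectations, $\eta\mapsto\EE[\psiC(R,\vec{x},\eta)]$ is $K$-Lipschitz as well, and since \ref{A-psismooth} further posits that its derivative exists and is (Lipschitz-)continuous, that derivative is bounded in modulus by $K$ at every $\eta$. Hence $|\dot{c}_i(\beta)|\leq K$ for every $i$ and every $\beta$, with $K$ independent of $n$.

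With $M_n(\beta)$ symmetric and $w(\vec{x}_i)\geq 0$, I would finish through the Rayleigh-quotient characterization of the operator norm. For any unit vector $\gamma$, $|\gamma'M_n(\beta)\gamma|=|n^{-1}\sum_i\dot{c}_i(\beta)\,w(\vec{x}_i)(\tilde{\vec{x}}_i\gamma)^2|\leq K\,n^{-1}\sum_i w(\vec{x}_i)(\tilde{\vec{x}}_i\gamma)^2$. The Cauchy--Schwarz consequence of \ref{A-l2Sfinite} recorded just after its statement, $\max_{|\gamma|_2=1}n^{-1}\sum_i w(\vec{x}_i)^{\ell}(\vec{x}_i\gamma)^2=O(1)$ for $\ell\in\{0,1\}$, bounds the slope part of the last sum, while the intercept coordinate of $\tilde{\vec{x}}_i$ contributes only the bounded weighted average $n^{-1}\sum_i w(\vec{x}_i)$; the right-hand side is therefore $O(1)$ uniformly in $\gamma$ and $\beta$. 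Taking the supremum over unit $\gamma$ gives $\sup_\beta|M_n(\beta)|_2=O(1)$.

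It then remains only to absorb the penalty contribution $n^{-1}\nabla_\beta\alpha(\beta)$ to $A_n(\beta)$, which vanishes in the maximum-likelihood case $\alpha\equiv 0$ and has operator norm $O(n)$ for the convex penalties under consideration, hence $O(1)$ after the $n^{-1}$ scaling and harmless to the conclusion. The one genuinely load-bearing observation is the uniform-in-$\beta$ bound on $\dot{c}_i$: without noticing that the Lipschitz constant of $\psiC(r,\vec{x},\cdot)$ already controls $(\partial/\partial\eta)\EE[\psiC]$ at every $\eta$, one could bound $|A_n(\beta)|_2$ only near $\betatrue$ and not as the stated supremum over all $\beta$. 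Everything else is bookkeeping, resting entirely on \ref{A-l2Sfinite}.
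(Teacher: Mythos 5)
Your argument is essentially the paper's own proof: both differentiate the expected score termwise to get rank-one summands $\psiCi(\vec{x}_i,\vec{x}_i\beta)\,w(\vec{x}_i)\,\vec{x}_i'\vec{x}_i$, bound $|\psiCi|$ uniformly in $\beta$ by the common Lipschitz constant from \ref{A-psismooth}, and close via the Rayleigh quotient together with the Cauchy--Schwarz consequence of \ref{A-l2Sfinite} that $\sup_{|\gamma|_2=1} n^{-1}\sum_i w(\vec{x}_i)(\vec{x}_i\gamma)^2 = O(1)$; you are in fact more explicit than the paper about why the derivative of $\EE\psiC$ inherits the bound $K$. The only cosmetic difference is your carrying of the intercept coordinate $\tilde{\vec{x}}_i=(1,\vec{x}_i)$, which the paper's proof (working with $\vec{x}_i'\vec{x}_i$ only) avoids, so the auxiliary claim that $n^{-1}\sum_i w(\vec{x}_i)$ is bounded is not actually needed.
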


\begin{proof}
  Write $\psiCi(\vec{x}, \eta) \defeq
(\partial/\partial \eta) \EE \bkt[2]{\psiC(R, \vec{x}, \eta) \mid \vec{X}=\vec{x}}$ so that 
$\nabla_{\beta} \EE\bkt[2]{\psi(R, \vec{x},
\beta) \mid \vec{X}=\vec{x}}  = \psiCi(\vec{x}, \vec{x}\beta)
w(\vec{x})\vec{x}'\vec{x}$. By \ref{A-psismooth}, there is $K_{1}<
\infty$ such that 
$|\psiCi(\vec{x}, \beta_{0}+\vec{x}\beta)| < K_{1}$, for any
$\beta$. By Cauchy-Schwartz, \ref{A-l2Sfinite} gives $\sup_{\gamma:
  |\gamma|_{2}=1}\sum_{i=1}^{n}w(\vec{x}_{i}) (\vec{x}_{i}\gamma)^{2}
= O(n)$.  Since 
\begin{equation*}
  n|A_{n}(\beta)|_{2} = \sup_{\gamma:
  |\gamma|_{2}=1} \gamma'\bbkt[2]{\sum_{i}\psiCi(\vec{x}_{i}, \vec{x}_{i}\beta)
w(\vec{x}_{i})\vec{x}'\vec{x}}\gamma  \leq K_{1}\sup_{\gamma:
  |\gamma|_{2}=1}\sum_{i}w(\vec{x}_{i})  (\vec{x}\gamma)^{2}
\end{equation*}
for any $\beta$, the result follows.
\end{proof}
\begin{lemma} \label{lem:ABhatbetahatABhatbetaT}
    Under \ref{A-centering}--\ref{A-consistencyrates},
   $|{A}_{n}(\hat\beta) - {A}_{n}(\betatrue)|_2 \stackrel{P}{\rightarrow} 0$ and
   $|\hat{B}_{n}(\hat\beta) - \hat{B}_{n}(\betatrue)|_2 \stackrel{P}{\rightarrow} 0$.   
\end{lemma}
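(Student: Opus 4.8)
The plan is to prove both convergences by a common template: linearize each matrix-valued map around $\betatrue$, use the product structure \eqref{eq:20} to reduce the difference to a scalar increment of $\psiCi$ (resp.\ $\psiC^{2}$) multiplying the fixed rank-one factor $w(\vec{x}_{i})^{2}\vec{x}_{i}'\vec{x}_{i}$, and then control operator norms by evaluating the associated quadratic forms against \ref{A-l2Sfinite}. Throughout write $\Delta := \hat\beta - \betatrue$, so $|\Delta|_{2} = O_{P}\bkt[2]{(p/n)^{1/2}}$ by Proposition~\ref{lem:betahat-consist}, and $u := \Delta/|\Delta|_{2}$; since $p \log n = o(n)$ under \ref{A-consistencyrates} we have $|\Delta|_{2}=o_{P}(1)$, hence $\hat\beta$ in the unit ball about $\betatrue$ with probability tending to one. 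I would also invoke the consequences of \ref{A-l2Sfinite} recorded after its statement, $\max_{|\gamma|_{2}=1}n^{-1}\sum_{i}w(\vec{x}_{i})^{\ell}(\vec{x}_{i}\gamma)^{2}=O(1)$ for $\ell\in\{0,1,2\}$, together with $\EE\,\kappa_{i}^{2}\leq C$ from \ref{A-c0moments}, where $\kappa_{i}:=\psiC(R_{i},\vec{x}_{i},\vec{x}_{i}\betatrue)$, and I let $L$ denote a common Lipschitz constant supplied by \ref{A-psismooth}.

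For the first claim, the computation in the proof of Lemma~\ref{lem:fullrankAinv} gives $A_{n}(\hat\beta)-A_{n}(\betatrue)=n^{-1}\sum_{i}a_{i}\,\vec{x}_{i}'\vec{x}_{i}$ plus a penalty term, where $a_{i}=\bkt[2]{\psiCi(\vec{x}_{i},\vec{x}_{i}\hat\beta)-\psiCi(\vec{x}_{i},\vec{x}_{i}\betatrue)}w(\vec{x}_{i})$; the penalty contributes $o_{P}(n^{-1/2})$ by the estimability bound $\sup_{|\gamma-\betatrue|_{2}\leq 1}|\alpha(\gamma)|_{2}=o_{P}(n^{1/2})$ and $\hat\beta$'s consistency. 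Lipschitz continuity of $\psiCi(\vec{x},\cdot)$ yields $|a_{i}|\leq L|\Delta|_{2}|w(\vec{x}_{i})|\,|\vec{x}_{i}u|$, so for any unit $\gamma$ the quantity $|\gamma'\bkt[1]{A_{n}(\hat\beta)-A_{n}(\betatrue)}\gamma|$ is at most $L|\Delta|_{2}n^{-1}\sum_{i}|w(\vec{x}_{i})|\,|\vec{x}_{i}u|(\vec{x}_{i}\gamma)^{2}$ up to the negligible penalty. Splitting the summand as $\bkt[1]{|w(\vec{x}_{i})|\,|\vec{x}_{i}u|\,|\vec{x}_{i}\gamma|}\cdot|\vec{x}_{i}\gamma|$ and applying Cauchy--Schwarz bounds the sum by $\bkt[1]{\sum_{i}w(\vec{x}_{i})^{2}(\vec{x}_{i}u)^{2}(\vec{x}_{i}\gamma)^{2}}^{1/2}\bkt[1]{\sum_{i}(\vec{x}_{i}\gamma)^{2}}^{1/2}=O(n)$, uniformly in unit $u,\gamma$ by \ref{A-l2Sfinite} at $\ell=2$ and $\ell=0$. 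Hence $|A_{n}(\hat\beta)-A_{n}(\betatrue)|_{2}=O_{P}(|\Delta|_{2})=O_{P}\bkt[2]{(p/n)^{1/2}}\to 0$.

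For the second claim, \eqref{eq:20} gives $\psi\psi'=\psiC(r,\vec{x},\vec{x}\beta)^{2}w(\vec{x})^{2}\vec{x}'\vec{x}$, so $\hat{B}_{n}(\hat\beta)-\hat{B}_{n}(\betatrue)=n^{-1}\sum_{i}b_{i}\,\vec{x}_{i}'\vec{x}_{i}$ with $b_{i}=\bkt[2]{\psiC(r_{i},\vec{x}_{i},\vec{x}_{i}\hat\beta)^{2}-\psiC(r_{i},\vec{x}_{i},\vec{x}_{i}\betatrue)^{2}}w(\vec{x}_{i})^{2}$. Factoring $a^{2}-b^{2}=(a-b)(a+b)$ and using \ref{A-psismooth} gives $|b_{i}|\leq L|\vec{x}_{i}\Delta|\bkt[2]{2|\kappa_{i}|+L|\vec{x}_{i}\Delta|}w(\vec{x}_{i})^{2}$, splitting the bound on $|\gamma'\bkt[1]{\hat{B}_{n}(\hat\beta)-\hat{B}_{n}(\betatrue)}\gamma|$ into a piece quadratic in $\Delta$ and a piece linear in $\Delta$ carrying the random factor $\kappa_{i}$. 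The quadratic piece is at most $L^{2}|\Delta|_{2}^{2}\,n^{-1}\sum_{i}w(\vec{x}_{i})^{2}(\vec{x}_{i}u)^{2}(\vec{x}_{i}\gamma)^{2}=O_{P}(p/n)=o_{P}(1)$ by \ref{A-l2Sfinite} at $\ell=2$.

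The linear piece $2L|\Delta|_{2}\,n^{-1}\sum_{i}|\vec{x}_{i}u|\,|\kappa_{i}|\,w(\vec{x}_{i})^{2}(\vec{x}_{i}\gamma)^{2}$ is the crux, since $\kappa_{i}$ is random and correlated with the data-dependent direction $u$. Pairing $|\kappa_{i}|\,|\vec{x}_{i}u|\,w(\vec{x}_{i})$ against $w(\vec{x}_{i})(\vec{x}_{i}\gamma)^{2}$ and applying Cauchy--Schwarz bounds the sum by $\bkt[1]{\sum_{i}\kappa_{i}^{2}(\vec{x}_{i}u)^{2}w(\vec{x}_{i})^{2}}^{1/2}\bkt[1]{\sum_{i}w(\vec{x}_{i})^{2}(\vec{x}_{i}\gamma)^{4}}^{1/2}$; the second factor is $O(n^{1/2})$ uniformly in $\gamma$ by \ref{A-l2Sfinite} at $\ell=2$, while the first is at most $|Q|_{2}^{1/2}$ with $Q:=\sum_{i}\kappa_{i}^{2}w(\vec{x}_{i})^{2}\vec{x}_{i}'\vec{x}_{i}$. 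Rather than prove $|Q|_{2}=O_{P}(n)$ by matrix concentration, I would absorb the fluctuation crudely via $|Q|_{2}\leq\bkt[1]{\max_{i\leq n}\kappa_{i}^{2}}\,\big|\sum_{i}w(\vec{x}_{i})^{2}\vec{x}_{i}'\vec{x}_{i}\big|_{2}=O_{P}(\log n)\cdot O(n)$, using \eqref{eq:3} with \ref{A-c0moments} for $\max_{i}\kappa_{i}^{2}=O_{P}(\log n)$ and \ref{A-l2Sfinite} at $\ell=2$ for the operator norm. This makes the linear piece of order $|\Delta|_{2}\,n^{-1}(n\log n)^{1/2}n^{1/2}=O_{P}\bkt[2]{(p\log n/n)^{1/2}}$, which vanishes exactly under \ref{A-consistencyrates}. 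Combining with the quadratic piece gives $|\hat{B}_{n}(\hat\beta)-\hat{B}_{n}(\betatrue)|_{2}=o_{P}(1)$. The one delicate point is this $\sqrt{\log n}$ loss in bounding $|Q|_{2}$, affordable only because the prefactor $|\Delta|_{2}=O_{P}\bkt[2]{(p/n)^{1/2}}$ supplies precisely the slack that \ref{A-consistencyrates} converts into convergence.
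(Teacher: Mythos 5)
Your proof is correct, and its skeleton coincides with the paper's: linearize $A_{n}$ via the Lipschitz property of $(\partial/\partial\eta)\EE\psiC$, factor $\psiC^{2}(\hat\eta_{i})-\psiC^{2}(\eta_{i})$ into a piece quadratic in $\hat\beta-\betatrue$ and a cross piece linear in $\hat\beta-\betatrue$ carrying $\psiC(R_{i},\vec{x}_{i},\eta_{i})$, and control every resulting quadratic form by Cauchy--Schwarz against \ref{A-l2Sfinite}. (Your explicit disposal of the penalty term in $A_{n}$ is a small point the paper passes over silently, and is fine.) The one step where you genuinely diverge is the cross term. The paper applies Cauchy--Schwarz a second time to separate $\psiC(R_{i},\vec{x}_{i},\eta_{i})$ from the design, paying with a factor $\bkt[1]{n^{-1}\sum_{i}\psiC^{4}}^{1/4}$, which is $O_{P}(1)$ by \ref{A-c0moments} and Markov, and with the $\ell=4$ case of \ref{A-l2Sfinite}; this yields the cross term at $O_{P}\bkt[2]{(p/n)^{1/2}}$. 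You instead absorb the randomness through $\max_{i\leq n}\psiC^{2}(R_{i},\vec{x}_{i},\eta_{i})=O_{P}(\log n)$ via \eqref{eq:3}, needing only the $\ell=2$ case of \ref{A-l2Sfinite} but landing at $O_{P}\bkt[2]{(p\log n/n)^{1/2}}$. Both routes close under \ref{A-consistencyrates}, precisely because that condition is $p\log n=o(n)$; you correctly flag that your extra $(\log n)^{1/2}$ is affordable only for that reason. The trade is that the paper's argument gives the sharper rate $O_{P}\bkt[2]{(p/n)^{1/2}}$ for $|\hat{B}_{n}(\hat\beta)-\hat{B}_{n}(\betatrue)|_{2}$ and would survive a weakening of the growth condition to $p=o(n)$ alone, while yours economizes on the moment hypotheses by never invoking the fourth-power case of \ref{A-l2Sfinite}.
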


\begin{lemma} \label{lem:BhatbetatrueB}
  Under \ref{A-centering}--\ref{A-consistencyrates} as well as
  \ref{A-boundedXes}  and sub-$\sqrt{n}$ dimension,
   $|\hat{B}_{n}(\betatrue) - B_{n}(\betatrue)|_2 \stackrel{P}{\rightarrow} 0$.   
\end{lemma}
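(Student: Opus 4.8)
The plan is to write the difference as a normalized sum of independent, mean-zero random matrices and then bound its operator norm by a matrix concentration inequality. Using the structural form \eqref{eq:20}, set $u_{i}\defeq \psiC(R_{i}, \vec{x}_{i}, \vec{x}_{i}\betatrue)\,w(\vec{x}_{i})$ and $c_{i}\defeq(1,\vec{x}_{i})'$, so that $\psi(R_{i}, \vec{x}_{i}; \betatrue)=u_{i}c_{i}$ and, because the $c_{i}$ are nonrandom,
\[
\hat{B}_{n}(\betatrue)-B_{n}(\betatrue)=\frac{1}{n}\sum_{i=1}^{n}\bkt[1]{u_{i}^{2}-\EE(u_{i}^{2})}\,c_{i}c_{i}'.
\]
The summands $M_{i}\defeq n^{-1}\bkt[1]{u_{i}^{2}-\EE(u_{i}^{2})}c_{i}c_{i}'$ are independent (the $R_{i}$ being independent given the fixed design), symmetric, and mean zero, so the task is to show $|\sum_{i}M_{i}|_{2}\stackrel{P}{\rightarrow}0$.

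Next I would assemble the moment and norm estimates furnished by the hypotheses. By \ref{A-c0moments} the scalar $\psiC(R_{i},\vec{x}_{i},\vec{x}_{i}\betatrue)$ has bounded sub-gaussian norm, and by \ref{A-boundedXes} $w(\vec{x}_{i})^{2}=O(\log n)$ uniformly in $i$; hence $\|u_{i}\|_{\psi_{2}}=O\bkt[1]{(\log n)^{1/2}}$, so $u_{i}^{2}$ is sub-exponential with $\var(u_{i}^{2})\leq \EE(u_{i}^{4})=O\bkt[1]{(\log n)^{2}}$, while \eqref{eq:3} gives $\EE\max_{i}|u_{i}|=O(\log n)$. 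From \ref{A-boundedXes}, $\max_{i}|c_{i}|_{2}^{2}=1+\max_{i}|\vec{x}_{i}|_{2}^{2}=O(\max(p,\log n))$; and centering (\ref{A-centering}) makes $\sum_{i}c_{i}c_{i}'$ block-diagonal with blocks $n$ and $\sum_{i}\vec{x}_{i}'\vec{x}_{i}=(n-L)S^{(x)}$, so that $|\sum_{i}c_{i}c_{i}'|_{2}=O(n)$ by the consequence $|S^{(x)}|_{2}=O(1)$ of \ref{A-l2Sfinite} noted after its statement.

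I would then invoke a matrix Bernstein inequality \citep[e.g.][]{tropp15}: for independent mean-zero symmetric $M_{i}$ with $|M_{i}|_{2}\leq R$ and $v\defeq|\sum_{i}\EE(M_{i}^{2})|_{2}$, one has $\EE|\sum_{i}M_{i}|_{2}\lesssim (v\log p)^{1/2}+R\log p$. Since $\EE(M_{i}^{2})=n^{-2}\var(u_{i}^{2})|c_{i}|_{2}^{2}c_{i}c_{i}'$ and $a_{i}c_{i}c_{i}'\preceq(\max_{i}a_{i})c_{i}c_{i}'$ for nonnegative scalars $a_{i}$, the variance proxy is $v\leq n^{-2}\max_{i}\bkt[1]{\var(u_{i}^{2})|c_{i}|_{2}^{2}}\,|\sum_{i}c_{i}c_{i}'|_{2}=O\bkt[1]{n^{-1}(\log n)^{2}\max(p,\log n)}$, whence $(v\log p)^{1/2}=O\bkt[1]{n^{-1/2}(\log n)\max(p,\log n)^{1/2}(\log p)^{1/2}}$. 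Under sub-$\sqrt{n}$ dimension \ref{A-paramrates}, $\max(p,\log n)=o\bkt[1]{(n/\log n)^{1/2}}$, so this term is $o(1)$; the $R\log p$ term is handled the same way once $R$ is controlled.

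The one real obstacle is that $u_{i}^{2}$ is only sub-exponential, so no almost-sure bound $R$ exists and matrix Bernstein does not apply verbatim. I would resolve this by truncation: put $T=(\log n)^{3/2}$, replace $u_{i}$ by $u_{i}\indicator{|u_{i}|\leq T}$, and note that the sub-gaussian tail of $u_{i}$ with a union bound gives $\PP(\max_{i}|u_{i}|>T)\leq 2n\exp\bkt[1]{-cT^{2}/\log n}=2n\exp\bkt[1]{-c(\log n)^{2}}\to 0$, so the truncated and untruncated sums coincide with probability tending to one. On the truncated sum the Bernstein bound applies with $R=O\bkt[1]{n^{-1}T^{2}\max(p,\log n)}=O\bkt[1]{n^{-1}(\log n)^{3}\max(p,\log n)}$, giving $R\log p\to 0$ under \ref{A-paramrates}; and the truncation bias $n^{-1}\sum_{i}\EE\bkt[1]{u_{i}^{2}\indicator{|u_{i}|>T}}c_{i}c_{i}'$ has operator norm at most $\max_{i}\EE\bkt[1]{u_{i}^{2}\indicator{|u_{i}|>T}}\cdot n^{-1}|\sum_{i}c_{i}c_{i}'|_{2}$, which is super-polynomially small times $O(1)$ and hence negligible. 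Combining the three contributions yields $\EE|\sum_{i}M_{i}|_{2}\to 0$ on the high-probability event, and therefore $|\hat{B}_{n}(\betatrue)-B_{n}(\betatrue)|_{2}\stackrel{P}{\rightarrow}0$.
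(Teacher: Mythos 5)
Your argument is correct, but it reaches the conclusion by a genuinely different route from the paper. The paper fixes a unit vector $\gamma$, observes (as you do) that the centered summands $\bkt[2]{\psiC^{2}(R_{i},\vec{x}_{i},\vec{x}_{i}\betatrue)-\EE\psiC^{2}}w(\vec{x}_{i})^{2}(\vec{x}_{i}\gamma)^{2}$ are sub-exponential, applies the \emph{scalar} Bernstein inequality to the quadratic form $\gamma'[\hat{B}_{n}(\betatrue)-B_{n}(\betatrue)]\gamma$, and then passes to the operator norm with a $1/4$-net of the sphere of cardinality $9^{p}$; the price of that union bound is an additive $p\log 9$ in the exponent, and sub-$\sqrt{n}$ dimension is exactly what lets the range-controlled branch of the Bernstein minimum, of order $n/\max_{i}w(\vec{x}_{i})^{2}(\vec{x}_{i}\gamma)^{2}=n/O(p\log n)$, dominate that $p$ term. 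You instead keep the matrix structure and invoke matrix Bernstein, whose dimension factor is only $\log(p+1)$; the cost is that matrix Bernstein wants almost surely bounded summands, which forces the truncation of $u_{i}$ at $(\log n)^{3/2}$ and the attendant bookkeeping (the whp coincidence of truncated and untruncated sums, and the super-polynomially small recentering bias), none of which the paper needs because scalar Bernstein handles unbounded sub-exponential variables directly. A second difference is in the variance proxy: the paper controls its variance term through the averaged fourth-moment condition \ref{A-l2Sfinite} ($n^{-1}\sum_{i}w(\vec{x}_{i})^{4}(\vec{x}_{i}\gamma)^{4}=O(1)$), whereas you bound $\|\sum_{i}\EE M_{i}^{2}\|_{2}$ by pulling out $\max_{i}\var(u_{i}^{2})|c_{i}|_{2}^{2}$, which leans harder on \ref{A-boundedXes} but is still adequate. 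What your route buys is a visibly weaker dimension requirement: your two Bernstein terms need only roughly $p(\log n)^{3}\log p=o(n)$ rather than the $p^{2}\log n=o(n)$ of \ref{A-paramrates}, so the same argument would prove the lemma well beyond the sub-$\sqrt{n}$ regime (though the lemma is only invoked there). One presentational nit: rather than ``$\EE|\sum_{i}M_{i}|_{2}\to 0$ on the high-probability event,'' you should bound $\PP(|\hat{B}_{n}(\betatrue)-B_{n}(\betatrue)|_{2}>\epsilon)$ by the probability that some $|u_{i}|>T$ plus a Markov bound using the expectation of the truncated, recentered sum; the substance is unaffected.
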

  
Proofs of Lemmas~\ref{lem:ABhatbetahatABhatbetaT}
and~\ref{lem:BhatbetatrueB} are given following the proof of
Proposition~\ref{lem:ChatC}.

\begin{proof}[Proof of Proposition~\ref{lem:ChatC}]
Since $\hat{A}_{n} \equiv {A}_{n}(\hat\beta)$,  
$|\hat{A}_{n} - A_{n}|_{2} \stackrel{P}{\rightarrow} 0$ 
follows from Lemma~\ref{lem:ABhatbetahatABhatbetaT}.
Since $\hat{B}_{n} \equiv \hat{B}_{n}(\hat\beta)$, 
$|\hat{B}_{n} - B_{n}|_{2} \stackrel{P}{\rightarrow} 0$
follows from
$|\hat{B}_{n}(\hat\beta) - \hat{B}_{n}(\betatrue)|_{2} \stackrel{P}{\rightarrow} 0$
(Lemma~\ref{lem:ABhatbetahatABhatbetaT}) and
$|\hat{B}_{n}(\betatrue) - B_{n}|_2
\stackrel{P}{\rightarrow} 0$ (Lemma~\ref{lem:BhatbetatrueB}).

Since $|{A}_{n}^{-1}|_{2}=O(1)$ and $|\hat{A}_{n}^{-1}|_{2}=
O_{P}(1)$ (Condition~\ref{A-invertibleA} and Prop.~\ref{lem:betahat-consist}),
it follows that $|\hat{A}_{n}^{-1} - A_{n}^{-1}|_{2} \stackrel{P}{\rightarrow} 0$, by applying sub-multiplicativity of the spectral norm to the right-hand side of 
$(\hat{A}_{n}^{-1} - A_{n}^{-1}) = \hat{A}_{n}^{-1} (\hat{A}_{n} - A_{n}) {A}_{n}^{-1}$.  Since also $|B_{n}|_{2} = O_{P}(1)$ (Lemma~\ref{lem:C-rate}), 
the 2-norms of the second and third summands in
\marginpar{Mostly OK but odd\ldots}
$\hat{A}_{n}^{-1}\hat{B}_{n}\hat{A}_{n}^{-1}=$
\begin{equation*}
  A_{n}^{-1}B_{n}A_{n}^{-1}  + {(\hat{A}_{n}^{-1} - A_{n}^{-1})B_{n}A_{n}^{-1}} + {\hat{A}_{n}^{-1}B_{n}(\hat{A}_{n}^{-1} - A_{n}^{-1})}   + \hat{A}_{n}^{-1}(\hat{B}_{n} - B_{n})\hat{A}_{n}^{-1}
\end{equation*}
must tend in probability to 0.  Thus the stochastic order of
$|A_{n}^{-1}B_{n}A_{n}^{-1} -
\hat{A}_{n}^{-1}\hat{B}_{n}\hat{A}_{n}^{-1}|_{2}$ can be no greater
than that of $|\hat{A}_{n}^{-1}(\hat{B}_{n} - B_{n})
\hat{A}_{n}^{-1}|_{2}$.  But as~\ref{A-psismooth} and~\ref{A-l2Sfinite}
entail that $|\hat{A}_{n}|_{2} = O_{P}(1)$, by Lemma~\ref{lem:fullrankAinv}, 
$O_{P}(|\hat{A}_{n}^{-1}(\hat{B}_{n} - B_{n}) \hat{A}_{n}^{-1}|_{2}) =
O_{P}(|\hat{B}_{n} - B_{n}|_{2}) = o_P(1)$; this means
$|C_n - \hat{C}_n|_{2} = n^{-1}|A_{n}^{-1}B_{n}A_{n}^{-1} -
\hat{A}_{n}^{-1}\hat{B}_{n}\hat{A}_{n}^{-1}|_{2} = o_P(n^{-1})$.
\end{proof}
This proof of Lemma~\ref{lem:ABhatbetahatABhatbetaT} was based in part
on Wang's proof of a similar principle for generalized estimating equations 
\citeyearpar[][Thm.~3.10]{wang2011gee}.

\begin{proof}[Proof of Lemma~\ref{lem:ABhatbetahatABhatbetaT}]
  To establish $|A_{n}(\hat\beta) - A_{n}(\betatrue)|_2 = \sup_{\gamma:
  |\gamma|_{2}=1}\gamma'[A_{n}(\hat\beta) -
A_{n}(\betatrue) ]\gamma  \stackrel{P}{\rightarrow} 0$,  let 
$K_1< \infty$ be a Lipschitz constant for $\eta \mapsto \psiCi(r,
\vec{x}, \eta)$, each $r$ and $\vec{x}$, where $\psiCi(\cdot)$ is as
defined in the proof of Lemma~\ref{lem:fullrankAinv}, above. 
(By Condition~\ref{A-psismooth}.%
)
Then 
\begin{multline}
  |\gamma'\bkt[3]{\nabla_{\beta}
  \EE\bkt[2]{\psi(R, \vec{x}_{i},\beta) \mid
    \vec{X}=\vec{x}_{i}}_{\beta = \hat\beta} -
  \nabla_{\beta} \EE\bkt[2]{\psi (R,\vec{x}_{i},\beta) \mid
    \vec{X}=\vec{x}_{i}}_{\beta = \betatrue}}\gamma| \\
\leq K_1 
                                           \bkt[2]{\vec{x}_{i}(\hat\beta -
                                           \betatrue)} w(\vec{x}_{i})
                                           (\vec{x}_{i}\gamma)^{2}. \label{eq:21} 
\end{multline}
Summing over $i$ and applying Cauchy-Schwartz, 
\begin{align}
(\gamma'[A_{n}(\hat\beta) - A_{n}(\betatrue) ]\gamma)^{2} \leq & 
K_1^{2}\left[ \frac{1}{n} \sum_{i=1}^{n}w(\vec{x}_{i})^{2}
                                           (\vec{x}_{i}\gamma)^{4}\right]
\left[ \frac{1}{n} \sum_{i=1}^{n}\left( \vec{x}_{i} \frac{\hat\beta - \betatrue}{|\hat\beta - \betatrue|_{2}}\right)^{2}\right]
\times   |\hat\beta - \betatrue|_{2}^{2}, \nonumber 
\end{align}
interpreting ``$\vec{x}(\delta/|\delta|_{2})$''
as $0$ when $|\delta|_{2} = 0$. 
It follows that
\begin{equation*}
|A_{n}(\hat\beta) - A_{n}(\betatrue)|_2^{2}
\leq
 K_1^{2} \bbkt[2]{\sup_{\gamma: |\gamma|_{2} =1} \frac{1}{n} 
\sum_{i=1}^{n}w(\vec{x}_{i})^{2}(\vec{x}_{i}\gamma)^{4}} \bbkt[2]{
\sup_{\delta: |\delta|_{2} =1} \frac{1}{n} \sum_{i=1}^{n} 
(\vec{x}_{i}\delta)^{2}}
 |\hat\beta - \betatrue |_{2}^{2}.  
\end{equation*}

Observe that the conditions of Proposition~\ref{lem:betahat-consist} follow from
those of Proposition~\ref{lem:ChatC}, so that we may assume $|\hat\beta -
\betatrue|_{2} = O_{P}\bkt[2]{\bkt[1]{p/n}^{1/2}}$.
This combines with Condition~\ref{A-l2Sfinite} to give
$|A_{n}(\hat\beta) - A_{n}(\betatrue)|_{2} =
O(1)O(1) O_{P}\bkt[2]{\bkt[1]{p/n}^{1/2}}$, which by~\ref{A-consistencyrates} 
is $o_{P}(1)$. 

As to $|\hat{B}_{n}(\hat\beta) - \hat{B}_{n}(\betatrue)|_{2}$,
\begin{align}
 \gamma' \{ \Psi(R_{i}, \vec{x}_{i}, \hat\beta) &\Psi(R_{i}, \vec{x}_{i},
\hat\beta)' - \Psi(R_{i}, \vec{x}_{i}, \betatrue) \Psi(R_{i},
\vec{x}_{i}, \betatrue)' \} \gamma \\
=&  \bkt{\psiC^{2}(R_{i},
  \vec{x}_{i}, \hat{\eta}_{i}) - \psiC^{2}(R_{i}, \vec{x}_{i},
                                     {\eta}_{i})} w(\vec{x}_{i})^{2}
                                     (\vec{x}_{i}\gamma)^{2} \nonumber\\
=& \bkt{\psiC(R_{i}, \vec{x}_{i}, \hat{\eta}_{i}) - 
              \psiC(R_{i}, \vec{x}_{i}, {\eta}_{i}) } 
\bkt{\psiC(R_{i}, \vec{x}_{i}, \hat{\eta}_{i}) +
 \psiC(R_{i}, \vec{x}_{i},
   {\eta}_{i})} w(\vec{x}_{i})^{2} (\vec{x}_{i}\gamma)^{2} \nonumber\\
= & \bkt{\psiC(R_{i}, \vec{x}_{i}, \hat{\eta}_{i}) - 
              \psiC(R_{i}, \vec{x}_{i}, {\eta}_{i})
    }^{2}w(\vec{x}_{i})^{2} (\vec{x}_{i}\gamma)^{2} \nonumber \\
& + \bkt{\psiC(R_{i}, \vec{x}_{i}, \hat{\eta}_{i}) - 
              \psiC(R_{i}, \vec{x}_{i}, {\eta}_{i}) } \cdot 2 \psiC(R_{i}, \vec{x}_{i},
   {\eta}_{i}) w(\vec{x}_{i})^{2} (\vec{x}_{i}\gamma)^{2} \nonumber \\
=: & V_{i}(\gamma) + W_{i}(\gamma) . \label{eq:19}
\end{align}
By the Lipschitz property  (\ref{A-psismooth}) of $\psiC(r, \vec{x}, \cdot)$, 
\begin{align}
 \sup_{\gamma: |\gamma|_{2}=1} \frac{1}{n}\sum_{i} |V_{i}(\gamma)| \leq 
& K_{1}^{2}  \sup_{\gamma: |\gamma|_{2}=1} \frac{1}{n}\sum_{i}(\hat{\eta}_{i} -
                                  \eta_{i})^{2}w(\vec{x}_{i})^{2} (\vec{x}_{i}\gamma)^{2} \nonumber\\
\leq &  K_{1}^{2}|\hat\beta -\betatrue|_{2}^{2} 
\sup_{\delta, \gamma:  |\delta|_{2}=|\gamma|_{2}=1}
 \frac{1}{n} \sum_{i}  w(\vec{x}_{i})^{2} (\vec{x}_{i}\delta)^{2}(\vec{x}_{i}\gamma)^{2} \nonumber \\
& = O(1) O_{P}(p/n) O(1) = o_{P}(1), 
  \label{eq:26}
\end{align}
invoking Assumption~\ref{A-l2Sfinite} and consistency of $\hat\beta$ at
\eqref{eq:26}. %
The Lipschitz property of $\psiC(r, \vec{x}, \cdot)$
also gives
\begin{align}
 \sup_{\gamma: |\gamma|_{2}=1} \frac{1}{n}\sum_{i} |W_{i}(\gamma)| \leq 
& 2 K_{1} \sup_{\gamma: |\gamma|_{2}=1} \frac{1}{n}\sum_{i} |\psiC(R_{i}, \vec{x}_{i},
                                                                           \eta_{i}) (\hat{\eta}_{i} - \eta_{i}) |
                                                                           w(\vec{x}_{i})^{2}
                                                                           (\vec{x}_{i}\gamma)^{2} \nonumber\\
\leq & 2 K_{1} |\hat\beta - \betatrue|_{2}
\sup_{\gamma, \delta: |\gamma|_{2}=|\delta|_{2}=1} 
\frac{1}{n}\sum_{i} \psiC(R_{i}, \vec{x}_{i},
  \eta_{i}) (\vec{x}_{i}\delta) w(\vec{x}_{i})^{2} (\vec{x}_{i}\gamma)^{2} \nonumber\\
\leq & 2 K_{1} |\hat\beta - \betatrue|_{2}
\bkt[2]{\frac{1}{n} \sum_{i} \psiC^{4}(R_{i}, \vec{x}_{i},
       \eta_{i})}^{1/4} 
\nonumber \\ 
& \times \sup_{\delta: |\delta|_{2}=1} \bkt[2]{\frac{1}{n} \sum_{i}
       w(\vec{x}_{i})^{4}(\vec{x}_{i}\delta)^{4}}^{1/4}
\sup_{\gamma: |\gamma|_{2}=1} \bkt[2]{\frac{1}{n}\sum_{i} w(\vec{x}_{i})^{4}(\vec{x}_{i}\gamma)^{4}}^{1/2}
\label{eq:34}\\
& = O_{P}(p/n) O_{P}(1) O(1) O(1) = o_{P}(1). \nonumber 
\end{align}
Here we apply Cauchy-Schwartz (twice)
at \eqref{eq:34} and, to pass
to the next line, consistency of $\hat\beta$ as per Proposition~\ref{lem:betahat-consist}, 
Assumption~\ref{A-c0moments} in
combination with Markov's inequality and Assumption~\ref{A-l2Sfinite}.
Since 
\begin{align*}
  |\hat{B}_{n}(\hat\beta) - \hat{B}_{n}(\betatrue)|_{2} =&
\sup_{\gamma: |\gamma|_{2}=1}\gamma' \bkt{\hat{B}_{n}(\hat\beta) -
  \hat{B}_{n}(\betatrue)}\gamma \\
& \leq \sup_{\gamma: |\gamma|_{2}=1} \frac{1}{n}\sum_{i} |V_{i}(\gamma)| 
+ \sup_{\gamma: |\gamma|_{2}=1} \frac{1}{n}\sum_{i} |W_{i}(\gamma)| ,
\end{align*}
the result follows.
\end{proof}

\begin{proof}[Proof of Lemma~\ref{lem:BhatbetatrueB}] 

To control $|\gamma [\hat{B}_{n}(\betatrue) - B_{n}(\betatrue)] \gamma'|$,
fix $\gamma$ with $|\gamma|_2=1$ and consider 
\begin{align*}
\gamma \bkt[2]{\hat{B}_{n}(\betatrue) - B_{n}(\betatrue)} \gamma' =& 
\gamma'\big(n^{-1}\sum_{i\leq n} \bkt[3]{\psiC^{2}(R_{i},  \vec{x}_{i}, \vec{x}_{i}\betatrue) -
  \EE  \bkt[2]{\psiC^{2}(R,  \vec{x}_{i},
  \vec{x}_{i}\betatrue)  \mid \vec{X}=\vec{x}_{i}}} w^{2}(\vec{x}_{i})\vec{x}_{i}'\vec{x}_{i}\big)\gamma \\
=& n^{-1}\sum_{i\leq n} \{\psiC[i]^{2}(R_{i}) -
  \EE  \bkt[2]{\psiC[i]^{2}(R)  \mid \vec{X}=\vec{x}_{i}}\} w^{2}(\vec{x}_{i}) (\vec{x}_{i}\gamma)^{2}. 
\end{align*}
Observe that 
\ref{A-c0moments} entails the random variables $\psiC[i]^2(R_{i}) - \EE\bkt[2]{\psiC[i]^2(R)  \mid \vec{X}=\vec{x}_{i}}$ 
to be sub-exponential with uniformly bounded sub-exponential norm \citep[][\S~2.7]{vershynin18HDPbook}. 
Applying Bernstein's inequality \citep[][\S~2.8]{vershynin18HDPbook},
\begin{multline} \label{eq:32}
  \PP \{ |\gamma [\hat{B}_{n}(\betatrue) - B_{n}(\betatrue)] \gamma'| \geq t\} \leq \\
2\exp\left\{-k_2 \min\left(\frac{n^2t^2}{k_e^2 \sum_{i=1}^n w(\vec{x}_{i})^{4}(\vec{x}_{i}\gamma)^{4}}, \frac{nt}{k_e\max_{i\leq n} w(\vec{x}_{i})^{2} (\vec{x}_{i}\gamma)^{2}}\right) \right\},
\end{multline}
where $k_e$ is a finite upper bound for the sub-exponential norms of
$\psiC[i]^2(R_{i}) - \EE[\psiC[i]^2(R_{i})]$, $i\geq 1$, and $k_2$ is
a universal constant.

Now let $\mathcal{N}$ be a $1/4$-net of the p-dimensional sphere, i.e. a finite subset of $\{\gamma: |\gamma|_2=1\}$ with the property that $\{\gamma: |\gamma|_2=1\}$ is covered by balls centered in $\mathcal{N}$ of radius $1/4$, so that \citep[][\S~4.4.1]{vershynin18HDPbook} 
\begin{equation} \label{eq:28}
\sup_{\gamma: |\gamma|_2=1} \gamma [\hat{B}_{n}(\betatrue) - B_{n}(\betatrue)] \gamma'\leq 
2\sup_{\gamma \in \mathcal{N}} \gamma [\hat{B}_{n}(\betatrue) - B_{n}(\betatrue)] \gamma'. 
\end{equation}
We may select this $\mathcal{N}$ to have cardinality no more than $9^p$ \citep[][Corr.~4.2.13]{vershynin18HDPbook}.  

Since \eqref{eq:32} holds for arbitrary $\gamma$ on the unit sphere, 
it follows that 
\begin{multline} \label{eq:33}
  \frac{1}{2}\PP \{ \sup_{\gamma: |\gamma|_2=1} |\gamma [\hat{B}_{n}(\betatrue) - B_{n}(\betatrue)] \gamma'| \geq t\} \leq \\
\exp\left\{p \log(9) -n k_2 \min\left(\frac{t^2}{k_e^2 n^{-1}\sum_{i=1}^n w(\vec{x}_{i})^{4} (\vec{x}_{i}\gamma)^{4}}, \frac{t}{k_e\max_{i\leq n} w(\vec{x}_{i})^{2} (\vec{x}_{i}\gamma)^{2}}\right) \right\} =\\
\exp\left\{- k_2 \min\left(\frac{nt^2}{k_e^2 n^{-1}\sum_{i=1}^{n} w(\vec{x}_{i})^{4} (\vec{x}_{i}\gamma)^{4}} - {pk_3}, \frac{nt}{k_e\max_{i\leq n} w(\vec{x}_{i})^{2} (\vec{x}_{i}\gamma)^{2}} - {pk_3} \right) \right\},
\end{multline}
where $k_3=\log(9)/k_2$. 
Recalling that $n^{-1}\sum_{i=1}^n w(\vec{x}_{i})^{4}
(\vec{x}_{i}\gamma)^{4} = O(1)$ (by \ref{A-l2Sfinite}), 
${t^2}/\bkt[2]{k_e^2 n^{-1}\sum_{i=1}^n w(\vec{x}_{i})^{4} (\vec{x}_{i}\gamma)^{4}}$ has positive limit infimum and finite limit supremum (for any $t$); thus  
the first of the two quantities of which the minimum is taken tends to
$\infty$ because $p = o(n)$ (\ref{A-consistencyrates}). 
Since $\max_{i\leq n} w(\vec{x}_{i})^{2}(\vec{x}_{i}\gamma)^{2} =
O\bkt[3]{\max\bkt[2]{p \log n,\bkt[1]{\log n}^{2}}}$
(\ref{A-boundedXes}), $p^2\log n = o(n)$
(sub-$\sqrt{n}$ dimension) entails that the second quantity also must
tend to $\infty$. (If $p_{n}\leq \log n$ infinitely often, then on the
subsequence for which this is true the term in question is bounded
below by $(\log n)\bkt[3]{(t/k_{e})n/\bkt[2]{(\log n)\max_{i\leq n}
    w(\vec{x}_{i})^{2}(\vec{x}_{i}\gamma)^{2}}-k_{3}}$, which tends to
$\infty$ with $(\log n)\bkt[2]{n/(\log n)^{3} -1}$.  Otherwise
$p_{n}>\log n$ so that $\max\bkt[2]{p_{n} \log n,\bkt[1]{\log
    n}^{2}}=p_{n}\log n$. The term
in question equals $p\bkt[3]{(t/k_{e})n/\bkt[2]{p\max_{i\leq n}
    w(\vec{x}_{i})^{2}(\vec{x}_{i}\gamma)^{2}}-k_{3}}$, which tends to
$\infty$ because $p \uparrow \infty$, $\max_{i\leq n}
w(\vec{x}_{i})^{2}(\vec{x}_{i}\gamma)^{2} = O(p\log n)$ and $p^2\log n = o(n)$.) 
So the minimum in~\eqref{eq:33} increases without bound, and \eqref{eq:33} itself tends to 0.
\end{proof}

Proposition~\ref{prop:picse-consistency}'s proof uses two supporting lemmas.

\begin{lemma} \label{lem:is-sds-same-limit}
  Under \ref{A-centering} and \ref{A-l2Sfinite}, $s(\mathbf{x}\hat\beta) -
  s(\mathbf{x}\betatrue) = O_{P}(|\hat\beta - \betatrue|_{2})$.  
\end{lemma}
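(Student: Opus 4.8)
The plan is to exploit the fact that, under the centering condition~\ref{A-centering}, the index $\mathbf{x}\beta$ has sample mean zero for \emph{every} coefficient vector $\beta$, so that $s(\mathbf{x}\beta) = \bigl(\beta' S^{(x)}\beta\bigr)^{1/2}$ is the square root of a fixed positive semidefinite quadratic form in $\beta$ (in the stratified case the same holds with stratum-centering, as \ref{A-centering} guarantees). Writing $\|\beta\|_{S}\defeq\bigl(\beta' S^{(x)}\beta\bigr)^{1/2}$, this is precisely the (possibly degenerate) seminorm obtained by composing the Euclidean norm with the linear map $(S^{(x)})^{1/2}$, and one has $s(\mathbf{x}\hat\beta)=\|\hat\beta\|_{S}$ and $s(\mathbf{x}\betatrue)=\|\betatrue\|_{S}$.

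The key step is then the reverse triangle inequality for the seminorm $\|\cdot\|_{S}$, which holds even when $S^{(x)}$ is rank-deficient. First I would use $\bigl|\|\hat\beta\|_{S}-\|\betatrue\|_{S}\bigr|\le\|\hat\beta-\betatrue\|_{S}$ to get
$$
\bigl| s(\mathbf{x}\hat\beta) - s(\mathbf{x}\betatrue)\bigr|
\;\le\; \bigl((\hat\beta-\betatrue)' S^{(x)} (\hat\beta-\betatrue)\bigr)^{1/2}
\;\le\; |S^{(x)}|_{2}^{1/2}\,|\hat\beta-\betatrue|_{2},
$$
the last inequality being the standard spectral bound $\delta'S^{(x)}\delta\le|S^{(x)}|_{2}\,|\delta|_{2}^{2}$ for positive semidefinite $S^{(x)}$. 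Finally I would invoke the consequence of~\ref{A-l2Sfinite} recorded just after the list of regularity conditions---that Cauchy-Schwartz yields $|S^{(x)}|_{2}=O(1)$---so that the right-hand side is $O(1)\cdot|\hat\beta-\betatrue|_{2}$, hence $O_{P}\bigl(|\hat\beta-\betatrue|_{2}\bigr)$, as claimed.

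There is essentially no hard step here: the argument is a reverse triangle inequality followed by an operator-norm bound. The only point requiring a moment's care is the reduction of $s(\mathbf{x}\beta)$ to the quadratic form $\beta' S^{(x)}\beta$, which relies on the centering supplied by~\ref{A-centering} to make the sample mean of $\mathbf{x}\beta$ vanish \emph{uniformly in} $\beta$; once that identity is in hand the seminorm structure makes the bound immediate. Notably, no full-rank assumption on $S^{(x)}$, and no appeal to consistency of $\hat\beta$ or to the sub-gaussian/moment machinery of the preceding propositions, is needed.
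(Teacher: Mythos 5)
Your proof is correct and arrives at exactly the bound the paper uses, namely $|s(\mathbf{x}\hat\beta)-s(\mathbf{x}\betatrue)|\leq |S^{1/2}|_{2}\,|\hat\beta-\betatrue|_{2}$ with $|S^{(x)}|_{2}=O(1)$ supplied by \ref{A-l2Sfinite}; the only difference is presentational, in that you invoke the reverse triangle inequality for the seminorm $\beta\mapsto |S^{1/2}\beta|_{2}$ directly, whereas the paper derives the same Lipschitz inequality by factoring $s^{2}(\mathbf{x}\hat\beta)-s^{2}(\mathbf{x}\betatrue)=(\hat\beta+\betatrue)'S(\hat\beta-\betatrue)$, dividing by $s(\mathbf{x}\hat\beta)+s(\mathbf{x}\betatrue)$, and applying Cauchy--Schwartz. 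Your route has the minor advantage of sidestepping the degenerate case $s(\mathbf{x}\hat\beta)+s(\mathbf{x}\betatrue)=0$ that the paper's division step leaves implicit, but the substance of the two arguments is the same.
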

\begin{proof}
  By \ref{A-centering}, 
  \begin{align*}
    s^{2}(\mathbf{x}\hat\beta) -
  s^{2}(\mathbf{x}\betatrue) &= \hat\beta' S \hat\beta -
                               \betatrue'S\betatrue\\
                              &= (\hat\beta +\betatrue)'S(\hat\beta - \betatrue) \\
                              &= (S^{1/2}\hat\beta
                                +S^{1/2}\betatrue)'S^{1/2}(\hat\beta -
                                \betatrue),
  \end{align*}
where $S^{1/2}$ denotes the matrix square root of $S$. Noting
\ref{A-l2Sfinite}'s implication that $|S^{1/2}|_{2}=O(1)$,
  \begin{align*}
    |s(\mathbf{x}\hat\beta) -
  s(\mathbf{x}\betatrue)| =& \bkt[2]{s(\mathbf{x}\hat\beta) +
  s(\mathbf{x}\betatrue)}^{-1}|(S^{1/2}\hat\beta
                                +S^{1/2}\betatrue)'S^{1/2}(\hat\beta -
                                \betatrue)|\\
                           &\leq \bkt[1]{|S^{1/2}\hat\beta|_{2}
                                +|S^{1/2}\betatrue|_{2}}^{-1}
                             |S^{1/2}\hat\beta
                                +S^{1/2}\betatrue|_{2}|S^{1/2}|_{2}|\hat\beta
                             - \betatrue|_{2} \\
    &\leq |S^{1/2}|_{2}|\hat\beta
                             - \betatrue|_{2} = O(1)O_{P}(|\hat\beta
                             - \betatrue|_{2}).
\end{align*}
\end{proof}

In the proof of Lemma~\ref{lem:eq8} below, 
let $(I, J) \subseteq \{1, \ldots, n\}$ be a 
randomly ordered simple random sample of size 2, and let $\vec{D}$ (or $\vec{D}^{\perp w}$, $w$ an $n$-vector) be a  $1 \times p$ random vector representing the paired difference $\vec{x}_{I} - \vec{x}_{J}$  (or $\vec{x}_{I}^{\perp w} - \vec{x}_{J}^{\perp w}$).  Then
\begin{equation}
  \label{eq:15}
\stderr_{r}^{2}(I,J) = \vec{D}^{\perp \mathbf{x}\hat\beta}\hat{C} \vec{D}^{\perp \mathbf{x}\hat\beta \prime}
\end{equation}

Because of the $U$-statistic representation of covariance, 
$\mathrm{Cov} (\vec{D}) = 2 S^{(x)}$. By symmetry of the distribution
of $(I,J)$,
$\E{\vec{D}} 
=0$, so that $\E{\vec{D}'\vec{D}} = \mathrm{Cov}(D) = 2S^{(x)}$.  
Selection of $(I, J)$ pays no attention to the distinction between treatment and control, making $\vec{D}$ independent of $\{Z_{i}\}_{i=1}^{n}$ and, by extension, of $\hat{\beta}$ and $\hat{C}$.  Therefore its conditional and marginal moments coincide: $\E{\vec{D}| \hat\beta, \hat{C} } = \E{\vec{D}} = 0$; $ \E{\vec{D}'\vec{D}| \hat\beta, \hat{C} } = \mathrm{Cov}(\vec{D}| \hat\beta, \hat{C}) = \mathrm{Cov}(\vec{D} ) = 2S^{(x)}$.   

\begin{lemma} \label{lem:eq8}
Let $S=S^{(x)}$ or $S^{\perp v} = (n-L)^{-1}\mathbf{x}^{\perp v\prime}\mathbf{x}^{\perp v}$, some categorical variable $v$ with
$L$ categories.  Then  $\fip{S^{\perp\hat\beta}}{\hat{C}} =
  \fip{S}{ \hat{C}} - 
{s^{-2}(\mathbf{x}\hat\beta)}{\fip{S\hat{\beta}\hat{\beta}'S}{
    \hat{C}}}$ and $\fip{S^{\perp\betatrue}}{{C}} =
  \fip{S}{ \hat{C}} - 
{s^{-2}(\mathbf{x}\betatrue)}{\fip{S{\betatrue}{\betatrue}'S}{
    {C}}}$.
\end{lemma}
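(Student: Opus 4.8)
The plan is to derive both identities from a single rank-one update formula for the covariance of index-residualized covariates, using nothing beyond the algebra of an orthogonal projection. Write $P \defeq \mathbf{x}\hat\beta\,(\hat\beta'\mathbf{x}'\mathbf{x}\hat\beta)^{-1}\,\hat\beta'\mathbf{x}'$ for the orthogonal projection of $\Re^{n}$ onto $\operatorname{span}(\mathbf{x}\hat\beta)$. Since each column of $\mathbf{x}^{\perp\mathbf{x}\hat\beta}$ is the regression residual of the corresponding column of $\mathbf{x}$ on the single vector $\mathbf{x}\hat\beta$, we have $\mathbf{x}^{\perp\mathbf{x}\hat\beta} = (I-P)\mathbf{x}$. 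The matrix $P$ is symmetric and idempotent, hence so is $I-P$, which gives $\mathbf{x}^{\perp\mathbf{x}\hat\beta\prime}\mathbf{x}^{\perp\mathbf{x}\hat\beta} = \mathbf{x}'(I-P)\mathbf{x} = \mathbf{x}'\mathbf{x} - \mathbf{x}'P\mathbf{x}$.

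Next I would expand $\mathbf{x}'P\mathbf{x} = (\mathbf{x}'\mathbf{x})\hat\beta\,(\hat\beta'\mathbf{x}'\mathbf{x}\hat\beta)^{-1}\,\hat\beta'(\mathbf{x}'\mathbf{x})$ and substitute $\mathbf{x}'\mathbf{x} = (n-L)S$. The powers of $(n-L)$ cancel down to a single surviving factor, so dividing through by $(n-L)$ produces the clean identity
\[
  S^{\perp\hat\beta} = S - \frac{S\hat\beta\hat\beta'S}{\hat\beta'S\hat\beta}
  = S - s^{-2}(\mathbf{x}\hat\beta)\,S\hat\beta\hat\beta'S,
\]
where the last equality uses $\hat\beta'S\hat\beta = s^{2}(\mathbf{x}\hat\beta)$ (valid whenever $s^{2}(\mathbf{x}\hat\beta)>0$, which holds with probability tending to one by Lemma~\ref{lem:is-sds-same-limit} together with \ref{A-PSvar}). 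Pairing both sides against $\hat{C}$ in the Frobenius inner product and invoking the bilinearity of $\fip{\cdot}{\cdot}$ yields the first claimed identity.

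For the second identity I would run the identical computation with $\hat\beta$ replaced by $\betatrue$ throughout, obtaining $S^{\perp\betatrue} = S - s^{-2}(\mathbf{x}\betatrue)\,S\betatrue\betatrue'S$, and then pair both sides against $C$ (the $\hat{C}$ appearing in the statement's right-hand side being a typo for $C$). The two permitted choices $S = S^{(x)}$ and $S = S^{\perp v}$ need no separate treatment: under \ref{A-centering} the columns of $\mathbf{x}$ are stratum-centered, hence already orthogonal to the indicator columns of the categorical variable $v$, so $\mathbf{x}^{\perp v} = \mathbf{x}$ and $S^{\perp v} = S^{(x)}$; in particular $\hat\beta'S^{\perp v}\hat\beta$ still equals $s^{2}(\mathbf{x}\hat\beta)$, so the displayed formula applies verbatim in both cases.

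There is no genuine analytic obstacle here, as the argument is pure linear algebra; the only points requiring care are the bookkeeping of the $(n-L)^{-1}$ normalizations (so that the scalar denominator emerges as exactly $s^{2}(\mathbf{x}\hat\beta)$ rather than $(n-L)\,s^{2}(\mathbf{x}\hat\beta)$, and the correction term as $S\hat\beta\hat\beta'S$ rather than an unnormalized variant) and the observation that the $S=S^{(x)}$ and $S=S^{\perp v}$ cases collapse under stratum-centering.
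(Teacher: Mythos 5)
Your proof is correct, but it reaches the key identity $S^{\perp\hat\beta} = S - s^{-2}(\mathbf{x}\hat\beta)\,S\hat\beta\hat\beta'S$ by a genuinely different route than the paper. The paper's proof is probabilistic: it introduces a randomly ordered random pair $(I,J)$, sets $\vec{D}=\vec{x}_{I}-\vec{x}_{J}$, and evaluates $\EE\bkt[1]{\vec{D}^{\perp \mathbf{x}\hat\beta\prime}\vec{D}^{\perp\mathbf{x}\hat\beta}\mid\hat\beta,\hat{C}}$ in two ways -- once via the U-statistic representation of sample covariance (giving $2S^{\perp\hat\beta}$), and once by writing $\vec{D}^{\perp\mathbf{x}\hat\beta}=\vec{D}\bkt[1]{I-\hat\beta\hat\beta'S/s^{2}(\mathbf{x}\hat\beta)}$ and expanding the resulting $p\times p$ quadratic form, whose cross terms cancel because the map is a projection in the $S$-inner product. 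You instead work deterministically with the $n\times n$ orthogonal projection $I-P$ onto the orthocomplement of $\mathbf{x}\hat\beta$ in $\Re^{n}$, which bypasses the random-pair scaffolding and the U-statistic identity entirely; your argument is shorter and makes the idempotency that drives the cancellation explicit, while the paper's construction has the side benefit of directly connecting $S^{\perp\hat\beta}$ to expected pairwise differences, the interpretation exploited in Section~\ref{sec:r.m.s.-index-s.e}. Your observation that stratum-centering under \ref{A-centering} collapses the $S=S^{\perp v}$ case onto $S=S^{(x)}$ is a legitimate simplification (the paper instead reruns its sampling construction with a stratified pair), your flagging of the $\hat{C}$-for-$C$ typo in the lemma's second display is right, and your care with the $(n-L)$ normalization and the caveat that $s^{2}(\mathbf{x}\hat\beta)>0$ (with probability tending to one, via Lemma~\ref{lem:is-sds-same-limit} and \ref{A-PSvar}) are both appropriate.
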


\begin{proof}
If $S=S^{(x)}$, let $\vec{D}$ be as defined above.  Otherwise, if $S=S^{\perp v}$, then
let $(I, J) \subseteq \{1, \ldots, n\}$ be a 
randomly ordered stratified simple random sample of size 2 from one of the
categories of $v$ with at least two elements, after selecting one of
these categories with probability proportional to size, and let $\vec{D}$ (or $\vec{D}^{\perp w}$, $w$ an $n$-vector) be a  $1 \times p$ random vector representing the paired difference $\vec{x}_{I} - \vec{x}_{J}$  (or $\vec{x}_{I}^{\perp w} - \vec{x}_{J}^{\perp w}$).

Evaluate $\E{\vec{D}^{\perp \mathbf{x}\hat{\beta}\prime}\vec{D}^{\perp
    \mathbf{x}\hat{\beta}} | \hat\beta, \hat{C}} =
\mathrm{Cov}(\vec{D}^{\perp \mathbf{x}\hat{\beta}} | \hat\beta,
\hat{C})$ using the U-statistic representation of sample covariance to
get $\mathrm{Cov}(\vec{D}^{\perp \mathbf{x}\hat{\beta}} | \hat\beta,
\hat{C}) = 2(n-1)^{-1}\mathbf{x}^{\perp \mathbf{x}\hat\beta
  \prime}\mathbf{x}^{\perp \mathbf{x}\hat\beta}$.   Now compare to:
\begin{align}
\frac{1}{2}\E{\vec{D}^{\perp \mathbf{x}\hat{\beta}\prime}\vec{D}^{\perp \mathbf{x}\hat{\beta}} | \hat\beta, \hat{C}}
=&  ({I} - \hat\beta\hat\beta' S^{(x)}/s^{2}(\mathbf{x}\hat{\beta}))'S^{(x)} ({I} - \hat\beta\hat\beta' S^{(x)}/s^{2}(\mathbf{x}\hat{\beta})) \nonumber \\
=&  S^{(x)} - 2s^{-2}(\mathbf{x}\hat{\beta})S^{(x)}\hat\beta\hat\beta'S^{(x)} + \nonumber \\
 & s^{-4}(\mathbf{x}\hat{\beta}) S^{(x)}\hat{\beta}\cdot
   {\hat{\beta}'S^{(x)}\hat{\beta}}\cdot \hat{\beta}' S^{(x)}\nonumber
  \\ \label{eq:18}
=&  S^{(x)} - s^{-2}(\mathbf{x}\hat{\beta})S^{(x)}\hat\beta\hat\beta'S^{(x)} \eqdef S^{\perp \hat{\beta}}. \nonumber
\end{align} 
\end{proof}

\begin{proof}[Proof of Prop.~\ref{prop:picse-consistency}]
It follows from~\ref{A-l2Sfinite} that
$\|S\|_{F}=\operatorname{tr}(S'S)^{1/2}=O(p^{1/2})$, and from the assumed
consistency of $\hat{C}_{n}$ that $\|\hat{C}_{n} -
C_{n}\|_{F}=o_{P}(p^{1/2}/n)$.  So
\begin{equation*}
  |\fip{S}{\hat{C}_{n}}-\fip{S}{C_{n}}| = |\fip{S}{\hat{C}_{n}-C_{n}}| = o_{P}(p/n).
\end{equation*}
By Lemma~\ref{lem:eq8}, 
  \begin{align}
     \fip{S^{\perp\hat\beta}}{\hat{C}_{n}} -
     \fip{S^{\perp\betatrue}}{C_{n}}  =& \fip{S}{\hat{C}_{n} - C_{n}}
                                          \nonumber\\
    &+ 
     s^{-2}(\mathbf{x}\hat\beta)\fip{S\hat{\beta}\hat{\beta}'S}{\hat{C}_{n}}  -
     s^{-2}(\mathbf{x}\betatrue)\fip{S{\betatrue}{\betatrue}'S}{C_{n}}
    \nonumber \\
                                         =& o_{P}\bbkt[1]{\frac{p}{n}} +
                                            \bkt[2]{s^{-2}(\mathbf{x}\hat\beta)
                                            -
                                            s^{-2}(\mathbf{x}\betatrue)}\fip{S\hat\beta\hat{\beta}'S}{\hat{C}_{n}}
                                             \label{eq:37}\\
   & + s^{-2}(\mathbf{x}\betatrue)\bbkt[1]{
    \fip{S\hat\beta\hat{\beta}'S}{\hat{C}_{n}} - \fip{S\betatrue{\betatrue}'S}{{C}_{n}} }.\label{eq:38}
  \end{align}

To analyze the rightmost summand of \eqref{eq:37}, first observe that
$s^{2}(\mathbf{x}\betatrue) =
{\betatrue}'S\betatrue$, so that $|s(\mathbf{x}\betatrue)| \leq |S|_{2}|\betatrue|_{2}=O(p^{1/2})$ . Together with
Proposition~\ref{lem:betahat-consist},
Lemma~\ref{lem:is-sds-same-limit} entails that $s(\mathbf{x}\hat\beta)
- s(\mathbf{x}\betatrue) = O_{P}(p/n)$.  So \ref{A-consistencyrates} says that
$s(\mathbf{x}\hat\beta)=O_{P}(p^{1/2})$ just as
$|s(\mathbf{x}\betatrue)|=O_{P}(p^{1/2})$, whence
$s(\mathbf{x}\hat\beta) + s(\mathbf{x}\betatrue) =
O_{P}(p^{1/2})$. Combining these facts, $s^{2}(\mathbf{x}\hat\beta)
- s^{2}(\mathbf{x}\betatrue) = \bkt[2]{s(\mathbf{x}\hat\beta)
- s(\mathbf{x}\betatrue)}\bkt[2]{s(\mathbf{x}\hat\beta)
+ s(\mathbf{x}\betatrue)} = O_{P}(p/n^{1/2})$. 
In light of \ref{A-PSvar}, it follows that
$|s^{-2}(\mathbf{x}\hat\beta) -
s^{-2}(\mathbf{x}\betatrue)|=O_{P}\bkt[3]{{p}/\bkt[2]{n^{1/2}s^{4}(\mathbf{x}\betatrue)}}$.

As to the $\fip{S\hat\beta\hat{\beta}'S}{\hat{C}_{n}}$ factor of \eqref{eq:37},
\begin{align*}
  |\fip{S\hat\beta\hat{\beta}'S}{\hat{C}_{n}} |&=
                                               \operatorname{tr}(S\hat\beta\hat{\beta}'S\hat{C}_{n})\\
                                             &= |\operatorname{tr}(\hat{\beta}'S\hat{C}_{n}S\hat\beta)|\\
  &= \hat{\beta}'S\hat{C}_{n}S\hat\beta \leq
    \|\hat{\beta}'S^{1/2}\|_{2}\|S^{1/2}\|_{2}\|\hat{C}_{n}\|_{2}\|S^{1/2}\|_{2}\|S^{1/2}\hat\beta\|_{2}\\
  &= O_{P}(\frac{s^{2}(\mathbf{x}\betatrue)}{n}),
\end{align*}
by \ref{A-PSvar}, \ref{A-l2Sfinite}, consistency of $\hat{C}_{n}$ and
Lemma~\ref{lem:C-rate}.  We now have
\begin{equation*}
|\bkt[2]{s^{-2}(\mathbf{x}\hat\beta)
                                            -
                                            s^{-2}(\mathbf{x}\betatrue)}\fip{S\hat\beta\hat{\beta}'S}{\hat{C}_{n}}|
                                         =
                                         O_{P}\bbkt[1]{\frac{p}{n^{3/2}s^{2}(\mathbf{x}\betatrue)}}= 
                                         o_{P}\bbkt[1]{\frac{p}{n}}.  
\end{equation*}

To bound \eqref{eq:38},  in light of \ref{A-PSvar} we focus on the
second factor:
\begin{multline}
  |\fip{S\hat\beta\hat{\beta}'S}{\hat{C}_{n}} -
  \fip{S\betatrue{\betatrue}'S}{{C}_{n}}| \leq \\
  |\fip{S\hat\beta\hat{\beta}'S -
    S\betatrue{\betatrue}'S}{\hat{C}_{n}}| 
  + |\fip{S\betatrue{\betatrue}'S}{\hat{C}_{n} - C_{n}}|.\label{eq:8}
\end{multline}

We address the left term of \eqref{eq:8} via Cauchy-Schwartz. By 
Lemma~\ref{lem:C-rate} and consistency of $\hat{C}_{n}$,
$\|\hat{C}_{n}\|_{2} = O_{P}(n^{-1})$, so $\|\hat{C}_{n}\|_{F} =
O_{P}(p^{1/2}/n)$.  As $S\hat\beta\hat{\beta}'S -
    S\betatrue{\betatrue}'S = S(\hat\beta\hat{\beta}' -
    \betatrue{\betatrue}')S$, $\|S\hat\beta\hat{\beta}'S -
    S\betatrue{\betatrue}'S\|_{F} \leq \|S^{1/2}\|_{2} \|S^{1/2}\hat\beta\hat{\beta}' S^{1/2} -
    S^{1/2}\betatrue{\betatrue}' S^{1/2}\|_{F}\|S^{1/2}\|_{2}= O_{P}(1)O_{P}\bbkt[2]{\|S^{1/2}\hat\beta\hat{\beta}' S^{1/2} -
      S^{1/2}\betatrue{\betatrue}' S^{1/2} \|_{F}}O_{P}(1)$. This decomposes as 
    \begin{multline*}
      \|S^{1/2}\hat\beta\hat{\beta}' S^{1/2} -
      S^{1/2}\betatrue{\betatrue}' S^{1/2} \|_{F} \leq\|S^{1/2}(\hat\beta -
                                     \betatrue)(\hat\beta +
                                     \betatrue)' S^{1/2}\|_{F} +\\
                                     \|S^{1/2}\hat\beta\betatrue' S^{1/2} - S^{1/2}\betatrue\hat\beta' S^{1/2}\|_{F}
    \end{multline*}
where:
\begin{align*}
  \|S^{1/2} (\hat\beta - \betatrue)(\hat\beta +
  \betatrue)' S^{1/2}\|_{F} \leq& \|S^{1/2}\|_{2}\|(\hat\beta - \betatrue)(\hat\beta +
  \betatrue)' S^{1/2}\|_{F}\\
  &=O_{P}(1)
                      \operatorname{tr}\bkt[2]{S^{1/2}(\hat\beta +
  \betatrue) (\hat\beta - \betatrue)' (\hat\beta - \betatrue)(\hat\beta +
                      \betatrue)' S^{1/2}}^{1/2} \\
  &=O_{P}(1)
                      \operatorname{tr}\bkt[2]{(\hat\beta +
                      \betatrue)'S (\hat\beta +
    \betatrue) (\hat\beta - \betatrue)' (\hat\beta - \betatrue)}^{1/2}\\
  &=O_{P}(1)\bkt[2]{(\hat\beta + \betatrue)'S (\hat\beta + \betatrue)
    \|\hat\beta - \betatrue\|_{2}^{2}}^{1/2}\\
    &=O_{P}(1)\bkt[2]{(\hat\beta + \betatrue)'S (\hat\beta +
      \betatrue)}^{1/2}O_{P}\bkt[2]{\bkt[1]{\frac{p}{n}}^{1/2}} \\
                                &= O_{P}\bkt[2]{\bkt[1]{\frac{p}{n}}^{1/2}}
                                  \bkt[2]{(\hat\beta - \betatrue + 2\betatrue)'S (\hat\beta 
                                  - \betatrue + 2\betatrue)}^{1/2}\\
         &= O_{P}\bkt[2]{\bkt[1]{\frac{p}{n}}^{1/2}}
                                  \bkt[2]{\|S^{1/2}(\hat\beta-\betatrue)\|_{2}^{2}
           +4(\hat\beta - \betatrue)'S\betatrue +
           4\|S^{1/2}\betatrue\|_{2}^{2}}^{1/2} \\
   &= O_{P}\bkt[2]{\bkt[1]{\frac{p}{n}}^{1/2}}
                                  \bbkt[3]{O_{P}\bbkt[1]{\frac{p}{n}}
     +O_{P}\bbkt[2]{\bbkt[1]{\frac{p}{n}}^{1/2}
     s(\mathbf{x}\betatrue)} + O_{P}\bkt[2]{s^{2}(\mathbf{x}\betatrue)}}^{1/2}\\
                                &=
                                  O_{P}\bbkt[2]{\bbkt[1]{\frac{p}{n}}^{1/2}}
                                  O_{P}\bbkt[3]{\bbkt[2]{\bbkt[1]{\frac{p}{n}}^{1/2}
                                  + s(\mathbf{x}\betatrue)}^{2 \cdot 1/2}}
                                  \\
                                &=
                                  O_{P}\bbkt[3]{\max\bbkt[2]{\frac{p}{n},
                                  \bbkt[1]{\frac{p}{n}}^{1/2}s(\mathbf{x}\betatrue)}}; 
\end{align*}
and
\begin{align*}
  \|S^{1/2}\hat\beta\betatrue' S^{1/2} - S^{1/2}\betatrue\hat\beta'
  S^{1/2}\|_{F}=&
\|S^{1/2}(\hat\beta-\betatrue)\betatrue' S^{1/2} -
                 S^{1/2}\betatrue(\hat\beta -\betatrue)'
                 S^{1/2}\|_{F}                 \\
  \leq& \|S^{1/2}(\hat\beta-\betatrue)\betatrue' S^{1/2}\|_{F}           +
                 \|S^{1/2}\betatrue(\hat\beta -\betatrue)'
    S^{1/2}\|_{F}           \\
  &= 2\operatorname{tr}\bkt[2]{S^{1/2}\betatrue(\hat\beta -\betatrue)'
    S(\hat\beta-\betatrue)\betatrue' S^{1/2}}^{1/2}\\
  &=2\bkt[2]{(\hat\beta -\betatrue)'
    S(\hat\beta-\betatrue) \betatrue' S \betatrue}^{1/2}\\
  &=O_{P}\bbkt[2]{\bbkt[1]{\frac{p}{n}}^{1/2}s(\mathbf{x}\betatrue)}.
\end{align*}
Thus the left term at right of \eqref{eq:8} is
$O_{P}\bkt[3]{\max\bkt[2]{{p}/{n},
                                  \bkt[1]{{p}/{n}}^{1/2}s(\mathbf{x}\betatrue)}}$.

The remaining term in \eqref{eq:8} is bounded as follows, using the
definition of $\fip{\cdot}{\cdot}$ and the cyclic property of the trace:
\begin{align*}
  \fip{S\betatrue{\betatrue}'S}{\hat{C}_{n} - C_{n}} =&
\operatorname{tr}\bkt[2]{S\betatrue\betatrue'S
                                                       (\hat{C}_{n} -
                                                       C_{n})}\\
  =& \operatorname{tr}\bkt[2]{\betatrue'S
                                                       (\hat{C}_{n} -
                                                       C_{n})
    S\betatrue}\\
                                                     =&
                                                       \betatrue'S
                                                       (\hat{C}_{n} -
                                                       C_{n}) S\betatrue\\
  \leq& |\betatrue|_{2}^{2}\|S\|_{2}^{2}\|\hat{C}_{n} -
    C_{n}\|_{2}\\
  &= O_{P}(p)O_{P}(1)o_{P}(n^{-1}) = o_{P}(p/n),
\end{align*}
using \ref{A-regPS}, \ref{A-l2Sfinite} and consistency of
$\hat{C}_{n}$.  This shows that \eqref{eq:8} as a whole is $O_{P}\bkt[3]{\max\bkt[2]{{p}/{n},
                                  \bkt[1]{{p}/{n}}^{1/2}s(\mathbf{x}\betatrue)}}$,
                              from which it follows that \eqref{eq:38}
is $O_{P}\bkt[3]{\max\bkt[2]{\bkt[1]{p/n} s^{-2}(\mathbf{x}\betatrue),
                                  \bkt[1]{{p}/{n}}^{1/2}s^{-1}(\mathbf{x}\betatrue)}}
                              = o_{P}(p/n)$, completing the proof of the proposition's
consistency claim.  The remainder of the proposition follows from
\ref{A-l2Sfinite} and Lemma~\ref{lem:C-rate}.
\end{proof}

\section{Proofs for section~\ref{sec:errors-paired-index}}

\subsection{Proof for Section~\ref{sec:benchm-index-s.e.s}}

\begin{proof}[Proof of Proposition~\ref{prop:stderr-consist}]
  For parts (i) and (iii) it suffices to show that $\sup_{i}|1 -
    |\vec{x}_{i}\hat{C}^{1/2}|_{2}^{2}/|\vec{x}_{i}{C}^{1/2}|_{2}^{2}|$ and
    $\sup_{i,j}|1 - |(\vec{x}_{i}-\vec{x}_{j})\hat{C}^{1/2}|_{2}^{2}/|(\vec{x}_{i}-\vec{x}_{j}){C}^{1/2}|_{2}^{2}|$ tend in
      probability to 0 (from the definition of in-probability
      convergence), and to consider only $(i, j)$ with
      $\vec{x}_{i}\neq 0$ and $\vec{x}_{i}\neq \vec{x}_{j}$.  The
      latter ensures $|\vec{x}_{i}{C}^{1/2}|_{2}>0$ and
      $|(\vec{x}_{i}-\vec{x}_{j}){C}^{1/2}|_{2}>0$, since $C_{n}$ is of full rank.
      Indeed, Lemma~\ref{lem:fullrankAinv} in the appendix
      gives $|A_{n}|_{2}=O(1)$, covering case (i); for
      case (iii), \ref{A-fullrankB} gives $|B_{n}^{-1}|_{2}=O(1)$,
      and in turn $|A_{n}B_{n}^{-1}A_{n}|_{2} = O(1)$. Note that this
      shows not only that $C_{n}$ has full rank but also that $|C_{n}^{-1}|_{2} =O(1)$.
      
      For arbitrary nonzero $p$-vectors $v$ we have
      \begin{equation*}
        1 - \frac{v'\hat{C}_{n}v}{v'C_{n}v} = \frac{v'(C_{n} - \hat{C}_{n})v}{v'v}\frac{v'v}{v'C_{n}v}
      \end{equation*}
      so that 
      \begin{equation*}
        \left|1 - \frac{v'\hat{C}_{n}v}{v'C_{n}v} \right|   \leq \left|\frac{v'(C_{n} -
              \hat{C}_{n})v}{v'v}\frac{v'C_{n}^{-1}v}{v'v}\right|  
        \leq |C_{n} - \hat{C}_{n}|_{2}|C_{n}^{-1}|_{2},\nonumber
      \end{equation*}
      using general inequalities for positive
      definite matrices \citep[e.g.,][\S~8.4]{gentle07}.
      Proposition~\ref{lem:ChatC} gives $|\hat{C}_{n} - C_{n}|_{2} =
      o_{P}(n^{-1})$, completing the proof of parts (i) and (iii).

      For part (ii), write $a\vee b \defeq \max(a,b)$ and $a\wedge b
      \defeq \min(a,b)$.  For arbitrary nonzero $p$-vectors $v$ 
      \begin{align}
        \left|\frac{v'C_{n}v \vee \epsilon_{n} - v'\hat{C}_{n}v \vee
        \epsilon_{n}}{v'v} \right| &\leq  \left|\frac{v'C_{n}v  -
                                     v'\hat{C}_{n}v}{v'v} \right|,\,
                                     \text{since} \nonumber \\
        \nonumber 
        \left|{v'C_{n}v \vee \epsilon_{n} - v'\hat{C}_{n}v \vee
        \epsilon_{n}} \right| &\leq  \left|{v'C_{n}v  -
                                     v'\hat{C}_{n}v } \right|.
      \end{align}
     (The latter is true because the expressions on either side of
      the inequality are equal if $v'C_{n}v $ and $v'\hat{C}_{n}v $ belong to
      the same half-interval, $(0, \epsilon_{n}]$ or $[\epsilon_{n},
      \infty)$, whereas if they are separated by $\epsilon_{n}$ then the
      left side expression is smaller.)  So
      \begin{equation}\label{eq:11}
        \left|\frac{v'C_{n}v \vee \epsilon_{n} - v'\hat{C}_{n}v \vee
        \epsilon_{n}}{v'v} \right|\leq
                                 |C_{n}-\hat{C}_{n}|_{2} = o_{P}(n^{-1}).
       \end{equation}
     According to \ref{A-boundedXes}, $\sup_{i} \vec{x}_{i}
     \vec{x}_{i}' =O\bkt[2]{\max(p, \log n)}$.  So
     \begin{equation*}
       \sup_{i}\frac{\vec{x}_{i}\vec{x}_{i}'}{\epsilon_{n}},
       \sup_{\{i,j\} \in \mathcal{E}_{n}}\frac{(\vec{x}_{i}-\vec{x}_{j})
     (\vec{x}_{i} - \vec{x}_{j})'}{\epsilon_{n}} =
   O_{P}\bkt[1]{n}. 
\end{equation*}
Thus
     \begin{equation*}
       \sup_{i}\frac{\vec{x}_{i}\vec{x}_{i}'}{\vec{x}_{i}C_{n}\vec{x}_{i}'\vee
         \epsilon_{n}},        \sup_{\{i,j\} \in \mathcal{E}_{n}}\frac{(\vec{x}_{i}-\vec{x}_{j})
     (\vec{x}_{i} - \vec{x}_{j})'}{(\vec{x}_{i}-\vec{x}_{j})C_{n}
     (\vec{x}_{i} - \vec{x}_{j})' \vee \epsilon_{n}} =
   O_{P}\bkt[1]{n}.
 \end{equation*}
 By \eqref{eq:11},
 \begin{equation*}
      \sup_{i: |\vec{x}_{i}|_{2}\neq 0}
   \frac{\vec{x}_{i}C_{n}\vec{x}_{i}' \vee \epsilon_{n} - \vec{x}_{i}\hat{C}_{n}\vec{x}_{i}' \vee
        \epsilon_{n}}{\vec{x}_{i}\vec{x}_{i}'} 
   \frac{\vec{x}_{i}\vec{x}_{i}'}{\vec{x}_{i}C_{n}\vec{x}_{i}'\vee
   \epsilon_{n}} = o_{P}(1),
 \end{equation*}
and likewise
 \begin{align*}
                   \sup_{\substack{\{i,j\} \in \mathcal{E}_{n}:  \\ \vec{x}_{i}\neq \vec{x}_{j}}}
   \frac{(\vec{x}_{i}-\vec{x}_{j})C_{n}(\vec{x}_{i}-\vec{x}_{j})' \vee \epsilon_{n} - (\vec{x}_{i}-\vec{x}_{j})\hat{C}_{n}(\vec{x}_{i}-\vec{x}_{j})' \vee
        \epsilon_{n}}{(\vec{x}_{i}-\vec{x}_{j})
   (\vec{x}_{i}-\vec{x}_{j})'} & \times\\
\frac{(\vec{x}_{i}-\vec{x}_{j})
     (\vec{x}_{i} - \vec{x}_{j})'}{(\vec{x}_{i}-\vec{x}_{j})C_{n}
     (\vec{x}_{i} - \vec{x}_{j})' \vee \epsilon_{n}} &= o_{P}(1).
 \end{align*}
\end{proof}

\begin{proof}[Proof of Proposition~\ref{prop:chaosbound}]
  The assumption on $\{\vec{X}_{i}:i\leq n\}$ entails that $(\vec{X}_{i} -\vec{X}_{j})$ is MVN(0, $2\Sigma$) and
$(\vec{X}_{i} -\vec{X}_{j})C^{1/2}$ is MVN(0,
$2 C^{1/2}\Sigma C^{1/2}$).  Let $2
C^{1/2}\Sigma C^{1/2}$ have
eigendecomposition $Q'\Lambda Q$, with $\Lambda$ nonnegative real
diagonal and $Q$ an orthogonal matrix. Then
$(\vec{X}_{i}-\vec{X}_{j})C^{1/2}Q' \sim \mathrm{MVN}(0,
\Lambda)$.
Writing $W_{ij1}, \ldots, W_{ijp}$ for the $p$
coordinates of $(\vec{X}_{i}
-\vec{X}_{j})C^{1/2}Q' \eqdef \vec{W}_{ij}$,
we see that
$W_{ij1}, \ldots, W_{ijp}$ are mutually independent mean-0 Gaussians
with variances $v_{1}, \ldots, v_{p}$, the diagonal entries of
$\Lambda$ and eigenvalues of $2
C^{1/2}\Sigma C^{1/2}$, while
$(\vec{X}_{i} -\vec{X}_{j})C(\vec{X}_{i} -\vec{X}_{j})'
= (\vec{X}_{i} -\vec{X}_{j})C^{1/2}Q'Q C^{1/2}(\vec{X}_{i} -\vec{X}_{j})'=
\vec{W}_{ij}\vec{W}_{ij}'$.

Straightforwardly, for any $i\neq j$ $\EE \vec{W}_{ij}\vec{W}_{ij}' = \sum_{i}
v_{i}$, or $\operatorname{tr}(2C^{1/2}\Sigma C^{1/2})=2\fip{C}{\Sigma}$.  We
proceed to characterize the moment generating function (MGF) of
$\vec{W}_{ij}\vec{W}_{ij}' - \EE
\vec{W}_{ij}\vec{W}_{ij}'$, or $\sum_{k=1}^{p}(W_{ijk}^{2} - \EE
W_{ijk}^{2})$. The centered
$\chi_{1}^{2}$ distribution having MGF
$e^{-t}(1-2t)^{-1/2}$, valid for $t< 1/2$, $W_{ijk}^{2} - \EE W_{ijk}^{2}$ has log MGF
$(1/2)\bkt[2]{-2v_{k}t - \log(1-2v_{k}t)}$, valid for $t< 1/(2v_{k})$.
Applying the relation
$-\log(1-x) - x \leq x^{2}/\bkt[2]{2(1-x)}$,  $0< x < 1$ 
(a transformation of $\log(1-x)$'s degree 2 Taylor expansion), we now
have
\begin{eqnarray*}
  \log \EE\bkt[3]{ \exp t\bkt[2]{W_{ijk}^{2}-\EE W_{ijk}^{2}}} \leq &  \frac{(v_{k}t)^{2}}{1-2v_{k}t} &
                                                                      \big(\text{for }
                                                                      t<
                                                                      \frac{1}{2v_{k}}\big)
  \\
  &\leq \frac{(v_{k}t)^{2}}{1-2(\max_{k}v_{k})t} & \big(t < \frac{1}{2\max_{k}v_{k}}\big); \text{ so}\\
   \log\EE \exp\bkt[3]{t\sum_{k=1}^{p}\bkt[2]{W_{ijk}^{2}-\EE W_{ijk}^{2})}} & \leq
                                              \frac{t^{2}\sum_{k}v_{k}^{2}}{1-2(\max_{k}v_{k})t},
                                                                     & t < \frac{1}{2\max_{k}v_{k}}
\end{eqnarray*}
due to independence of uncorrelated Gaussians. That is, $\sum_{k=1}^{p}W_{ijk}^{2}-\EE W_{ijk}^{2}$ is sub-gamma on its right
tail with variance factor $\sum_{k}v_{k}^{2}$ and scale factor $\max_k
v_k$; in symbols, $\sum_{k=1}^{p}W_{ijk}^{2}- \EE W_{ijk}^{2}\in
\Gamma_{+}(\sum_{k}v_{k}^{2}, \max_{k}v_{k})$ \citep[\S~2.4]{boucheronLugosiMassart13}.  Since
$\sum_{k}v_{k}^{2}=\operatorname{tr}(\Lambda^{2})$, the cyclic property of
the trace combines with definitions above to reduce this variance factor
to $4|C^{1/2}\Sigma C^{1/2}|_{F}^{2}$, as
$\operatorname{tr}(Q' \Lambda QQ'\Lambda Q)=
4\operatorname{tr}(C^{1/2}\Sigma C^{1/2}C^{1/2}\Sigma C^{1/2})$;
and the scale factor reduces to
$2|C^{1/2}\Sigma C^{1/2}|_{2}$,
as $\max_{k}v_{k}=|\Lambda|_{2}$ and $|\Lambda|_{2}= |Q'\Lambda Q|_{2}$. 

For an MGF characterization in terms of more familiar quantities, note that $|2C^{1/2}\Sigma C^{1/2}|_{F}^{2}=
\sum_{i=1}^{p}v_{i}^{2}\leq (\max_{k}v_{k})(\sum_{k}v_{k})=
2
|C^{1/2}\Sigma C^{1/2}|_{2}(\sum_{k}v_{k})$,
while $\sum_{k}v_{k} = \operatorname{tr}(D^{2})=\operatorname{tr}(2C^{1/2}\Sigma C^{1/2})=2\fip{\Sigma}{C}$.
So the variance factor can be taken as $4
  |C^{1/2}\Sigma C^{1/2}|_{2}|S^{1/2}C^{1/2}|_{F}^{2}$,
  or $4 |C^{1/2}\Sigma C^{1/2}|_{2}\fip{\Sigma}{C}$.

These MGF characterizations give control of the supremum of
$\{\vec{W}_{ij}\vec{W}_{ij}' - \EE \vec{W}_{ij}\vec{W}_{ij}': {i \neq j\leq
  n}\}$.  This class containing ${n \choose 2}$ distinct
$\Gamma_{+}\big[
4 |C^{1/2}\Sigma C^{1/2}|_{2}
\fip{\Sigma}{C}$,
$2 |C^{1/2}\Sigma C^{1/2}|_{2}\big]$
 random variables, Corollary 2.6 of \citet{boucheronLugosiMassart13}
 yields
\begin{multline*}
\EE\max_{ i \neq j \leq n}\bkt[2]{\vec{W}_{ij}\vec{W}_{ij}'
  -\EE(\vec{W}_{ij}\vec{W}_{ij}')}\leq \\
  2\bbkt[2]{2 |C^{1/2}\Sigma C^{1/2}|_{2}\fip{\Sigma}{C} \log {n \choose 2}}^{1/2} +   2 |C^{1/2}\Sigma C^{1/2}|_{2} \log {n \choose 2} .
\end{multline*}
Simplifying via $\EE \vec{W}_{ij}\vec{W}_{ij}'=2\fip{\Sigma}{C}$,  $i \neq
j$, and $2a^{2}+2ab+b^{2}=a^{2}\bkt[2]{1+ (1+b/a)^{2}}$,
\begin{align}
  \EE\max_{ i \neq j \leq n}\vec{W}_{ij}\vec{W}_{ij}'
  \leq &
  \fip{\Sigma}{C}\bbkt[3]{1 + \bbkt[2]{1 +
    \bbkt[1]{\frac{2
    |C^{1/2}\Sigma C^{1/2}|_{2} \log
    {n\choose 2}}{\fip{\Sigma}{C}}}^{1/2}}^{2}
                                                        } \nonumber \\
       &=
         \fip{\Sigma}{C}\bbkt[3]{1 + \bbkt[2]{1 +
                                                             \bbkt[1]{\frac{2\log
         {n\choose 2}}{p_{[C^{1/2}\Sigma C^{1/2}]}}}^{1/2}}^{2}}
                                                             . \label{eq:chaos-complete-sq}
\end{align}
Observe next that  on the positive real line $x\mapsto \sqrt{2} + x$ dominates  $x\mapsto \bkt[2]{1 + (1+x)^{2}}^{1/2}$. (The
functions coincide at $x=0$; otherwise the latter has derivative equal
to the square root of
$(1+x)^{2}/\bkt[2]{1 + (1+x)^{2}}$, which is
nowhere greater than 1.)   Thus
\eqref{eq:chaos-complete-sq} gives
\begin{align*}
    \bbkt[1]{\EE\max_{ i \neq j \leq
  n}\vec{W}_{ij}\vec{W}_{ij}'}^{1/2} \leq
  & \fip{\Sigma}{C}^{1/2}
    \bbkt[2]{\sqrt{2} + \bbkt[1]{%
    \frac{2\log
    {n\choose 2}
    }{p_{[C^{1/2}\Sigma C^{1/2}]}}
    }^{1/2}} \\
  &=\sqrt{2}\fip{\Sigma}{C}^{1/2}
\bbkt[2]{1 + \bbkt[1]{ \frac{\log
         {n\choose
    2}}{p_{[C^{1/2}\Sigma C^{1/2}]}}
    }^{1/2}}.
\end{align*}
\end{proof}
\section{Section~\ref{sec:cons-match-estim}}
\begin{proof}[Proof of Prop.~\ref{prop:Qconsist}]
  Throughout the proof, expected value is interpreted to be conditional
on $\mathcal{F}_{n}$. (Because each
$\mathcal{F}_{n}$ is the sigma field of finitely many discrete random
variables, this introduces no measure-theoretic considerations that
were not already present.)  Also assume version (ii) of the moment condition, noting
that it is entailed by version (i) and $0< \PP(Z=1) <1$ (which follows
from $0< \PP(Z=1 \mid \vec{X}\betatrue) <1$): $\EE|Y_{T}|^{p} <
\infty$ means $\EE\bkt[1]{|Y_{T}|^{p}\mid Z} <\infty$ also, which entails
$\EE|\Vns|^{p}< \infty$.  If
$\card[0]{\mathbf{s}}$ is bounded, then there are finitely many
$(\sum_{i \in \mathbf{s}}\indicator{Z_{i}=0}, \sum_{i \in
  \mathbf{s}}\indicator{Z_{i}=1})$ achievable configurations for fine
strata $\mathbf{s}$, each with a characteristic distribution
function $x \mapsto \PP(\Vns \leq x)$ such that $\E{|\Vns|^{p}} =
p\int_{0}^{\infty}y^{p-1}\PP(|\Vns| > y) dy < \infty$
\citep[e.g.,][Lemma 2.2.13]{durrett19}. The distribution for $|V|$ on $\Re^{+}$ given
by setting $\PP(|V|>y)$ to the maximum over these distributions of
$\PP(\Vns > y)$ stochastically dominates the relevant $\Vns$
distributions.  It also satisfies $p\int_{0}^{\infty}y^{p-1}\PP(|V| >
y) dy < \infty$, because the sum of a finite collection of functions
$y \mapsto \PP(|\Vns| > y)$ dominates their maximum; so $V \in L_{p}$. 

\paragraph{Part~\ref{prop:Qconsist:item1} of proposition.} From~\eqref{eq:44},
  \begin{align}
    \sum_{i \in \mathbf{s}}\psi_{\mathbf{s}i}(\eta) &= w_{\mathbf{s}}
\bkt[2]{Y_{i} - \eta \bkt[1]{Z_{i} -
    \bar{Z}_{\mathbf{s}}}}
                                                (Z_{i} -\bar{Z}_{\mathbf{s}})\nonumber\\
   &= w_{\mathbf{s}}\bkt[3]{\sum_{\substack{i\in \mathbf{s}:\\
    Z_{i}=1}}\bkt[2]{Y_{i}(1-\bar{Z}_{\mathbf{s}})
    -\eta(1-\bar{Z}_{\mathbf{s}})^{2}}  +
    \sum_{\substack{i\in \mathbf{s}:\\
    Z_{i}=0}}\bkt[2]{-Y_{i}\bar{Z}_{\mathbf{s}}
    -\eta\bar{Z}_{\mathbf{s}}^{2}}} \nonumber\\
                                              &= w_{\mathbf{s}}\bkt[2]{(1-\bar{Z}_{\mathbf{s}})
                                                \sum_{\substack{i\in \mathbf{s}:\\
    Z_{i}=1}}Y_{i}
    -\eta\card[0]{\mathbf{s}}\bar{Z}_{\mathbf{s}}(1-\bar{Z}_{\mathbf{s}})^{2} 
    -\bar{Z}_{\mathbf{s}}\sum_{\substack{i\in \mathbf{s}:\\
    Z_{i}=0}}Y_{i}
    -\eta\card[0]{\mathbf{s}}(1-\bar{Z}_{\mathbf{s}})\bar{Z}_{\mathbf{s}}^{2}}
\nonumber\\
                                               &= w_{\mathbf{s}} \card[0]{\mathbf{s}}\bar{Z}_{\mathbf{s}}(1-\bar{Z}_{\mathbf{s}})
          \bbkt[2]{\bar{Y}_{\mathbf{s}(1)} -\eta(1-
      \bar{Z}_{\mathbf{s}}) - \bar{Y}_{\mathbf{s}(0)}-
      \eta\bar{Z}_{\mathbf{s}}}\nonumber \\
    &= w_{\mathbf{s}} \card[0]{\mathbf{s}}\bar{Z}_{\mathbf{s}}(1-\bar{Z}_{\mathbf{s}})
          \bkt[1]{\bar{Y}_{\mathbf{s}(1)} - \bar{Y}_{\mathbf{s}(0)}-
      \eta},\label{eq:54}
  \end{align}
so that $\EE\bkt[2]{\sum_{i \in \mathbf{s}}\psi_{\mathbf{s}i}(\eta) } = w_{\mathbf{s}}
\card[0]{\mathbf{s}}\bar{z}_{\mathbf{s}}(1-\bar{z}_{\mathbf{s}})\EE\bkt[1]{
\bar{Y}_{\mathbf{s}(1)} - \bar{Y}_{\mathbf{s}(0)}-
\eta}$.   Recalling that $\Vns = \bar{Y}_{\mathbf{s}(1)}-\bar{Y}_{\mathbf{s}(0)}-
\EE(\bar{Y}_{\mathbf{s}(1)}-\bar{Y}_{\mathbf{s}(0)})$, we have
\begin{equation}
  \label{eq:53}
  \psi_{\mathcal{S}_{n}}(\eta)- \EE\bkt[2]{\psi_{\mathcal{S}_{n}}(\eta)} = \frac{\sum_{\mathbf{s}
\in \mathcal{S}_{n}}
\tilde{w}_{n,\mathbf{s}}\card[0]{\mathbf{s}}
\Vns}{\sum_{\mathbf{s}\in
  \mathcal{S}_{n}}\tilde{w}_{n, \mathbf{s}} \card[0]{\mathbf{s}}}
\end{equation}
where
$\tilde{w}_{n,\mathbf{s}}\defeq w_{\mathbf{s}}\bar{z}_{\mathbf{s}}(1-\bar{z}_{\mathbf{s})}
\in \mathcal{F}_{n}$. (When there is no risk of ambiguity,
``$\tilde{w}_{n,\mathbf{s}}$'' is abbreviated to
``$\tilde{w}_{\mathbf{s}}$.'') Convergence of $\psi_{\mathcal{S}_{n}}(\eta)- \EE\bkt[2]{\psi_{\mathcal{S}_{n}}(\eta)  |
  \mathcal{G}_{n}}$ will be seen to follow from suitable convergence
of \eqref{eq:53}, i.e. $\psi_{\mathcal{S}_{n}}(\eta)- \EE\bkt[2]{\psi_{\mathcal{S}_{n}}(\eta)  |
  \mathcal{F}_{n}}$. 

\subparagraph{In-probability convergence of $\psi_{\mathcal{S}_{n}}(\eta)- \EE\bkt[2]{\psi_{\mathcal{S}_{n}}(\eta)}$.}\label{subparag:prob-conv-psi_m}
We adapt to the independent non-identically
distributed case an argument for the $L_{1}$-weak law of
large numbers by truncation of increments.  Following Durrett
\citeyearpar[\S~2.2.3]{durrett19},  set $\Vnstrunc{} \defeq
\Vns\indicator{w_{n,\mathbf{s}}|\Vns|\leq m_{n}}$ and
$\bar{V}_{n}\defeq V\indicator{|V|\leq m_{n}}$.
(Recall $m_{n}=\sum_{\mathbf{s}\in\mathcal{S}_{n}}
\indicator{w_{\mathbf{s}}\bar{z}_{\mathbf{s}}(1-\bar{z}_{\mathbf{s}})>0}$.)
Whereas $\EE \Vns =0$ by
definition, $\EE \Vnstrunc{}$ may differ from 0.  Our first task is to
show that $ (\sum_{\mathbf{s} \in
  \mathcal{S}_{n}}\tilde{w}_{n,\mathbf{s}})^{-1}\EE \sum_{\mathbf{s} \in
  \mathcal{S}_{n}}\tilde{w}_{n,\mathbf{s}} \Vnstrunc{} \rightarrow 0$.

We have
\begin{equation*}
  (\sum_{\mathbf{s} \in \mathcal{S}_{n}}\tilde{w}_{\mathbf{s}}\card[0]{\mathbf{s}})^{-1}\EE \sum_{\mathbf{s} \in \mathcal{S}_{n}}\tilde{w}_{\mathbf{s}}\card[0]{\mathbf{s}} \Vnstrunc{}
  =    (\sum_{\mathbf{s} \in \mathcal{S}_{n}}\tilde{w}_{\mathbf{s}}\card[0]{\mathbf{s}})^{-1}\bbkt[1]{\EE \sum_{\mathbf{s} \in
    \mathcal{S}_{n}}\tilde{w}_{\mathbf{s}}\card[0]{\mathbf{s}} \Vnstrunc{+} + \EE \sum_{\mathbf{s} \in
    \mathcal{S}_{n}}\tilde{w}_{\mathbf{s}}\card[0]{\mathbf{s}} \Vnstrunc{-}}
  \label{eq:64}
\end{equation*}
where $a_{+}$ and $a_{-}$ denote positive and negative parts of
$a$, $\max(a, 0)$ and $\min(a,0)$.  Now
\begin{align}
  (\sum_{\mathbf{s} \in \mathcal{S}_{n}}\tilde{w}_{\mathbf{s}}\card[0]{\mathbf{s}})^{-1}
  \EE \sum_{\mathbf{s} \in
    \mathcal{S}_{n}}\tilde{w}_{\mathbf{s}}\card[0]{\mathbf{s}} \Vnstrunc{+} &=
  (\sum_{\mathbf{s} \in \mathcal{S}_{n}}\tilde{w}_{\mathbf{s}}\card[0]{\mathbf{s}})^{-1}
  \sum_{\mathbf{s} \in
    \mathcal{S}_{n}}\tilde{w}_{\mathbf{s}}\card[0]{\mathbf{s}} \EE \Vnstrunc{+} \nonumber\\
&=  (\sum_{\mathbf{s} \in \mathcal{S}_{n}}\tilde{w}_{\mathbf{s}}\card[0]{\mathbf{s}})^{-1}
  \sum_{\mathbf{s} \in
    \mathcal{S}_{n}}\tilde{w}_{\mathbf{s}}\card[0]{\mathbf{s}} \int_{0}^{\infty}\PP(\Vnstrunc{} >
                                                                          x)dx \nonumber\\
&=  \int_{0}^{\infty}\frac{\sum_{\mathbf{s} \in
    \mathcal{S}_{n}}\tilde{w}_{\mathbf{s}}\card[0]{\mathbf{s}} \PP(\Vnstrunc{} >
  x)}{\sum_{\mathbf{s} \in \mathcal{S}_{n}}\tilde{w}_{\mathbf{s}}\card[0]{\mathbf{s}}}dx.   \label{eq:57}
\end{align}
By the stochastic ordering assumption, $\PP(\Vns >x)$  is dominated by
$\PP(V>x)$; but $\PP(\Vnstrunc{} >x) \leq \PP(\Vns >x)$, so the
integrand in \eqref{eq:57} is dominated by $\PP(V>x)$ as well.  As $\int_{0}^{\infty} \PP(V>x) dx = \EE V_{+} < \infty$, dominated
convergence gives
$(\sum_{\mathbf{s} \in \mathcal{S}_{n}}\tilde{w}_{\mathbf{s}}\card[0]{\mathbf{s}})^{-1}
  \EE \sum_{\mathbf{s} \in
    \mathcal{S}_{n}}\tilde{w}_{\mathbf{s}}\card[0]{\mathbf{s}} \Vnstrunc{+} -
   (\sum_{\mathbf{s} \in \mathcal{S}_{n}}\tilde{w}_{\mathbf{s}}\card[0]{\mathbf{s}})^{-1}
  \EE \sum_{\mathbf{s} \in
    \mathcal{S}_{n}}\tilde{w}_{\mathbf{s}}\card[0]{\mathbf{s}} \Vns_{+} = o(1)$.

  Similarly
  $(\sum_{\mathbf{s} \in \mathcal{S}_{n}}\tilde{w}_{\mathbf{s}}\card[0]{\mathbf{s}})^{-1}
  \EE \sum_{\mathbf{s} \in
    \mathcal{S}_{n}}\tilde{w}_{\mathbf{s}}\card[0]{\mathbf{s}} \Vnstrunc{-} -
   (\sum_{\mathbf{s} \in \mathcal{S}_{n}}\tilde{w}_{\mathbf{s}}\card[0]{\mathbf{s}})^{-1}
  \EE \sum_{\mathbf{s} \in
    \mathcal{S}_{n}}\tilde{w}_{\mathbf{s}}\card[0]{\mathbf{s}} \Vns_{-} = o(1)$, and 
  \begin{equation*}
    \bbkt[1]{\sum_{\mathbf{s} \in
      \mathcal{S}_{n}}\tilde{w}_{\mathbf{s}}\card[0]{\mathbf{s}}}^{-1}
    \bbkt[1]{\EE \sum_{\mathbf{s} \in
  \mathcal{S}_{n}}\tilde{w}_{\mathbf{s}}\card[0]{\mathbf{s}} \Vnstrunc{} -
  \EE \sum_{\mathbf{s}
    \in \mathcal{S}_{n}}\tilde{w}_{\mathbf{s}}\card[0]{\mathbf{s}} \Vns} = o(1).
  \end{equation*}
  Since $\EE \Vns =0$, this means
  $(\sum_{\mathbf{s} \in
  \mathcal{S}_{n}}\tilde{w}_{\mathbf{s}}\card[0]{\mathbf{s}})^{-1}\EE \sum_{\mathbf{s} \in
  \mathcal{S}_{n}}\tilde{w}_{\mathbf{s}}\card[0]{\mathbf{s}} \Vnstrunc{} \rightarrow 0$.

For in-probability convergence of \eqref{eq:53} it now suffices
to show
\begin{equation}
  \label{eq:56}
  \frac{\sum_{\mathbf{s}
\in \mathcal{S}_{n}}
\tilde{w}_{\mathbf{s}}\card[0]{\mathbf{s}}
\bkt[1]{\Vns - \EE \Vnstrunc{}}}{\sum_{\mathbf{s}\in
  \mathcal{S}_{n}}\tilde{w}_{\mathbf{s}}\card[0]{\mathbf{s}}} = o_{P}(1), 
\end{equation}
the conclusion of the weak law for triangular arrays.  We now
verify the premises of that principle, as it is given in Durrett's
\citeyearpar{durrett19} Theorem~2.2.11. For each $n$ $\{\Vns: \mathbf{s}
\in \mathcal{S}_{n}\}$ are independent because $\{(Y_{Ti}, Y_{Ci},
Z_{i}): i \}$ are unconditionally independent and conditioning on
$\mathbb{F}_{n}$ induces dependence within but not across strata
$\mathbf{s}$.  Premise (i) of the theorem, $\sum_{\mathbf{s} \in
  \mathcal{S}_{n}}\PP(\tilde{w}_{\mathbf{s}}\card[0]{\mathbf{s}}\Vns > \sum_{\mathbf{s}
\in \mathcal{S}_{n}}
\tilde{w}_{\mathbf{s}}\card[0]{\mathbf{s}}) \rightarrow 0$, will follow, by the assumptions of stochastic
dominance and boundedness of $\tilde{w}_{\mathbf{s}}\card[0]{\mathbf{s}}$, from
convergence to 0 of
\begin{equation}
  \label{eq:55}
  \sum_{\substack{\mathbf{s} \in
  \mathcal{S}_{n}: \tilde{w}_{\mathbf{s}}\card[0]{\mathbf{s}}>0}}\PP\bkt[1]{u_{\tilde{w}}V > \sum_{\mathbf{s}
\in \mathcal{S}_{n}} \tilde{w}_{\mathbf{s}}\card[0]{\mathbf{s}}} = m_{n}\PP\bkt[1]{u_{\tilde{w}}V > \sum_{\mathbf{s}
\in \mathcal{S}_{n}} \tilde{w}_{\mathbf{s}}\card[0]{\mathbf{s}}} , 
\end{equation}
where $u_{\tilde{w}}$
is an upper bound for $\{\tilde{w}_{\mathbf{s}}\card[0]{\mathbf{s}}: n; \mathbf{s} \in
\mathcal{S}_{n}\}$.
By hypothesis $m_{n}/\bkt[1] {\sum_{\mathbf{s}\in
  \mathcal{S}_{n}}\tilde{w}_{\mathbf{s}}\card[0]{\mathbf{s}}} = O_{P}(1)$, so
\eqref{eq:55}=$o_{P}(1)$ will follow from $\bkt[1]{\sum_{\mathbf{s}\in
  \mathcal{S}_{n}}\tilde{w}_{\mathbf{s}}\card[0]{\mathbf{s}}} \PP\bkt[1]{u_{\tilde{w}}V > \sum_{\mathbf{s}
\in \mathcal{S}_{n}} \tilde{w}_{\mathbf{s}}\card[0]{\mathbf{s}}} \rightarrow 0$.  As we
also hypothesize that $m_{n}\rightarrow \infty$ as $n\uparrow \infty$, we also have $\sum_{\mathbf{s}\in
  \mathcal{S}_{n}}\tilde{w}_{\mathbf{s}}\card[0]{\mathbf{s}} \rightarrow \infty$ as $n$
increases, and convergence to 0 of \eqref{eq:55} follows if 
$x\PP( u_{\tilde{w}}V>x) \rightarrow 0$ as $x\uparrow \infty$.  This
is true by dominated convergence, since $x\PP( u_{\tilde{w}}V>x) \leq \EE
\bkt[1]{u_{\tilde{w}}V\indicator{u_{\tilde{w}}V>x}}$,
$u_{\tilde{w}}V\indicator{u_{\tilde{w}}V>x} \rightarrow 0$ a.s. as $x\uparrow\infty$,
and $\E{u_{\tilde{w}} V} < \infty$.

Premise (ii) of Durrett's
\citeyearpar{durrett19} Theorem~2.2.11 is that
\begin{equation}
\label{eq:47}
  \bkt[1]{\sum_{\mathbf{s}\in
  \mathcal{S}_{n}}\tilde{w}_{\mathbf{s}}\card[0]{\mathbf{s}}}^{-2}\sum
\card[0]{\mathbf{s}}^2\tilde{w}_{\mathbf{s}}\card[0]{\mathbf{s}}^{2} \E{(\Vnstrunc{})^{2}} \rightarrow 0.
\end{equation}
To verify this, observe first that
\begin{align}
\E{(w_{\mathbf{s}}\Vnstrunc{})^{2}} =&
                                       2\int_{0}^{\infty}y\PP(|w_{\mathbf{s}}\Vnstrunc{}|>y) dy =
                                       2\int_{0}^{m_{n}}y\PP(|w_{\mathbf{s}}\Vnstrunc{}|>y) dy\nonumber\\
  &\leq 2\int_{0}^{m_{n}}y\PP(|w_{\mathbf{s}}\Vns|>y) dy \label{eq:58}\\
&\leq 2\int_{0}^{m_{n}}y\PP(|u_{\tilde{w}}V|>y) dy,  \label{eq:59}
\end{align}
with \eqref{eq:59} following from \eqref{eq:58} by the stochastic
dominance assumption.  In consequence,
\begin{equation*}
  \label{eq:60}
    \bkt[1]{\sum_{\mathbf{s}\in
  \mathcal{S}_{n}}\tilde{w}_{\mathbf{s}}\card[0]{\mathbf{s}}}^{-2}\sum
\card[0]{\mathbf{s}}^2\tilde{w}_{\mathbf{s}}^{2} \E{(\Vnstrunc{})^{2}}  \leq
\bkt[1]{\sum_{\mathbf{s}\in
  \mathcal{S}_{n}}\tilde{w}_{\mathbf{s}}\card[0]{\mathbf{s}}}^{-2}2m_{n}\int_{0}^{m_{n}}y\PP(|u_{\tilde{w}}V|>y) dy.
\end{equation*}
As we assume $m_{n}/\bkt[1] {\sum_{\mathbf{s}\in
  \mathcal{S}_{n}}\tilde{w}_{\mathbf{s}}\card[0]{\mathbf{s}}} = O_{P}(1)$,
for \eqref{eq:47} it suffices to show
$m_{n}^{-1}\int_{0}^{m_{n}}y\PP(|u_{\tilde{w}}V|>y) \rightarrow 0$ as
$n\uparrow \infty$.  By the hypothesis that $m_{n}\rightarrow \infty$,
this flows from $x^{-1}\int_{0}^{x}y\PP(|u_{\tilde{w}}V|>y)
\rightarrow 0$, which is a consequence of $\EE |u_{\tilde{w}}V| <
\infty$, as shown by \citet{durrett19} in the proofs of Theorems
2.1.12 and 2.1.14.  This completes the verification that \eqref{eq:53}
converges in probability to 0. 

\subparagraph{$L_{1}$ convergence of  $\psi_{\mathcal{S}_{n}}(\eta)- \EE\bkt[2]{\psi_{\mathcal{S}_{n}}(\eta)}$.} By hypothesis there
is $p>1$ such that $\| \Vns\|_{L_{p}} \leq \| V\|_{L_{p}}$ for all $n$
and $\mathbf{s}\in \mathcal{S}_{n}$. Accordingly
$\| \bkt[1]{\sum_{\mathbf{s}\in
    \mathcal{S}_{n}}\tilde{w}_{n, \mathbf{s}}}^{-1}
{\sum_{\mathbf{s} \in \mathcal{S}_{n}}
  \tilde{w}_{n,\mathbf{s}} \Vns}\|_{L_{p}} \leq
\|V\|_{L_{p}}$ also. By the dominated convergence principle for
random variables as in Theorem 1.6.8 and Exercise 2.3.5 of
\citet{durrett19}, therefore, \eqref{eq:53} converges to 0 in $L_{1}$ as well as
in probability.

\subparagraph{$L_{1}$- and in-probability convergence of  $\psi_{\mathcal{S}_{n}}(\eta)-
  \EE\bkt[2]{\psi_{\mathcal{S}_{n}}(\eta) \mid \mathcal{G}_{n}}$.}
Observe that
\begin{align}
  \|\EE\bkt[2]{\psi_{\mathcal{S}_{n}}(\eta)| \mathcal{G}_{n}} -
  \EE\bkt[2]{\psi_{\mathcal{S}_{n}}(\eta)}\|_{L_{1}}
  =&
    \|\EE\bkt[3]{\psi_{\mathcal{S}_{n}}(\eta) -
    \EE\bkt[2]{\psi_{\mathcal{S}_{n}}(\eta)}
    \mid \mathcal{G}_{n}} \|_{L_{1}} \nonumber\\
   &\leq   \|{\psi_{\mathcal{S}_{n}}(\eta) -
    \EE\bkt[2]{\psi_{\mathcal{S}_{n}}(\eta)}
    } \|_{L_{1}}, \nonumber
       \label{eq:61}
\end{align}
$\mathcal{F}_{n}$ being the smaller of the sigma fields
$\{\mathcal{F}_{n}, \mathcal{G}_{n}\}$,
$\EE\bkt[2]{\psi_{\mathcal{S}_{n}}(\eta)}$ being the same as
$\EE\bkt[2]{\psi_{\mathcal{S}_{n}}(\eta)\mid \mathcal{F}_{n}}$ and the conditional expectation operator
being a contraction in $L_{1}$ \citep[Thm.~4.1.11]{durrett19}.  In
tandem with
\begin{equation*}
  \| \psi_{\mathcal{S}_{n}}(\eta)- \EE\bkt[2]{\psi_{\mathcal{S}_{n}}(\eta)  |
  \mathcal{G}_{n}}\|_{L_{1}} \leq
  \| \psi_{\mathcal{S}_{n}}(\eta)- \EE\bkt[2]{\psi_{\mathcal{S}_{n}}(\eta)}\|_{L_{1}} +
   \| \EE\bkt[2]{\psi_{\mathcal{S}_{n}}(\eta)| \mathcal{G}_{n}} -
  \EE\bkt[2]{\psi_{\mathcal{S}_{n}}(\eta)}\|_{L_{1}}
\end{equation*}
this means $L_{1}$ convergence of $\psi_{\mathcal{S}_{n}}(\eta)- \EE\bkt[2]{\psi_{\mathcal{S}_{n}}(\eta)  |
  \mathcal{F}_{n}}$ entails that $\psi_{\mathcal{S}_{n}}(\eta)-
\EE\bkt[2]{\psi_{\mathcal{S}_{n}}(\eta)  | \mathcal{G}_{n}}$ also converges
to zero in $L_{1}$.  Finally, $L_{1}$ convergence entails convergence in
probability. 
\paragraph{Part~\ref{prop:Qconsist:item2} of proposition.}
  Provided that $\sum_{\mathbf{s}} \tilde{w}_{\mathbf{s}}\card[0]{\mathbf{s}}$ is
  positive, both $\eta \mapsto \psi_{\mathcal{S}_{n}}(\eta)$ and $\eta \mapsto \EE\bkt[2]{\psi_{\mathcal{S}_{n}}(\eta) \mid
    \mathcal{G}_{n}}$ are everywhere differentiable with slope $-1$, and can be seen to tend to $\pm
  \infty$ as $\eta$ tends to $\mp \infty$.  It follows that they have
  unique roots.
\paragraph{Part~\ref{prop:Qconsist:item3} of proposition.}
 If the solutions $\tau_{n}$ of $\EE\bkt[2]{\psi_{\mathcal{S}_{n}}(\eta) \mid
    \mathcal{G}_{n}}=0$ tend to a limit $\tau_{0} \in (-\infty, \infty)$,
  then the following adaptation of the
  Huber argument for consistency of scalar solutions of monotone
  estimating equations \citep[Lemma~5.10]{huber1964robust, vdvaart:1998} 
  shows that $\hat{\tau}_{n} \rightarrow
  \tau_{0}$ in probability.  Fix $\epsilon>0$.  Then
  \begin{equation*}
    \label{eq:63}
    \PP\bkt[2]{\psi_{\mathcal{S}_{n}}(\tau_{0} - \epsilon) >
    \epsilon/2, \psi_{\mathcal{S}_{n}}(\tau_{0} + \epsilon) < -\epsilon/2
    } \leq \PP(\tau_{0} - \epsilon < \hat{\tau}_{n} <
    \tau_{0}+\epsilon).
  \end{equation*}
  The left side tends to 1 because
    $\psi_{\mathcal{S}_{n}}(\tau_{0}\pm \epsilon) - \EE\bkt[2]{\psi_{\mathcal{S}_{n}}(\tau_{0}\pm \epsilon) \mid
    \mathcal{G}_{n}}  = o_{P}(1)$,  $\PP( |\tau_{n} - \tau_{0}| <
  \epsilon/2) \rightarrow 1$,
  $\EE\bkt[2]{\psi_{\mathcal{S}_{n}}(\eta) \mid
    \mathcal{G}_{n}} > \epsilon/2$ if $\eta < \tau_{n}-\epsilon/2$ and 
   $\EE\bkt[2]{\psi_{\mathcal{S}_{n}}(\eta) \mid
    \mathcal{G}_{n}} < -\epsilon/2$ if $\eta > \tau_{n}+\epsilon/2$.
  Therefore the right  hand side tends to 1 as well.
\end{proof}

\begin{lemma}
   \label{lem:msPSerr}
   Let $\theta_{i}=\operatorname{logit} \bkt[2]{\PP (Z =1 |
  \mathbf{X}\betatrue =  \mathbf{x}_{i}\betatrue)}$.
  If $|\theta_{i} - \theta_{j}| < \delta$ whenever $i
  \stackrel{\mathcal{S}_{n}}{\sim} j$, then for all $\mathbf{s} \in
  \mathcal{S}_{n}$ and $z: \mathbf{s} \rightarrow \{0,1\}$ such that $\sum_{i \in
  \mathbf{s}}\zeta_{i} = \sum_{i \in \mathbf{s}}z_{i} \in \{ 1, \card[0]{\mathbf{s}}-1\}$, 
  \begin{equation}
    \label{eq:46}
    \left| \frac {\pibs(\zeta)} {\card[0]{\mathbf{s}}^{-1}} -1\right|
    \leq \bkt[1]{1- \card[0]{\mathbf{s}}^{-1}}\bkt[1]{e^{2\delta}-1}
  \end{equation}
  and
  \begin{equation}
    \label{eq:80}
    \left| \frac {\card[0]{\mathbf{s}}^{-1}}{\pibs(\zeta)} -1\right|
    \leq \bkt[1]{1- \card[0]{\mathbf{s}}^{-1}}\bkt[1]{e^{4\delta}-1}.
  \end{equation}
\end{lemma}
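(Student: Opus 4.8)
The plan is to reduce to a single case by symmetry and then read off both bounds from the closed form of $\pibs(\zeta)$ in \eqref{eq:45}. First I would note that the substitution $\theta \mapsto -\theta$ carries the expression for $\pibs(\zeta)$ valid when $\sum_{i\in\mathbf{s}}\zeta_i = 1$ into the expression valid when $\sum_{i\in\mathbf{s}}\zeta_i = \card[0]{\mathbf{s}}-1$, while leaving the hypothesis $|\theta_i - \theta_j|<\delta$ unchanged. Hence it suffices to treat $\sum_{i\in\mathbf{s}}\zeta_i = 1$. Writing $m=\card[0]{\mathbf{s}}$ and letting $i^{*}$ be the unique index with $\zeta_{i^{*}}=1$, \eqref{eq:45} gives $\pibs(\zeta)=e^{\theta_{i^{*}}}/\sum_{j\in\mathbf{s}}e^{\theta_j}$, so that $m\,\pibs(\zeta)=m/\sum_{j\in\mathbf{s}}e^{\theta_j-\theta_{i^{*}}}$ and its reciprocal is $m^{-1}\sum_{j\in\mathbf{s}}e^{\theta_j-\theta_{i^{*}}}$.

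Next I would exploit that every $j\in\mathbf{s}$ is matched to $i^{*}$, so $|\theta_j-\theta_{i^{*}}|<\delta$ and hence $e^{\theta_j-\theta_{i^{*}}}\in(e^{-\delta},e^{\delta})$. Isolating the $j=i^{*}$ term, which equals $1$, from the remaining $m-1$ terms yields the two-sided bound $\sum_{j\in\mathbf{s}}e^{\theta_j-\theta_{i^{*}}}\in(1+(m-1)e^{-\delta},\,1+(m-1)e^{\delta})$. This control of the denominator is the crux of both inequalities.

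For \eqref{eq:46} I would bound the deviations of $m\,\pibs(\zeta)$ from $1$ on each side: the upper deviation is at most $\frac{(m-1)(1-e^{-\delta})}{1+(m-1)e^{-\delta}}$ and the lower deviation at most $\frac{(m-1)(e^{\delta}-1)}{1+(m-1)e^{\delta}}$. The step that produces the advertised $(1-m^{-1})$ factor is the pair of elementary denominator estimates $1+(m-1)e^{-\delta}\ge m e^{-\delta}$ and $1+(m-1)e^{\delta}\ge m$; substituting these and writing $1-e^{-\delta}=e^{-\delta}(e^{\delta}-1)$ collapses both deviations to at most $(1-m^{-1})(e^{\delta}-1)$, and \eqref{eq:46} follows since $e^{\delta}-1\le e^{2\delta}-1$. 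For \eqref{eq:80} I would apply the same denominator bounds directly to the reciprocal $m^{-1}\sum_{j}e^{\theta_j-\theta_{i^{*}}}$: its upper and lower deviations from $1$ are $(1-m^{-1})(e^{\delta}-1)$ and $(1-m^{-1})(1-e^{-\delta})$, each of which is at most $(1-m^{-1})(e^{4\delta}-1)$.

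Since the argument is entirely elementary, there is no genuine analytic obstacle; the only points requiring care are choosing the correct one-sided denominator estimates so that the $(1-m^{-1})$ factor emerges cleanly rather than a cruder $m$-free bound, and verifying that the $\theta\mapsto-\theta$ reduction really does cover the $\sum\zeta=\card[0]{\mathbf{s}}-1$ case (together with the degenerate singleton $m=1$, where both sides vanish and the claim is trivial). I would also remark that the constants obtained this way are in fact sharper than those stated, namely $(1-m^{-1})(e^{\delta}-1)$ in both cases, so the looser exponents $2\delta$ and $4\delta$ appearing in the statement are comfortably absorbed.
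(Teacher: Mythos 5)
Your proof is correct. For \eqref{eq:46} it is essentially the paper's own argument: both start from the closed form of $\pibs(\zeta)$ in \eqref{eq:45}, reduce the $\sum_{i\in\mathbf{s}}\zeta_{i}=\card[0]{\mathbf{s}}-1$ case to the $\sum_{i\in\mathbf{s}}\zeta_{i}=1$ case by the substitution $\theta\mapsto-\theta$, and control $\card[0]{\mathbf{s}}\,\pibs(\zeta)-1$ through two-sided bounds on the denominator $\sum_{j\in\mathbf{s}}e^{\theta_{j}-\theta_{i^{*}}}$; the only difference is that the paper lower-bounds that denominator wholesale by $\card[0]{\mathbf{s}}\,e^{-\delta}$ and so lands on $\bkt[1]{1-\card[0]{\mathbf{s}}^{-1}}e^{\delta}(e^{\delta}-1)$ before relaxing to $e^{2\delta}-1$, whereas you keep the exact $j=i^{*}$ term equal to $1$ before relaxing, which yields the slightly sharper $\bkt[1]{1-\card[0]{\mathbf{s}}^{-1}}(e^{\delta}-1)$. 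Where you genuinely diverge is \eqref{eq:80}: the paper deduces it \emph{from} \eqref{eq:46} via the chain $|x^{-1}-1|\leq|x^{-1}|\cdot|x-1|$ together with a bound of $e^{2\delta}$ on the reciprocal ratio, and it is precisely this detour that inflates the constant to $e^{4\delta}-1$; you instead observe that $\card[0]{\mathbf{s}}^{-1}/\pibs(\zeta)=\card[0]{\mathbf{s}}^{-1}\sum_{j\in\mathbf{s}}e^{\theta_{j}-\theta_{i^{*}}}$ is an average of exponentials and bound its deviation from $1$ directly, again getting $\bkt[1]{1-\card[0]{\mathbf{s}}^{-1}}(e^{\delta}-1)$. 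Your route is shorter and quantitatively sharper for the second inequality, and it makes transparent that the exponent $4\delta$ in the statement is slack rather than structural; what the paper's route buys is only the convenience of recycling \eqref{eq:46} rather than revisiting the formula for $\pibs(\zeta)$. Your handling of the degenerate singleton case and of the fact that the hypothesis $|\theta_{i}-\theta_{j}|<\delta$ applies to every $j$ in the same stratum as $i^{*}$ is also correct, so I see no gap.
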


\begin{proof}[Proof of Lemma~\ref{lem:msPSerr}]
For $\mathbf{s}$ an $\atob{1}{m}$ matched set, some
nonnegative integer $m$, by \eqref{eq:45} we have
  \begin{align}
    \frac{\pibs(\zeta)}{\card[0]{\mathbf{s}}^{-1}} -1
    =& \frac{1}{\card[0]{\mathbf{s}}^{-1}} \cdot
                                            \frac{\exp{\theta_{i}}}{
                                            \sum_{j \in [i]}\exp({\theta_{j}})
                                  }  -1 \nonumber \\
    =& \frac{\card[0]{\mathbf{s}} - \sum_{j \in [i]}\exp({\theta_{j} - {\theta}_{i}})
}{
       \sum_{j \in [i]} \exp({\theta_{j} - {\theta}_{i}})}\\
     =& \frac{m - \sum_{j \in [i]\setminus\{i\}}\exp({\theta_{j} - {\theta}_{i}})
}{
\sum_{j \in [i]} \exp({\theta_{j} - {\theta}_{i}})} ; \quad
       \mathrm{so} \nonumber \\
    \frac{m - m\exp({\delta})}{(m+1)\exp({-\delta})}   &\leq 
\frac{\pibs(\zeta)}{\card[0]{\mathbf{s}}^{-1}} -1\leq \frac{m -
                                                    m\exp({-\delta})}{(m+1)\exp({-\delta})},\\
-\frac{m}{m+1}e^{\delta}(e^{\delta} -1) &\leq 
\frac{\pibs(\zeta)}{\card[0]{\mathbf{s}}^{-1}} -1\leq
                                \frac{m}{m+1}(e^{\delta} -1)
                                                        \quad
                                                    \mathrm{and}
                                                        \nonumber \\
\left| \frac{\pibs(\zeta)}{\card[0]{\mathbf{s}}^{-1}} -1\right| & \leq  \frac{\card[0]{\mathbf{s}}-1}{\card[0]{\mathbf{s}}}e^{\delta}(e^{\delta} -1) .\label{eq:49}
\end{align}
Now observe that $e^{2\delta} - e^{\delta} < e^{2\delta} -1$ for
positive $\delta$; \eqref{eq:46} follows. 

If $\mathbf{s}$ is an $\atob{m}{1}$ matched set, $m\geq 0$, the argument
culminating in \eqref{eq:49}
again applies after substitution of $-\theta_{i}$ and $-\theta_{j}$ for
$\theta_{i}$ and $\theta_{j}$. Again \eqref{eq:46} follows.

With \eqref{eq:46} established in all cases, \eqref{eq:80} follows
by \eqref{eq:46}'s consequence that
\begin{equation*}
  \left|\frac {\pibs(\zeta)} {\card[0]{\mathbf{s}}^{-1}}\right| \leq e^{2\delta};
\end{equation*}
the identity $|x^{-1}-1|\leq |x^{-1}|\cdot |x-1|$, valid for $x\neq
0$; and $e^{2\delta}(e^{2\delta} -1) \leq e^{4\delta} -1$.
\end{proof}

\begin{proof}[Proof of Prop.~\ref{prop:Pconsist}.] 
  Claim \textbf{(i)} follows from
  \begin{equation}
    \label{eq:74}
    \EE\bkt[2]{\tilde{\psi}_{\mathbf{s}}(\eta) \mid \mathcal{G}_{n}} =
    \tilde{w}_{\mathbf{s}}\sum_{i \in \mathbf{s}}\bkt[2]{\EE\bkt[1]{Y\mid
        Z=1, \vec{X}\betatrue=\vec{x}_{i}\betatrue} -
      \EE\bkt[1]{Y\mid Z=0, \vec{X}\betatrue=\vec{x}_{i}\betatrue}  - \eta}.
  \end{equation}
  To show this, first observe that \eqref{eq:54} and \eqref{eq:73}
  combine to give
  \begin{equation}
\label{eq:67}
    \tilde{\psi}_{\mathbf{s}}(\eta) = 
\frac{\tilde{w}_{\mathbf{s}}}%
{\pibs\bkt[1]{Z_{\mathbf{s}}}}
\bkt[1]{\bar{Y}_{\mathbf{s}(1)} - \bar{Y}_{\mathbf{s}(0)}-
      \eta}.
  \end{equation}
As $\mathcal{S}_{n}$ is assumed to be a fine stratification, one or
both of $\bar{Y}_{\mathbf{s}(1)} = \operatorname{avg}\bkt[2]{(Y_{i}: i\in \mathbf{s}, Z_{i}=1)}$ and
  $\bar{Y}_{\mathbf{s}(0)} = \operatorname{avg}\bkt[2]{(Y_{i}: i\in
    \mathbf{s}, Z_{i}=0)}$ is in actuality a single observation.
  \eqref{eq:67} is taken to be zero, as is $\psi_{\mathbf{s}}(\eta) =
\tilde{w}_{\mathbf{s}}\sum_{i \in \mathbf{s}}\psi_{\mathbf{s}i}(\eta)
=\tilde{w}_{\mathbf{s}}\bkt[1]{\bar{Y}_{\mathbf{s}(1)} - \bar{Y}_{\mathbf{s}(0)}-
      \eta}$, when $\tilde{w}_{\mathbf{s}}=0$ because $\bar{Z}_{\mathbf{s}}=0$ or 1.
 
  For $\mathbf{s}$ with $\sum_{i\in \mathbf{s}}Z_{i}=1$,
  by \eqref{eq:45} the expectation of \eqref{eq:67}  evaluates to
  $\EE\bkt[2]{\tilde{\psi}_{\mathbf{s}}(\eta) \mid \mathcal{G}_{n}} =$
  \begin{multline}
    \sum_{i \in \mathbf{s}}
    \bkt[3]{\EE(Y \mid Z=1,
    \vec{X}\betatrue=\vec{x}_{i}\betatrue) -
    \underset{j \in \mathbf{s}\setminus\{i\}}{\operatorname{avg}}\bkt[2]{
    \EE(Y \mid Z=0,
    \vec{X}\betatrue=\vec{x}_{j}\betatrue)} - \eta} \\
\cdot \frac{\tilde{w}_{\mathbf{s}}}%
    {\pibs({0}^{(\mathbf{s})}_{+i})}
\pibs({0}^{(\mathbf{s})}_{+i})\\
= \tilde{w}_{\mathbf{s}}\sum_{i \in \mathbf{s}}\bkt[3]{\EE(Y \mid Z=1,
    \vec{X}\betatrue=\vec{x}_{i}\betatrue) -
    \underset{j \in \mathbf{s}\setminus\{i\}}{\operatorname{avg}}\bkt[2]{
    \EE(Y \mid Z=0,
    \vec{X}\betatrue=\vec{x}_{j}\betatrue)} - \eta},
    \label{eq:48}
\end{multline}
where we use ${0}^{(\mathbf{s})}_{+i}$ to denote the mapping on ${\mathbf{s}}$
taking $i$ to $1$ and remaining elements to $0$; \eqref{eq:74}
follows.  For $\mathbf{s}$ with
$\sum_{i\in \mathbf{s}}\indicator{z_{i}=0} =1$, this argument
gives $\EE\bkt[2]{\tilde{\psi}_{\mathbf{s}}(\eta) \mid \mathcal{G}_{n}} =$
\begin{equation}
\tilde{w}_{\mathbf{s}} \sum_{i \in \mathbf{s}}
\bkt[3]{
  \underset{j \in \mathbf{s}\setminus\{i\}}{\operatorname{avg}}\bkt[2]{\EE(Y \mid Z=1,
    \vec{X}\betatrue=\vec{x}_{j}\betatrue)} -
    \EE(Y \mid Z=0,
    \vec{X}\betatrue=\vec{x}_{i}\betatrue) - \eta} \label{eq:75}
\end{equation}
where ${1}^{(\mathbf{s})}_{-i}$ denotes the mapping of ${\mathbf{s}}$
that takes $i$ to $0$ and remaining elements to $1$.  Again
\eqref{eq:74} follows.

\subparagraph{Part (ii).}
Given $\mathbf{s}\in \mathcal{S}_{n}$, write $\mu^{(\mathbf{s})}_{0}=
\EE(\bar{Y}_{\mathbf{s}(0)} \mid \mathcal{F}_{n})$ and $\mu^{(\mathbf{s})}_{1}=
\EE(\bar{Y}_{\mathbf{s}(1)} \mid \mathcal{F}_{n})$.  By symmetry,
$\mu^{(\mathbf{s})}_{z} = \EE\bkt[1]{Y\mid Z_{1}=z, \sum_{i=1}^{\card[0]{\mathbf{s}}}Z_{i}=\sum_{i \in
    \mathbf{s}}z_{i}}$, $z=0$, 1. For $\mathbf{s} \in \mathcal{S}_{n}$ with $\sum_{i \in
  \mathbf{s}}\indicator{z_{i}=1} =1$, one has
$\tilde{w}_{\mathbf{s}}^{-1}\EE\bkt[2]{\psi_{\mathbf{s}}(\eta) \mid \mathcal{G}_{n}}=$
\begin{align*}
      &\sum_{i \in \mathbf{s}}\bkt[3]{\bkt[2]{\EE(Y \mid Z=1,
    \vec{X}\betatrue=\vec{x}_{i}\betatrue) -
    \underset{j \in \mathbf{s}\setminus\{i\}}{\operatorname{avg}}\bkt[2]{\EE(Y \mid Z=0,
    \vec{X}\betatrue=\vec{x}_{j}\betatrue)}} -
        \eta}{\pibs({0}^{(\mathbf{s})}_{+i})} \\
  =&\sum_{i \in \mathbf{s}}\bkt[3]{\bkt[2]{\EE(Y-\mu^{(\mathbf{s})}_{1} \mid Z=1,
    \vec{X}\betatrue=\vec{x}_{i}\betatrue) -
    \underset{j \in \mathbf{s}\setminus\{i\}}{\operatorname{avg}}\bkt[2]
     {\EE(Y - \mu^{(\mathbf{s})}_{0}\mid Z=0,
     \vec{X}\betatrue=\vec{x}_{j}\betatrue)}}}{\pibs({0}^{(\mathbf{s})}_{+i})}\\
  &+ \mu^{(\mathbf{s})}_{1} - \mu^{(\mathbf{s})}_{0} -\eta, 
\end{align*}
whereas if $\sum_{i \in
  \mathbf{s}}\indicator{z_{i}=0} =1$ then 
$\tilde{w}_{\mathbf{s}}^{-1}\EE\bkt[2]{\psi_{\mathbf{s}}(\eta) \mid \mathcal{G}_{n}}=$
\begin{align*}
      &\sum_{i \in \mathbf{s}}\bkt[3]{\bkt[2]{\underset{j \in \mathbf{s}\setminus\{i\}}{\operatorname{avg}}\bkt[2]{\EE(Y \mid Z=1,
    \vec{X}\betatrue=\vec{x}_{j}\betatrue)} -
    \EE(Y \mid Z=0,
    \vec{X}\betatrue=\vec{x}_{i}\betatrue)} -
        \eta}{\pibs({1}^{(\mathbf{s})}_{-i})}\\
  =& \sum_{i \in
     \mathbf{s}}\bkt[3]{{\underset{j \in \mathbf{s}\setminus\{i\}}{\operatorname{avg}}\bkt[2]{\EE(Y - \mu^{(\mathbf{s})}_{1}\mid Z=1,
    \vec{X}\betatrue=\vec{x}_{j}\betatrue)} -
    \EE(Y - \mu^{(\mathbf{s})}_{0}\mid Z=0,
    \vec{X}\betatrue=\vec{x}_{i}\betatrue)}}{\pibs({1}^{(\mathbf{s})}_{-i})}\\
  &+ \mu^{(\mathbf{s})}_{1} - \mu^{(\mathbf{s})}_{0} -\eta. 
\end{align*}
At the same time, \eqref{eq:48} and \eqref{eq:75} give
\begin{multline*}
  \tilde{w}_{\mathbf{s}}^{-1}\EE\bkt[2]{\psi_{\mathbf{s}}(\eta) \mid
    \mathcal{G}_{n}}=\mu^{(\mathbf{s})}_{1} - \mu^{(\mathbf{s})}_{0}
  -\eta +\\
  \sum_{i \in \mathbf{s}}\bkt[3]{\bkt[2]{\EE(Y-\mu^{(\mathbf{s})}_{1} \mid Z=1,
    \vec{X}\betatrue=\vec{x}_{i}\betatrue) -
    \underset{j \in \mathbf{s}\setminus\{i\}}{\operatorname{avg}}\bkt[2]
     {\EE(Y - \mu^{(\mathbf{s})}_{0}\mid Z=0,
     \vec{X}\betatrue=\vec{x}_{j}\betatrue)}}}\card[0]{\mathbf{s}}^{-1}
\end{multline*}
or
\begin{multline*}
  \tilde{w}_{\mathbf{s}}^{-1}\EE\bkt[2]{\psi_{\mathbf{s}}(\eta) \mid
    \mathcal{G}_{n}}=\mu^{(\mathbf{s})}_{1} - \mu^{(\mathbf{s})}_{0}
  -\eta +\\
\sum_{i \in
     \mathbf{s}}{\bkt[3]{\underset{j \in
       \mathbf{s}\setminus\{i\}}{\operatorname{avg}}\bkt[2]%
     {\EE(Y - \mu^{(\mathbf{s})}_{1}\mid Z=1,
    \vec{X}\betatrue=\vec{x}_{j}\betatrue)} -
    \EE(Y - \mu^{(\mathbf{s})}_{0}\mid Z=0,
    \vec{X}\betatrue=\vec{x}_{i}\betatrue)}}\card[0]{\mathbf{s}}^{-1},
\end{multline*}
depending as $\sum_{i \in \mathbf{s}}\indicator{z_{i}=1} =1$ or
$\sum_{i \in \mathbf{s}}\indicator{z_{i}=0} =1$, respectively. 
Differencing these expressions,
\begin{multline}
\tilde{w}_{\mathbf{s}}^{-1}\EE\bkt[2]{\tilde{\psi}_{\mathbf{s}}(\eta)
  - \psi_{\mathbf{s}}(\eta) \mid
    \mathcal{G}_{n}}= \sum_{i \in
    \mathbf{s}}\bbkt[1]{\frac{\card[0]{\mathbf{s}}^{-1}}{\pibs(0^{\mathbf{s}}_{+i})}-1}
  \\
  \cdot \bkt[3]{%
    \bkt[2]{\EE(Y-\mu^{(\mathbf{s})}_{1} \mid Z=1,
    \vec{X}\betatrue=\vec{x}_{i}\betatrue) -
    \underset{j \in \mathbf{s}\setminus\{i\}}{\operatorname{avg}}\bkt[2]
     {\EE(Y - \mu^{(\mathbf{s})}_{0}\mid Z=0,
     \vec{X}\betatrue=\vec{x}_{j}\betatrue)}}}
  \pibs(0^{\mathbf{s}}_{+i})
  \label{eq:68}
\end{multline}
if  $\sum_{i \in \mathbf{s}}\indicator{z_{i}=1} =1$, and if  $\sum_{i
  \in \mathbf{s}}\indicator{z_{i}=0} =1$ then
\begin{multline}
\tilde{w}_{\mathbf{s}}^{-1}\EE\bkt[2]{\tilde{\psi}_{\mathbf{s}}(\eta)
  - \psi_{\mathbf{s}}(\eta) \mid
    \mathcal{G}_{n}}= \sum_{i \in
    \mathbf{s}}\bbkt[1]{\frac{\card[0]{\mathbf{s}}^{-1}}{\pibs(1^{\mathbf{s}}_{-i})}-1}
  \\
  \cdot \bkt[3]{{\underset{j \in \mathbf{s}\setminus\{i\}}{\operatorname{avg}}\bkt[2]{\EE(Y - \mu^{(\mathbf{s})}_{1}\mid Z=1,
    \vec{X}\betatrue=\vec{x}_{j}\betatrue)} -
    \EE(Y - \mu^{(\mathbf{s})}_{0}\mid Z=0,
    \vec{X}\betatrue=\vec{x}_{i}\betatrue)}}{\pibs({1}^{(\mathbf{s})}_{-i})}.
  \label{eq:69}
\end{multline}
Observing that $\card[0]{\mathbf{s}}\EE(|\Vns| \mid \mathcal{G}_{n})\geq |\EE(\card[0]{\mathbf{s}}\Vns \mid \mathcal{G}_{n})| =$
\begin{equation*}
  \left|\sum_{i \in
    \mathbf{s}}
  \bkt[3]{%
    \bkt[2]{\EE(Y-\mu^{(\mathbf{s})}_{1} \mid Z=1,
    \vec{X}\betatrue=\vec{x}_{i}\betatrue) -
    \underset{j \in \mathbf{s}\setminus\{i\}}{\operatorname{avg}}\bkt[2]
     {\EE(Y - \mu^{(\mathbf{s})}_{0}\mid Z=0,
     \vec{X}\betatrue=\vec{x}_{j}\betatrue)}}}
  \pibs(0^{\mathbf{s}}_{+i})\right|
\end{equation*}
or
\begin{equation*}
  \left|\sum_{i \in
    \mathbf{s}}
\bkt[3]{{\underset{j \in \mathbf{s}\setminus\{i\}}{\operatorname{avg}}\bkt[2]{\EE(Y - \mu^{(\mathbf{s})}_{1}\mid Z=1,
    \vec{X}\betatrue=\vec{x}_{j}\betatrue)} -
    \EE(Y - \mu^{(\mathbf{s})}_{0}\mid Z=0,
    \vec{X}\betatrue=\vec{x}_{i}\betatrue)}}{\pibs({1}^{(\mathbf{s})}_{-i})}\right|
\end{equation*}
depending as 
$\sum_{i \in \mathbf{s}}\indicator{z_{i}=1} =1$ or
$\sum_{i \in \mathbf{s}}\indicator{z_{i}=0} =1$, and that
under the same respective conditions 
\begin{equation*}
  \left|\frac{\card[0]{\mathbf{s}}^{-1}}{\pibs(0^{\mathbf{s}}_{+i})}-1\right|
    \leq \exp\bkt[1]{2\sup_{i,j \in \mathbf{s}}|\theta_{i} - \theta_{j}|} -1
    \text{ or }
    \left|\frac{\card[0]{\mathbf{s}}^{-1}}{\pibs(1^{\mathbf{s}}_{-i})}-1\right|
    \leq \exp\bkt[1]{2\sup_{i,j \in \mathbf{s}}|\theta_{i} - \theta_{j}|} -1
\end{equation*}
by Lemma~\ref{lem:msPSerr}, we have:
\begin{align}
  \label{eq:70}
  |\EE\bkt[2]{\tilde{\psi}_{\mathbf{s}}(\eta)
  - \psi_{\mathbf{s}}(\eta) \mid
    \mathcal{G}_{n}}| \leq &
                             \tilde{w}_{\mathbf{s}}\card[0]{\mathbf{s}}\bkt[2]{\exp\bkt[1]{2\sup_{i,j \in \mathbf{s}}|\theta_{i} - \theta_{j}|} -1} \EE(|\Vns| \mid \mathcal{G}_{n}).
\end{align}
This establishes~\eqref{eq:62}.

\subparagraph{Part (iii).} Because the conditional expectation operator is a
contraction in $L_{p}$, 
\begin{align*}
  \EE\bbkt[1]{\frac{\sum_{\mathbf{s} \in
          \mathcal{S}_{n}}\tilde{w}_{\mathbf{s}}\card[0]{\mathbf{s}}\EE\bkt[1]{|\Vns|
  \mid \mathcal{G}_{n}}}{\sum_{\mathbf{s}\in
          \mathcal{S}_{n}}\tilde{w}_{\mathbf{s}}\card[0]{\mathbf{s}}}} \leq&
    \frac{\sum_{\mathbf{s} \in
          \mathcal{S}_{n}}\tilde{w}_{\mathbf{s}}\card[0]{\mathbf{s}}\EE|\Vns|}{\sum_{\mathbf{s}\in
                                                         \mathcal{S}_{n}}\tilde{w}_{\mathbf{s}}\card[0]{\mathbf{s}}}\\
  \leq& \frac{\sum_{\mathbf{s} \in
          \mathcal{S}_{n}}\tilde{w}_{\mathbf{s}}\card[0]{\mathbf{s}}\EE|V|}{\sum_{\mathbf{s}\in
  \mathcal{S}_{n}}\tilde{w}_{\mathbf{s}}\card[0]{\mathbf{s}}} = \EE|V|.
        \label{eq:51}
\end{align*}
Markov's inequality now gives that ${\sum_{\mathbf{s}\in
  \mathcal{S}_{n}}\tilde{w}_{\mathbf{s}}\card[0]{\mathbf{s}}}^{-1}{\sum_{\mathbf{s} \in
  \mathcal{S}_{n}}\tilde{w}_{\mathbf{s}}\card[0]{\mathbf{s}}
\EE\bkt[1]{|\Vns|\mid \mathcal{G}_{n}}} =
      O_{P}(1)$.   Accordingly, $| \{\theta_{i} - \theta_{j}: i \sim
        j\} |_{\infty} = o_{P}(1)$ combines with~\eqref{eq:62} to entail
\begin{equation}
          \label{eq:71}
          \sup_{\eta}|\EE \bkt[2]{\tilde{\psi}_{\mathcal{S}_{n}}(\eta) - \psi_{\mathcal{S}_{n}}(\eta)\mid \mathcal{G}_{n}}| = o_{P}(1). 
\end{equation}
By Prop.~\ref{prop:Qconsist}, $\eta \mapsto \EE
\bkt[1]{\psi_{\mathcal{S}_{n}}(\eta)\mid \mathcal{G}_{n}}$ has the
unique root $\tau_{n}$, and by part (i) of this proposition,  $\eta \mapsto \EE
\bkt[1]{\tilde{\psi}_{\mathcal{S}_{n}}(\eta)\mid \mathcal{G}_{n}}$ has
a unique root given by \eqref{eq:72}; as either of these functions'
slopes are bounded away from zero, \eqref{eq:71} entails that these
roots must converge together. 
\subparagraph{Part (iv).}  Part (iv) of the proposition now follows
from conclusion \ref{prop:Qconsist:item3} of Proposition~\ref{prop:Qconsist}. 
\end{proof}
\end{document}